\newcommand{\mathsym}[1]{{}}
\newcommand{\unicode}[1]{{}}
\theoremstyle{plain}
\newtheorem{theorem}{Theorem}
\newtheorem{proposition}[theorem]{Proposition}
\theoremstyle{definition}
\theoremstyle{remark}
\newtheorem{remark}[theorem]{Remark}
\numberwithin{equation}{section}
\numberwithin{theorem}{section}
\numberwithin{figure}{section}
\begin{document}


\title[Rank $1$ perturbations in random matrix
theory]{Rank $1$ perturbations  in random matrix
theory --- a review of exact results}
\author{Peter J. Forrester}
\address{School of Mathematics and Statistics, 
ARC Centre of Excellence for Mathematical \& Statistical Frontiers,
University of Melbourne, Victoria 3010, Australia}
\email{pjforr@unimelb.edu.au}

\date{\today}


\begin{abstract}
A number of random matrix ensembles permitting exact determination of their eigenvalue
and eigenvector statistics maintain this property under a rank $1$ perturbation. Considered
in this review are the additive rank $1$ perturbation of the Hermitian Gaussian ensembles,
the multiplicative rank $1$ perturbation of the Wishart ensembles, and 
rank $1$ perturbations of Hermitian and unitary matrices giving rise to a two-dimensional support for the eigenvalues.
The focus throughout is on exact formulas, which are typically the result of various integrable structures.
The simplest is that of a determinantal point process, with others relating to partial differential equations
implied by a  formulation in terms of certain random tridiagonal matrices. Attention is also given to
eigenvector overlaps in the setting of a rank $1$ perturbation.
\end{abstract}


\maketitle


\section{Introduction}\label{S1}
In the mid 90's I was  possession of early edition of Wolfram's {\it The Mathematic Book}.
The introductory gallery section contained the command, up to the accuracy of my
memory
\begin{multline}\label{1.1}
{\tt ComplexListPlot[
 Eigenvalues[RandomVariate[UniformDistribution[], {100, 100}]], }\\
 {\tt PlotRange -> All]}
\end{multline}
and the accompanying output reproduced below in Figure \ref{F1}. Thus a random $100 \times 100$
matrix, with each entry identically and independently formed drawn from the uniform distribution on
$[0,1]$, was formed, the eigenvalues were calculated, and these were plotted in the complex plane.

Now a random variable $u[0,1]$, uniformly chosen from the interval $[0,1]$ can be decomposed
${1 \over 2} + u[-{1 \over 2},{1 \over 2}]$, where $u[-{1 \over 2},{1 \over 2}]$ is a uniform random variable with support $[-{1 \over 2},{1 \over 2}]$.
Hence the random matrix being formed in (\ref{1.1}) can be written
\begin{equation}\label{1.2}
X + {1 \over 2} \mathbb I_{N \times N} = X + {N \over 2} \, \hat{\mathbf 1}_N   \hat{\mathbf 1}_N^T.
\end{equation}
Here  $\mathbf  I_{N \times N} $ denotes the $N \times N$ matrix with all entries $1$,
and $\hat{\mathbf 1}$ is the $N \times 1$ unit column vector with all entries equal to
$1/\sqrt{N}$.
Also, $X$ is the random matrix with all entries identically and independently
distributed as uniform random variables on $[-1/2,1/2]$, and thus having mean
zero and standard deviation $1/\sqrt{12}$.

\begin{figure*}
\centering
\includegraphics[width=0.95\textwidth]{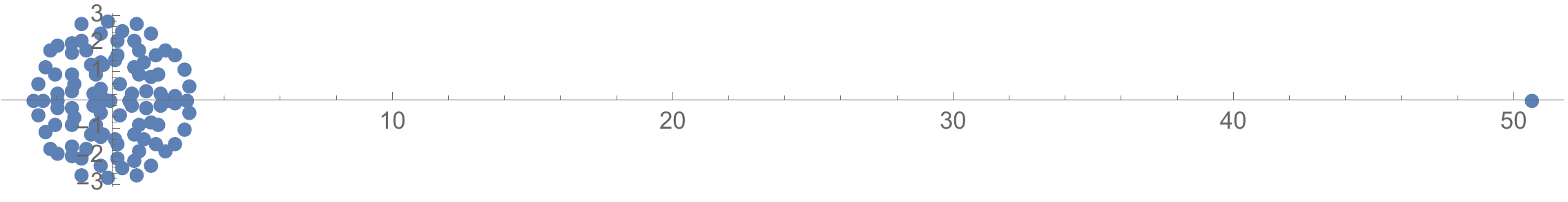}
\caption{Output of the Mathematica code (\ref{1.1}).}
\label{F1}
\end{figure*}

A celebrated result in random matrix theory, known as the circular law (see \cite{BC12} for a review),
tells us that an $N \times N$ random matrix with all entries identically and independently
distributed with mean zero and standard deviation $1/\sqrt{N}$ has, for large $N$, its eigenvalues uniformly supported on the 
unit disk in the complex plane. Moreover, the probability that an eigenvalue is of order unity outside of this
disk fall off exponentially fast in $N$. This suggests we write the second expression in (\ref{1.2})
as
\begin{equation}\label{1.3}
\sqrt{N \over 12}Y, \qquad Y : =  \Big (  \tilde{X} + \sqrt{3N} \, \hat{\mathbf 1}_N   \hat{\mathbf 1}_N^T \Big ),
\end{equation}
where in the definition of $Y$ the $N \times N$ random matrix $\tilde{X}$ obeys the conditions required for the applicability of
the circular law. The spectrum of $Y$ is obtained from the random matrix (\ref{1.2}) by the simple
scaling of dividing by $\sqrt{N \over 12}$.

In the setting of Figure \ref{F1}, $\sqrt{N \over 12} \approx 2.9$. Dividing  the scale of both axes by this value
it is observed  that all but one of the eigenvalues form a disk of radius $1$, which to the naked eye has a uniform
density. However, there is also an outlier eigenvalue, appearing on the real axis with value close to
$\sqrt{3N} |_{N = 100} \approx 17.3$. The latter is precisely the scalar of the perturbation of $\tilde{X}$ in (\ref{1.3})
by the addition of a scalar times the rank 1 matrix formed by the unit vector $\hat{\mathbf 1}_N$.
These are general features. Thus for $\alpha > 1$, $\hat{\mathbf v}$ a unit column vector and 
$ \tilde{X}$ obeying the conditions required of the circular law, we have that for large $N$
the random matrix
\begin{equation}\label{1.4}
  \tilde{X} + \alpha   \hat{\mathbf v}  \hat{\mathbf v}^T 
\end{equation}  
conforms to the circular law, with a single outlier eigenvalue on the real axis at $x = \alpha$,
as established by Tao \cite{Ta12}. For applications of the matrix structure (1.4) --- specifically
the averaged absolute value of the corresponding determinant --- to the stability and resilience of
large complex systems, see \cite{BKK16, FFI21}.

An even earlier numerical experiment relating to outlier eigenvalues in random matrix spectra
was carried out by Porter in the earlier 1960's,
 as cited in \cite{La64}.  The random matrices $\{ H \}$ say in this experiment were real symmetric, with Gaussian
 entries, distribution  N$[\mu, 1]$ on the diagonal and distribution N$[\mu, 1/\sqrt{2}]$ off the diagonal.
 The joint distribution of all the independent entries gives for that the probability density of the matrices
 $H$ are proportional to $e^{- {\rm Tr} \, (H - \mu \mathbb I_{N \times N})^2/2}$.  We can write
  \begin{equation}\label{1.5}
  H =  G + { \mu N \over 2}    \hat{\mathbf 1}_N   \hat{\mathbf 1}_N^T, 
  \end{equation} 
  where  $G$ has a probability density proportional to $e^{- {\rm Tr} \, G^2/2}$. 
  The latter is invariant under the mapping $G \mapsto RGR^T$
  for $R$ real orthogonal, which gives rise to the name of
the random matrices $G$ as the Gaussian orthogonal ensemble (GOE); the random matrices
$H$ are examples of particular shifted GOE matrices. Let us scale $H$ by multiplying by a factor of
$1/\sqrt{2N}$, and let us replace $\mu N/2 \sqrt{2N}$ by $\alpha$. This replaces (\ref{1.5}) by
  \begin{equation}\label{1.6}
  \tilde{G} + \alpha   \hat{\mathbf 1}_N   \hat{\mathbf 1}_N^T,
  \end{equation} 
  where $\tilde{G}$ is $1/\sqrt{2N}$ times a GOE matrix. A classical result in random matrix theory
(see \cite{PS11})  tells us that the eigenvalues of the latter are
to leading order supported on the interval $[-1,1]$, with
normalised density
 \begin{equation}\label{1.6a}
\rho^{\rm W}(x) = {2 \over \sqrt{\pi}} (1 - x^2)^{1/2},
\end{equation}
known as the Wigner semi-circle.
As in the case of the circular law, for the scaled GOE matrices $ \tilde{G}$
the probability that an eigenvalue
is of order unity outside of this
interval falls off exponentially fast in $N$ \cite{BDG01,Fo12}.
The effect observed in Porter's simulations 
--- Figure \ref{F3} gives an example produced using
modern software ---
is that for $N$ large and $\alpha > 1/2$ there is a single outlier eigenvalue
 located at the value 
  \begin{equation}\label{1.6b}
 \alpha + 1/(4 \alpha),
 \end{equation}
 with the Wigner semi-circle otherwise remaining unchanged; see also Section \ref{S2.1}.
To leading order in $\alpha$ for $\alpha$ large this was first explained theoretically in
Lang  \cite{La64}, although as reviewed in Section \ref{S2.1} this exact value is now well
 understood theoretically.
 
 \begin{figure*}
\centering
\includegraphics[width=0.95\textwidth]{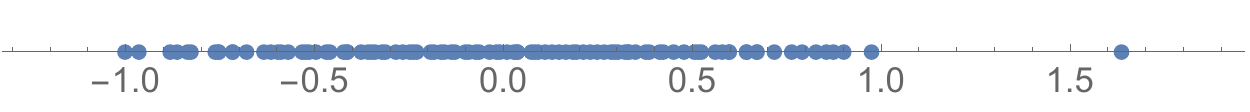}
\caption{Eigenvalues of a sample random matrix (\ref{1.6}) with
$N = 100$ and $\alpha = 3/2$. Note that the theoretical location
of the separated eigenvalue is then $5/3$. }
\label{F3}
\end{figure*}
 
The above examples have involved additive rank $1$ perturbations of a random matrix. 
Closely related is a sequence of additive  rank $1$ perturbations --- to be referred to as
(additive) rank $1$ updates --- which in fact can be used to define a discrete random evolution.
Consider for example the sequence of random matrices $\{W_n\}_{n=1,2,\dots}$
defined by
 \begin{equation}\label{1.7}
 W_n = \sum_{j=1}^n  {\mathbf v}_j  {\mathbf v}^T_j, 
 \end{equation}
 where each $\mathbf v_j$ is an independent $N \times 1$ standard Gaussian column vector,
 and calculate their eigenvalues. This is easy
 to simulate. An example is given in Figure \ref{F4}. Here there is no weighting of the
 rank $1$ matrices and thus no eigenvalue separation
 phenomenon, but visible is another feature of an Hermitian rank $1$ perturbation of 
 an Hermitian matrix, namely that of an interlacing of eigenvalues.

The purpose of this survey is to give an account of exact formulas, typically driven by underlying integrability, associated with 
rank $1$ perturbations 
 in random matrix theory. The topic of \S \ref{S2} is the additive rank $1$ structure (\ref{1.6}). Considered in this section are the
 derivation of the formula (\ref{1.6b}) for the location of the outlier, and generalisations; an explicit formula for the  eigenvalue PDF
 from the viewpoints of an underlying tridiagonal matrix, and from a matrix integral related to Dyson Brownian motion; a characterisation of the
 distribution of the largest eigenvalue in the critical regime in terms of a partial differential equation; a $\beta$ generalisation of the latter and
 its solution for $\beta = 2,4$ in terms of Painlev\'e transcendents; and overlap properties of the eigenvector corresponding to the
 largest eigenvalue.  The topic of \S \ref{S3} is a multiplicative rank $1$ perturbation of a complex Wishart matrix.
 It is shown how such a multiplicative perturbation can be recast as an additive rank $1$ perturbation, allowing the theory
 of subsection \ref{S2.1} to be used to determine the criteria and location of an outlier. Two derivations of the explicit formula for the
 eigenvalue PDF are given, one involving the HCIZ matrix integral, and the other computing first the joint distribution of the eigenvalues
 of the matrix involved in the equivalent additive rank $1$ perturbation, and the perturbed matrix. The eigenvalues of the perturbed complex Wishart
 matrix under consideration form a determinantal point process, and the explicit form of the correlation kernel for the soft edge critical regime
 is revised in subsection \ref{S3.3}. The final subsection relates to the hard edge critical regime for general $\beta > 0$.
 Rank $1$ perturbations of Hermitian and unitary matrices giving rise to a two-dimensional support for the eigenvalues is the
 topic of \S \ref{S4}. First considered in this section is an additive anti-symmetric perturbation for the GUE. For a scaling close to the origin
 of the real axis, this gives rise to a determinantal point process for the eigenvalues, with a simple functional form for the kernel. Next
 a multiplicative sub-unitary rank $1$ perturbation of Haar distributed unitary matrices is considered. In a scaling near the unit circle in the
 complex plane, the eigenvalue point process is identical to that of the previous subsection. However in the bulk of interior of the unit
 circle the eigenvalue point process is no longer determinantal, and in fact relates to the zeros of a certain class of random Laurent
 series, related to the limiting Kac polynomial. In the final subsection overlaps between the left and right eigenvectors of the setting of
 subsection \ref{S4.1} are considered.

\begin{figure*}
\centering
\includegraphics[width=0.65\textwidth]{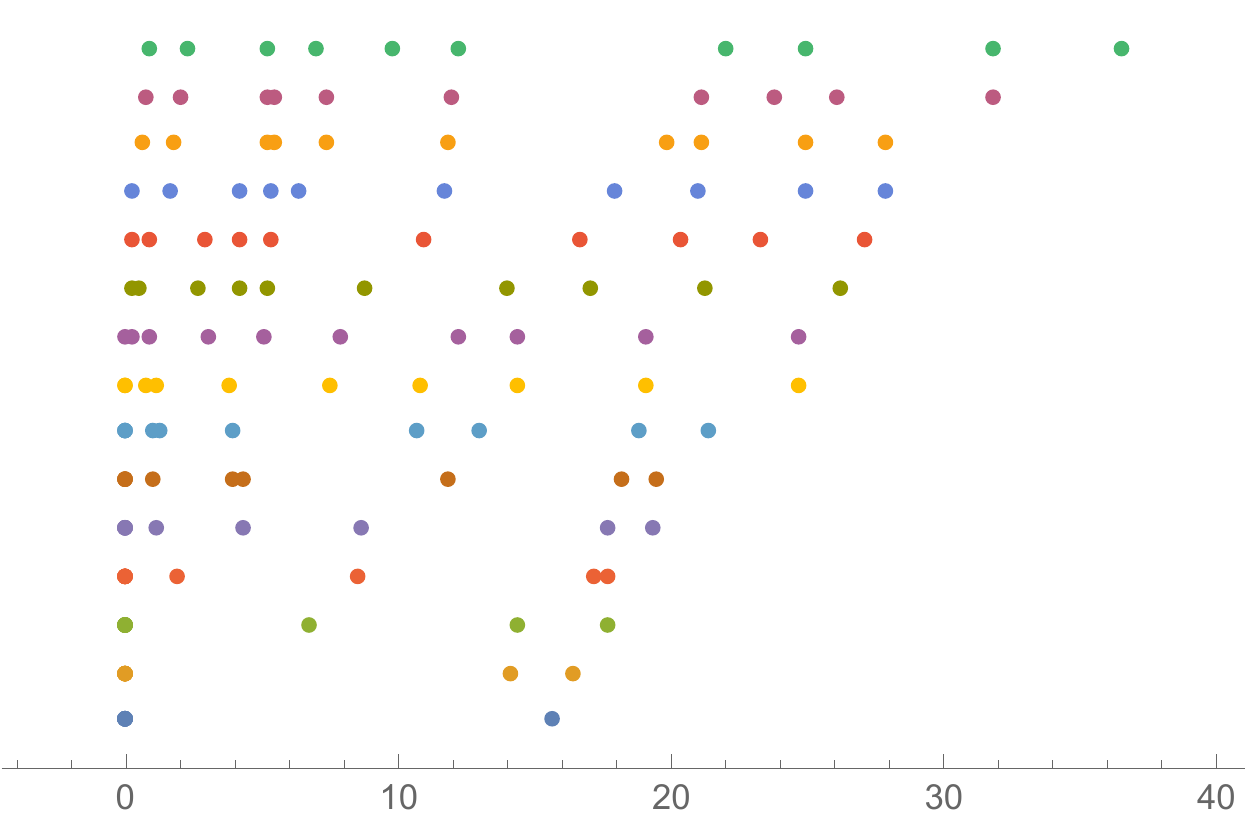}
\caption{Distinct eigenvalues of a sequence of random matrices (\ref{1.7}) with
$n=1,2,\dots,15$ reading from bottom to top and $N = 10$.}
\label{F4}
\end{figure*}

\section{An additive rank 1 perturbation for the GOE}\label{S2}
\subsection{Location of the separated eigenvalue for $\alpha > 1/2$}\label{S2.1}
The formula (\ref{1.6b}) for the separated eigenvalue was given by Jones,
Kosterlitz and Thouless in 1978
 \cite{JKT78}, and independently in more general setting by  Furedi and Komlos
in 1981 \cite{FK81}.  We follow the derivation of these works.
 
 \begin{proposition}\label{P2.1}
 Consider the particular shifted scaled GOE matrix (\ref{1.6}), suppose $\alpha > 0$ and take the limit $N \to \infty$.
 For $\alpha  \le1/2$ all eigenvalues are supported on $[-1,1]$ and have density given by the Wigner semi-circle (\ref{1.6a}).
 For $\alpha > 1/2$ all but one of the eigenvalues are supported on $[-1,1]$ and have density given by the Wigner semi-circle,
 with the separated eigenvalue located at the value (\ref{1.6b}).
 \end{proposition}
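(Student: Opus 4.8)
The plan is to work with the \emph{secular equation} satisfied by every eigenvalue of the perturbed matrix that is not already an eigenvalue of $\tilde G$. Write $M_N = \tilde G + \alpha\, \hat{\mathbf 1}_N \hat{\mathbf 1}_N^T$ and note that $\hat{\mathbf 1}_N$ is deterministic, hence independent of the eigenvector frame of $\tilde G$. The matrix determinant lemma gives $\det(M_N - \lambda I) = \det(\tilde G - \lambda I)\bigl(1 + \alpha\, \hat{\mathbf 1}_N^T(\tilde G - \lambda I)^{-1}\hat{\mathbf 1}_N\bigr)$, so any eigenvalue $\lambda$ of $M_N$ not among those of $\tilde G$ is a root of
\[
  R_N(\lambda) := \hat{\mathbf 1}_N^T(\lambda I - \tilde G)^{-1}\hat{\mathbf 1}_N = 1/\alpha .
\]
Diagonalising $\tilde G = \sum_{j=1}^N \lambda_j \, \mathbf u_j \mathbf u_j^T$ with $\lambda_1 \ge \cdots \ge \lambda_N$, one has $R_N(\lambda) = \sum_{j=1}^N w_j/(\lambda - \lambda_j)$ with $w_j = (\mathbf u_j^T \hat{\mathbf 1}_N)^2 \ge 0$ and $\sum_j w_j = 1$. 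On $(\lambda_1,\infty)$ the function $R_N$ is a strictly decreasing bijection onto $(0,\infty)$, so there is exactly one root $\lambda^{\star} = \lambda_1(M_N)$ above $\lambda_1(\tilde G)$; by Cauchy interlacing for the rank $1$ positive-semidefinite perturbation, all other eigenvalues of $M_N$ lie in $[\lambda_N(\tilde G),\lambda_1(\tilde G)]$, which by the exponential edge bound recalled after (\ref{1.6a}) is contained in $[-1-o(1),1+o(1)]$ with probability $1-e^{-cN}$. Hence no eigenvalue can separate below $-1$, the sole candidate separated eigenvalue is $\lambda^{\star}$, and since $M_N$ and $\tilde G$ differ by a rank $1$ matrix their empirical spectral distributions agree up to $O(1/N)$ in Kolmogorov distance, so the bulk density on $[-1,1]$ is the Wigner semicircle (\ref{1.6a}) for every $\alpha>0$. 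Everything thus reduces to locating $\lambda^{\star}$.

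The crucial input is that, almost surely and uniformly for $\lambda$ in compact subsets of $(1,\infty)$,
\[
  R_N(\lambda) \;\longrightarrow\; G_{\rm W}(\lambda) := \int_{-1}^1 \frac{\rho^{\rm W}(x)}{\lambda - x}\, dx = 2\bigl(\lambda - \sqrt{\lambda^2 - 1}\bigr) \qquad (N\to\infty).
\]
Orthogonal invariance of the GOE makes $(w_1,\dots,w_N)$ the vector of squared components of a uniformly random point on $S^{N-1}$, independent of $(\lambda_1,\dots,\lambda_N)$, with $\E\, w_j = 1/N$. Splitting $R_N(\lambda) = \int (\lambda - x)^{-1} d\mu_N(x) + \sum_j (w_j - 1/N)/(\lambda - \lambda_j)$ with $\mu_N$ the empirical spectral measure, the first term tends to $G_{\rm W}(\lambda)$ by the semicircle law together with the edge confinement of $\tilde G$, while the second has mean zero and variance $O(1/N)$ on compacts of $(1,\infty)$ (a second-moment estimate using the covariances of the $w_j$ and the confinement of $\operatorname{spec}\tilde G$), and a net plus union bound promotes this to locally uniform convergence. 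I expect the main obstacle to be making this concentration uniform \emph{as $\lambda\downarrow 1$}, where $R_N$ is most sensitive to the extreme eigenvalues of $\tilde G$; the remedy is to combine the exponential bound $\lambda_1(\tilde G)\le 1+\varepsilon$ with an isotropic local law for the GOE resolvent.

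Granting this, the limiting secular equation on $(1,\infty)$ is $2(\lambda - \sqrt{\lambda^2-1}) = 1/\alpha$. Since $G_{\rm W}$ decreases strictly from $G_{\rm W}(1^{+})=2$ to $0$, a root in $(1,\infty)$ exists iff $1/\alpha < 2$, i.e.\ $\alpha>1/2$; writing the equation as $\sqrt{\lambda^2-1}=\lambda-1/(2\alpha)$, squaring and retaining the sign-consistent branch yields the unique root $\lambda = \alpha + 1/(4\alpha)$, and the identity $\alpha + 1/(4\alpha) - 1 = (2\alpha-1)^2/(4\alpha)$ shows it exceeds $1$ precisely when $\alpha\neq 1/2$. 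For $\alpha\le 1/2$ the limiting equation has no root in $(1,\infty)$; since for every fixed $\varepsilon>0$ and all large $N$ the equation $R_N(\lambda)=1/\alpha$ cannot hold on $[1+\varepsilon,\infty)$ (there $R_N$ is close to $G_{\rm W}<2\le 1/\alpha$) while $\lambda_1(\tilde G)\to 1\le\lambda^{\star}$, we conclude $\lambda^{\star}\to 1$, so no eigenvalue separates. Together with the bulk statement established above, this proves both cases of the proposition.
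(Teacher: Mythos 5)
Your proposal follows essentially the same route as the paper's proof: reduce to the secular equation via the matrix determinant lemma, invoke orthogonal invariance so the eigenvector weights are those of a uniform point on the sphere, pass to the Stieltjes transform of the Wigner semicircle, and solve $G_{\rm W}(\lambda)=1/\alpha$. You are somewhat more careful than the paper about the concentration and uniformity needed to upgrade the expectation to an almost sure limit (the paper simply ``averages'' over the weights and eigenvalues), but the decomposition and key lemma are the same.
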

 
 \begin{proof}
 The characteristic equation determining the eigenvalues of (\ref{1.6}) is
  \begin{equation}\label{1.8a}
 0 = \det ( \mu \mathbb I_{N \times N} - \tilde{G} - \alpha   \hat{\mathbf 1}_N   \hat{\mathbf 1}_N^T) =
 \det \Big ( \mu   \mathbb I_{N \times N} - \tilde{G} - \alpha   \hat{\mathbf e}_N^{(1)}   \hat{\mathbf e}_N^{(1) \, T} \Big ),
  \end{equation}
 where $ \hat{\mathbf e}_N^{(1)} $ denotes the unit column vector in $\mathbb R^N$, $(1,0,\dots,0)$.
 Here the second equality follows from the fact, noted below (\ref{1.5}), that the distribution of $\tilde{G}$
 is invariant under conjugation by a real orthogonal matrix, allowing the rank $1$ matrix
 $ \hat{\mathbf 1}_N   \hat{\mathbf 1}_N^T$ to be replaced by its diagonal matrix of eigenvalues
 ${\rm diag} \, (1,0,\dots,0)$, then writing the latter as $ \hat{\mathbf e}_N^{(1)}   \hat{\mathbf e}_N^{(1) \, T}$.
 
 Next introduce the spectral decomposition $( \lambda \mathbb I_{N \times N} - \tilde{G})^{-1} =
 U  ( \lambda \mathbb I_{N \times N} - \Lambda)^{-1} U^T$, where $\Lambda = {\rm diag} \, (\mu_1,\dots, \mu_N)$
 is the diagonal matrix of eigenvalues of $\tilde{G}$, and $U = [ \mathbf{u}_k ]_{k=1}^N$ is the corresponding
 real orthogonal matrix of eigenvectors. With this substituted in (\ref{1.8a}), the decisive step in the argument is to apply the
 general determinant identity (see e.g.~\cite[Corollary 2.1]{Ou81})
   \begin{equation}\label{1.8b}
   \det ( \mathbb I_{N \times N} - A_{N \times M} B_{M \times N}) =  \det ( \mathbb I_{M \times M} - B_{M \times N}  A_{N \times M} ) 
   \end{equation}
to the second determinant in (\ref{1.8a}), thereby reducing it to the scalar
\begin{equation}\label{1.8c}
1 - {\alpha } \sum_{j=1}^N { (u_1^{(j)})^2 \over \lambda -  \mu_j },
 \end{equation}
 where $u_1^{(j)}$ denotes the $j$-th component of $\mathbf u_1$. Since the first factor in (\ref{1.8a})
 has zeros at the poles of (\ref{1.8c}), the eigenvalues as determined by (\ref{1.6}) are seen to
 be given by the  zeros of (\ref{1.8c}) as a function of $\mu$.
 
 For $\{ \mu_j \}_{j=1}^N$ distinct and ordered
 \begin{equation}\label{1.8d}
 \mu_N < \mu_{N-1} < \cdots < \mu_2 < \mu_1,
  \end{equation}
  a sketch of the graph of (\ref{1.8c}) under the assumption $\alpha > 0$ shows that its zeros,
  $\{ \lambda_j \}_{j=1}^N$ say, ordered from biggest to smallest are also distinct, and interlace with
  (\ref{1.8d}) according to
  \begin{equation}\label{1.8e} 
   \mu_N < \lambda_N <
   \mu_{N-1} <  \lambda_{N-1} < \cdots < \mu_2  < \lambda_2 < \mu_1 < \lambda_1.
   \end{equation}
   For the scaled GOE matrix $\tilde{G}$ we know the eigenvalues  $\{ \mu_j \}_{j=1}^N$ for
   large $N$ concentrate on $[-1,1]$, having density (\ref{1.6a}), and there are no outliers.
   It remains then to determine the location of the largest zero of (\ref{1.8c}) in this setting.
   
   Thus we average (\ref{1.8c}) over the eigenvalues of the scaled GOE, and the components
   of the first eigenvector. We know that the eigenvectors are independent of the eigenvalues, and
 are distributed uniformly on the unit sphere in $\mathbb R^N$
 (see \cite[Eq.~(1.11) in relation to the former point and Exercises 1.2 q.2 in relation to the latter]{Fo10}), telling us that each
 $(u_1^{(j)})^2 $ can be replaced by $1/N$. The sum in the remaining quantity
   \begin{equation}\label{1.8f} 
  1 - {\alpha \over N} \sum_{j=1}^N { 1 \over \lambda -  \mu_j } 
  \end{equation}
  is a linear statistic, which when averaged over the eigenvalues can be written in terms
  of the eigenvalue density according to $\int_{\mathbb R} (\rho_{(1),N}(\lambda)/(\mu - \lambda)) \, d \lambda$.
  For large $N$ we have that the normalised density $(1/N) \rho_{(1),N}(\lambda)$ tends to the 
  Wigner semi-circle (\ref{1.6a}), implying that condition for a zero of (\ref{1.8c}) reduces to
    \begin{equation}\label{1.8g}  
    0 =  1 -  {2 \alpha \over \sqrt{\pi} }  \int_{-1}^1 { (1 -  \lambda^2)^{1/2} \over \mu -  \lambda} \, d \lambda.
   \end{equation}
   
   We have the integral evaluation (see e.g.~\cite[Exercises 1.6 q.2(ii)]{Fo10})
  \begin{equation}\label{1.8h}      
   {2  \over \sqrt{\pi} }  \int_{-1}^1 { (1 -  \lambda^2)^{1/2} \over \mu -  \lambda} \, d \lambda =
 \left \{  \begin{array}{ll} 2 \mu, & |\mu| \le 1 \\
 2 \mu (1 - (1 - 1/ \mu^2)^{1/2} ), & | \mu| > 1. \end{array} \right.
 \end{equation}
 We see by substituting (\ref{1.8h}) in (\ref{1.8g})
 that the latter admits a solution with $\mu > 1$ only if $\alpha > 1/2$, in which case
 it is given by (\ref{1.6b}).
 \end{proof}
 
 \begin{remark}\label{R2.2} 1.~Consider the additive rank $1$ perturbation (\ref{1.6}),  with $\tilde{G}$ belonging
 to a general ensemble of matrices with limiting normalised density $\rho^{\tilde{G}}(x)$, supported on $[a,b]$;
 references addressing this setting include \cite{CDF09, BR11, BL16, Ro16}.  Define the Stieltjes transform of the latter
   \begin{equation}\label{G1}
   G^{(\tilde{G})}(y) = \int_a^b {\rho^{\tilde{G}}(x) \over y - x} \, dx.
  \end{equation}  
  According to the above proof, eigenvalue separation occurs whenever the equation
   \begin{equation}\label{G2}
   {1 \over \alpha} =  G^{(\tilde{G})}(y)
  \end{equation}  
  admits a solution with $y = y^* > b$. This will always be the case for $\alpha$ large enough
  since for $y$ large, it follows from the definition (\ref{G1}) and the normalisation of
  $\tilde{G}(x)$ that $ G^{(\tilde{G})}(y) \sim 1/y$, telling us that $y^* \sim \alpha$. This
  latter conclusion is in keeping with the early finding of Lang \cite{La64}. \\
  2.~Generalising (\ref{1.6}) to include several additive multiple rank $1$ perturbations 
  of strengths $\alpha_i$, involving
  linearly independent unit vectors, the eigenvalue separation equation (\ref{G2}) applies to each
 such perturbation separately, with the other perturbations ignored; see the cited references from
 point 1 above. \\
 3.~Suppose the real symmetric matrix $\tilde{G}$ in (\ref{1.6}) is replaced by $\tilde{G}_1 + t \tilde{G}_2$,
 where $\tilde{G}_1$ is real symmetric, $t \ge 0$ is a real parameter, and $ \tilde{G}_2$ is real anti-symmetric.
 If the independent entries of each $\tilde{G}_i$ are identically and independently distributed with mean
 zero and unit variance, after scaling by dividing by $\sqrt{N}$ the eigenvalue density satisfies the
 elliptical law. The location of outliers due to low rank perturbations in this setting, which interpolates between Hermitian matrices
 satisfying the Wigner semi-circle law, and non-Hermitian matrices satisfying the circular law
 (recall the third paragraph of the Introduction) have been studied in \cite{OR14}.
 \end{remark}
 
 \subsection{Joint eigenvalue probability density function}\label{S2.2}
 According to (\ref{1.8a}) the eigenvalues of the shifted scaled GOE matrix (\ref{1.6})
 are the same as for $\tilde{G}$, but with $\alpha \sqrt{2N}$ added to the entry in the top
 right corner. Thus, with eig$\,A$ denoting the eigenvalues of $A$, we have
 $$
 {\rm eig} \, ( \tilde{G}   - \alpha   \hat{\mathbf 1}_N   \hat{\mathbf 1}_N^T) = {1 \over \sqrt{2N}}
  {\rm eig} \, \Big ( G - \alpha \sqrt{2N} \, {\rm diag} \, (1,0,\dots,0) \Big ),
  $$
  where $G$ is a GOE matrix (no scaling). It was observed by Trotter \cite{Tr84} that a sequence
  of Householder reflector transforms, $R$ say, can be applied symmetrically to $G$ to reduce it to
  the tridiagonal form
  $$
  T = R G R^T = A_0 + A_1 + A_1^T,
  $$
  where
   \begin{equation}\label{A0}
   A_0 = {\rm diag} \, ( {\rm N}\,[0,1],\dots,{\rm N}\,[0,1]), \quad
   A_1 = {\rm diag}^+ \, ( \tilde{\chi}_{N-1}, \tilde{\chi}_{N-2},\dots, \tilde{\chi}_1),
 \end{equation}
 with $\tilde{\chi}_k$ denoting the square root of the gamma distribution $\Gamma[k/2,1]$.    In $A_1$
 the notation diag${}^+$ refers to a matrix with all entries equal to zero, except those on the diagonal
 immediately above the main diagonal, which take the values as listed. As is evident from the form
 of $A_0$, this
 transformation leaves the diagonal entries unchanged, and so \cite{BV16}
\begin{equation}\label{A2}   
{\rm eig} \, ( \tilde{G}   - \alpha   \hat{\mathbf 1}_N   \hat{\mathbf 1}_N^T) = {1 \over \sqrt{2N}}
  {\rm eig} \, \Big ( T - \alpha \sqrt{2N}  \, {\rm diag} \, (1,0,\dots,0) \Big ).
  \end{equation} 
  
  Following working introduced in the context of determining the eigenvalue PDF for a multiplicative
  rank perturbation of Wishart matrices \cite{Fo13} (see also Section \ref{S3.4} as well as \cite{Mo12,Wa12}),
  the random tridiagonal
  matrix in (\ref{A2}) can be used to determine the joint eigenvalue probability density function (PDF)
  for the shifted scaled GOE matrix (\ref{1.6}).

 \begin{proposition}\label{P2.3} 
Up to normalisation, the eigenvalue PDF of the random matrix (\ref{1.6}),
eigenvalues ordered $\lambda_1 > \cdots > \lambda_N$
is proportional to
\begin{equation}\label{A3}   
\prod_{j=1}^N e^{- N \lambda_j^2} \prod_{1 \le j < k \le N} ( \lambda_j - \lambda_k)
\int_{- \infty- i c}^{\infty - i c} e^{i t} \prod_{j=1}^N \Big ( i t - 2 \alpha N \lambda_j \Big )^{-1/2} \, dt,
\end{equation} 
where $c > 2 \alpha N \lambda_{1}$.
 \end{proposition}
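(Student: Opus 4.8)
The plan is to compute the joint eigenvalue PDF of the tridiagonal matrix appearing in \eqref{A2} directly, by first writing down the joint distribution of its \emph{entries}, then changing variables to the eigenvalues. Recall from \eqref{A0} that the tridiagonal $T$ has independent $\mathrm{N}[0,1]$ diagonal entries $a_1,\dots,a_N$ and independent subdiagonal entries $b_1,\dots,b_{N-1}$ with $b_k \sim \tilde\chi_{N-k}$, and that the shifted matrix $\widetilde T := T - \alpha\sqrt{2N}\,\mathrm{diag}(1,0,\dots,0)$ differs only in that $a_1$ is replaced by $a_1 - \alpha\sqrt{2N}$. The first step is the standard Jacobian computation (going back to Trotter \cite{Tr84}, see also \cite{Fo10}) for the map from $(a_1,\dots,a_N,b_1,\dots,b_{N-1})$ to the eigenvalues $\lambda_1>\cdots>\lambda_N$ together with the first components $q_1,\dots,q_N$ of the normalised eigenvectors: the joint density factorises, the $b_k^2$ get integrated out against their $\Gamma[(N-k)/2,1]$ weights, and one is left with a Vandermonde-type factor $\prod_{j<k}(\lambda_j-\lambda_k)$ times the spectral measure weights $q_j$ on the simplex $\sum q_j^2 = 1$, $q_j>0$.

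The second step is to carry the Gaussian weight on the diagonal entries through this change of variables. Since $\mathrm{Tr}\,T = \sum_j a_j = \sum_j \lambda_j$ but $\sum_j a_j^2 \ne \sum_j \lambda_j^2$ in general, the naive Gaussian factor $e^{-\sum a_j^2/2}$ does \emph{not} directly become $e^{-\sum \lambda_j^2/2}$; rather, after the Householder reduction one should think of the law of $G$ (or $\widetilde G$) as $e^{-\mathrm{Tr}\,G^2/2}$ in the original matrix coordinates, which \emph{is} orthogonally invariant and hence equals $e^{-\sum\lambda_j^2/2}$ — but the shift by $\alpha\sqrt{2N}\,\mathrm{diag}(1,0,\dots,0)$ breaks orthogonal invariance, and this is exactly where the rank-$1$ perturbation shows up. Concretely, $\mathrm{Tr}\,\widetilde G^2 = \mathrm{Tr}\,G^2 - 2\alpha\sqrt{2N}\,G_{11} + 2\alpha^2 N$, and $G_{11} = \sum_j q_j^2 \lambda_j$ when expressed through the spectral data; after rescaling $\lambda \mapsto \sqrt{2N}\,\lambda$ this produces, on top of $\prod_j e^{-N\lambda_j^2}$ and the Vandermonde, a factor $\prod_j e^{\,2\alpha N q_j^2 \lambda_j}$ under the integral over the $q_j$.

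The third step is the integral over the first eigenvector components. One is left with
$$
\int_{\sum q_j^2 = 1,\ q_j>0} \ \prod_{j=1}^N q_j \, e^{\,2\alpha N q_j^2 \lambda_j}\ d\sigma(\mathbf q),
$$
and the trick — as in \cite{Fo13} — is to lift the simplex constraint via a Laplace-type integral. Writing $p_j = q_j^2$ and inserting $1 = \int \delta\!\big(1 - \sum_j p_j\big)\,\cdots$, represented as $\frac{1}{2\pi}\int e^{\,it(1-\sum_j p_j)}\,dt$ along a contour shifted to $\mathrm{Im}\,t = -c$ to ensure convergence, the $p_j$-integrals decouple into $N$ one-dimensional Gaussian-type integrals each evaluating to a constant times $(it - 2\alpha N\lambda_j)^{-1/2}$. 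The contour shift requires $c > 2\alpha N\lambda_1$ so that each $(it - 2\alpha N\lambda_j)^{-1/2}$ has its branch point avoided and the $p_j$-integrals converge; this reproduces exactly the factor $\int_{-\infty-ic}^{\infty-ic} e^{it}\prod_j (it - 2\alpha N\lambda_j)^{-1/2}\,dt$ in \eqref{A3}. All $\lambda$-independent constants (the Jacobian normalisation, the $\Gamma$-normalisations of the $b_k$, the prefactors from each one-dimensional integral) are absorbed into the overall proportionality.

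The main obstacle is the bookkeeping in the second and third steps: correctly identifying that the $G_{11}$ term is the sole source of the $\alpha$-dependence and that it couples to precisely $q_j^2$ (not $q_j$), and then managing the interchange of the $\mathbf q$-integral with the contour integral over $t$ so that Fubini is legitimate — this is what forces the condition $c > 2\alpha N \lambda_1$ and must be stated carefully rather than waved through. The rest is the routine Trotter–Dumitriu–Edelman Jacobian calculation, which I would quote from \cite{Fo10}.
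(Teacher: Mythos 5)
Your overall route is the same as the paper's: pass to the tridiagonal representation (\ref{A2}), change variables to the eigenvalues $\lambda_j$ and the first components $q_j$ of the normalised eigenvectors, observe that the $\alpha$-dependence of the law enters only through the top-left entry $T_{11} = \sum_j q_j^2 \lambda_j$, and integrate out the $q_j$ by a Fourier representation of the spherical delta function.

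There is, however, a concrete error in how you assemble the measure on the $q_j$. You introduce a factor $\prod_{j=1}^N q_j$ into the integrand over the simplex. For $\beta = 1$ (the GOE case under consideration) the Dumitriu--Edelman change of variables produces the factor $\prod_j q_j^{\beta - 1} = 1$; there is no $q_j$ weight, and the paper's (\ref{M1}) has none. The $\prod_j q_j$ you wrote is the $\beta = 2$ weight. This is not a harmless slip: carried through the substitution $p_j = q_j^2$, your measure $\prod_j q_j\,\delta\big(1 - \textstyle\sum q_j^2\big)\,dq_1\cdots dq_N$ becomes $2^{-N}\delta\big(1 - \textstyle\sum p_j\big)\,dp_1\cdots dp_N$, and each one-dimensional $p_j$-integral then gives $\int_0^\infty e^{(2\alpha N\lambda_j - it)p}\,dp \propto (it - 2\alpha N\lambda_j)^{-1}$, not the $(it - 2\alpha N\lambda_j)^{-1/2}$ you claim. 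To obtain the $-1/2$ power (which is the correct one, and what the proposition asserts), you need the measure $p_j^{-1/2}\,dp_j$ on each factor, which is what arises when there is \emph{no} $q_j$ weight: $\int_0^\infty e^{-cq^2}\,dq \propto c^{-1/2}$. In other words, the extra $\prod_j q_j$ and the claimed $(it-\cdots)^{-1/2}$ are mutually inconsistent; dropping the $q_j$ factor repairs the argument and recovers (\ref{A3}). This point matters beyond the present proposition, since it is precisely the $\prod_j q_j^{\beta-1}$ factor that produces the exponent $-\beta/2$ in the $\beta$-generalised formula (\ref{A3+}).
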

 
 \begin{proof}
 The matrix in brackets on the RHS of (\ref{A2}) say, $\tilde{T}$ say, differs from $T$ in the distribution
 of the top left entry. Thus for $\tilde{T}$ this entry has distribution N$[\alpha \sqrt{2N},1]$ instead of
 N$[0,1]$ for $T$. 
 
 Now denote the entries of $\tilde{T}$ by writing $\tilde{T} = \tilde{A}_0 +
 \tilde{A}_1 +  \tilde{A}_1^T$, where
 $$
  \tilde{A}_0 = {\rm diag} \, (a_N,a_{N-1},\dots, a_1), \quad
    \tilde{A}_1 = {\rm diag}^+ \, (b_{N-1},b_{N-2},\dots, b_1).
    $$
It follows that up to normalisation the probability measure $P(\tilde{T})  (d \tilde{T})$    associated
with $\tilde{T}$ can be factored in terms of the probability measure $P({T})  (d {T})$    associated
with $T$ to be given by
$$
P({T})  (d {T}) e^{\alpha \sqrt{2N} a_N}.
$$
Next denote the eigenvalues of the tridiagonal matrix $\tilde{T}$ by $\{ \lambda_j \}_{j=1}^N$, and
denote the first component of the corresponding normalised eigenvector, which is required to be
positive, by $\{ q_j \}_{j=1}^N$. We know from the working in \cite{DE02}, or from \cite[\S 1.9.2]{Fo10},
that in terms of these variables and up to normalisation the probability measure $P({T})  (d {T})$ is proportional to
\begin{equation}\label{M1}
\prod_{j=1}^N e^{- \lambda_j^2/2} \prod_{1 \le j < k \le N}
( \lambda_j - \lambda_k ) \, \delta \Big ( \sum_{j=1}^N q_j^2 - 1 \Big )
(d \vec{\lambda})(d \vec{q}),
\end{equation} 
while
\begin{equation}\label{M2}
e^{\alpha \sqrt{2N} a_N} = e^{\alpha \sqrt{2N} \sum_{j=1}^N q_j^2 \lambda_j}.
\end{equation} 
Writing the Dirac delta function in (\ref{M1}) as a Fourier transform and supposing
temporarily that each $\lambda_j < 0$ allows the integral over $d \vec{\lambda}$ in
the product of (\ref{M2}) and (\ref{M1}) to be computed explicitly, showing
\begin{equation}\label{M2a}
\int_{(\mathbb R^+)^N}  \delta \Big ( \sum_{j=1}^N q_j^2 - 1 \Big )  e^{\alpha \sqrt{2N} \sum_{j=1}^N q_j^2 \lambda_j} \,  (d \vec{q}) 
\propto  \int_{-\infty}^\infty e^{it} \prod_{j=1}^N \Big ( i t - \alpha \sqrt{2N} \lambda_j \Big )^{-1/2} \, dt.
\end{equation} 
Deforming the contour to again be parallel to the  real axis but to pass through the imaginary axis
at a point $-ic$ with $c >\alpha \sqrt{2N} \lambda_1$ allows the assumption  $\lambda_j < 0$ to be removed.
Multiplying this modified integral 
by the eigenvalue dependent factors in (\ref{M1}) and scaling $\lambda_j \mapsto \sqrt{2N} \lambda_j$ to
account for the scaling in (\ref{A2}) gives (\ref{A3}).
\end{proof}
 
 \subsection{A matrix integral over the orthogonal group}\label{S2.3}
 There is a matrix integral over the orthogonal group associated with (\ref{A3}).
 This is based on the fact that for a real symmetric matrix $\tilde{H}$ the product of differentials
 of independent elements $(d \tilde{H})$ decomposes in terms of the eigenvalues $\{\lambda_j \}$ and
 matrix of eigenvectors $R$ of $\tilde{H}$ according to \cite[Eq.~(1.11)]{Fo10}
 $$
 (d \tilde{H}) = \prod_{1 \le j < k \le N} (\lambda_k - \lambda_j) d \vec{\lambda}
 (R^T d R),
 $$
 where $(R^T d R)$ corresponds to the Haar measure on the orthogonal group.
 In the case that the distribution on the matrices $\tilde{H}$ is proportional to
 $e^{- N {\rm Tr} (\tilde{H} - \alpha \hat{\mathbf 1}_N   \hat{\mathbf 1}_N^T)^2}$, it follows
 that the eigenvalue PDF is proportional to
 \begin{multline}\label{M3}
 \prod_{1 \le j < k \le N}
( \lambda_j - \lambda_k ) \int_{R \in O(N)} e^{-N {\rm Tr} (R \Lambda R^T  - \alpha 
 \hat{\mathbf 1}_N   \hat{\mathbf 1}_N^T )^2}  \, (R^T d R)   \\ \propto 
 \prod_{j=1}^N e^{-N \lambda_j^2 } \prod_{1 \le j < k \le N}
( \lambda_j - \lambda_k )  
\int_{R \in O(N)} e^{2 \alpha N {\rm Tr} (R \Lambda R^T  \hat{\mathbf 1}_N   \hat{\mathbf 1}_N^T)}
  \, (R^T d R),
  \end{multline}
  where $\Lambda$ is the diagonal matrix of eigenvalues.
  Comparing (\ref{M2}) with (\ref{A3}) implies an evaluation of the matrix integral in
  the second line of the former \cite{KTJ76,Fo13,BL16,PB20,MP21}.
  
   \begin{proposition}\label{P2.4} 
  We have
   \begin{multline}\label{M4}
  \int_{R \in O(N)} e^{2 \alpha N {\rm Tr} (R \Lambda R^T  \hat{\mathbf 1}_N   \hat{\mathbf 1}_N^T)}
  \, (R^T d R)  \propto 
  \int_{(\mathbb R^+)^N}  \delta \Big ( \sum_{j=1}^N q_j^2 - 1 \Big )  e^{2 \alpha N \sum_{j=1}^N q_j^2 \lambda_j} \,  (d \vec{q}) \\ \propto
\int_{-\infty-ic}^{\infty -ic} e^{it} \prod_{j=1}^N \Big ( i t - 2 \alpha N \lambda_j \Big )^{-1/2} \, dt ,
 \end{multline}
 where $c > 2 \alpha N \lambda_1$.
  \end{proposition}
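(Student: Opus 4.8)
The plan is to establish the two proportionalities in \eqref{M4} separately, and to observe that the second of them has essentially already been proved. The rightmost relation — between the integral over $(\mathbb R^+)^N$ and the contour integral — is precisely \eqref{M2a} from the proof of Proposition \ref{P2.3}, with the constant $\alpha\sqrt{2N}$ there replaced by $2\alpha N$ (a harmless relabelling of the constant multiplying each $\lambda_j$), together with the contour deformation carried out there, moving the line of integration to pass through $-ic$ with $c>2\alpha N\lambda_1$ so as to lift the temporary restriction $\lambda_j<0$. So only the leftmost relation — the genuine evaluation of the orthogonal-group integral — needs fresh work.

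For that relation I would argue directly. Write the trace as a quadratic form,
${\rm Tr}(R\Lambda R^T\hat{\mathbf 1}_N\hat{\mathbf 1}_N^T)=\hat{\mathbf 1}_N^T R\Lambda R^T\hat{\mathbf 1}_N=\mathbf w^T\Lambda\mathbf w=\sum_{j=1}^N\lambda_j w_j^2$, where $\mathbf w:=R^T\hat{\mathbf 1}_N$ and $\Lambda$ is the diagonal matrix of eigenvalues. When $R$ is Haar distributed on $O(N)$, the vector $\mathbf w$ is uniform on the unit sphere $S^{N-1}\subset\mathbb R^N$: its law is $O(N)$-invariant, since for $O\in O(N)$ one has $O(R^T\hat{\mathbf 1}_N)=(RO^T)^T\hat{\mathbf 1}_N$ with $RO^T$ again Haar distributed, and the uniform measure is the only $O(N)$-invariant probability measure on $S^{N-1}$. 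The integrand $e^{2\alpha N\sum_j\lambda_j w_j^2}$ depends on $\mathbf w$ only through $(w_1^2,\dots,w_N^2)$; writing the uniform measure on $S^{N-1}$ as a $\Lambda$-independent constant times $\delta(\sum_j w_j^2-1)\,(d\vec w)$ on $\mathbb R^N$, and exploiting the $2^N$-fold sign symmetry $w_j\mapsto\pm w_j$ that leaves the integrand invariant, collapses the integral onto the positive orthant and produces the middle expression of \eqref{M4} up to a factor independent of the $\lambda_j$. Combined with the preceding paragraph this yields the full chain.

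An alternative route to the leftmost relation is the one hinted at by the comparison of \eqref{M3} with \eqref{A3}: both the matrix integral in the second line of \eqref{M3} and the formula \eqref{A3} are, up to $\Lambda$-independent normalisation, the eigenvalue PDF of the random matrix \eqref{1.6} — the former through the change of variables $(d\tilde H)=\prod_{j<k}(\lambda_k-\lambda_j)\,d\vec\lambda\,(R^T dR)$ applied to the law $\propto e^{-N{\rm Tr}(\tilde H-\alpha\hat{\mathbf 1}_N\hat{\mathbf 1}_N^T)^2}$ of \eqref{1.6}, the latter by Proposition \ref{P2.3} — so cancelling the common factor $\prod_j e^{-N\lambda_j^2}\prod_{j<k}(\lambda_j-\lambda_k)$ (legitimate on the dense open set of distinct eigenvalues, then extended to all $\Lambda$ by continuity of both sides) leaves the orthogonal-group integral proportional to the contour integral. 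I do not anticipate a serious obstacle: the only points that demand care are the measure-theoretic bookkeeping in the pushforward of Haar measure — in particular verifying that the induced measure on the simplex is the stated $\delta$-constrained one up to a constant independent of $\Lambda$ — and, in the alternative route, checking that the Vandermonde and Gaussian prefactors in \eqref{M3} and \eqref{A3} genuinely coincide so that they may be divided out.
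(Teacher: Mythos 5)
Your proposal is correct and, taken together with the alternative route you sketch, covers exactly the same ground as the paper's proof. The paper establishes the leftmost $\propto$ rightmost relation by the very comparison of (\ref{M3}) with (\ref{A3}) that you describe as your "alternative route", and then establishes leftmost $\propto$ middle by precisely your Haar-measure argument: cyclicity of the trace gives ${\rm Tr}(R\Lambda R^T\hat{\mathbf 1}_N\hat{\mathbf 1}_N^T)=\vec q^T\Lambda\vec q=\sum_j q_j^2\lambda_j$ with $\vec q=R^T\hat{\mathbf 1}_N$ uniform on the sphere. Your main route differs only in which two of the three pairwise relations you prove directly (middle $\propto$ rightmost by observing it is (\ref{M2a}) relabelled, then leftmost $\propto$ middle via Haar), a purely organisational difference; the mathematical content — the sphere-uniformity of the pushed-forward vector, the $\delta$-constrained rewriting of sphere measure with the $2^N$ sign-symmetry reduction to the positive orthant, and the Fourier representation of the delta — is identical to the paper's.
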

  
 \begin{proof}
 We have already explained how the first matrix integral has an evaluation, up to proportionality, given by the final
 of these integrals. In relation to the second integral, we have used the fact that
 $$
 {\rm Tr} (R \Lambda R^T  \hat{\mathbf 1}_N   \hat{\mathbf 1}_N^T) =
 \vec{q}^T   \Lambda \vec{q} = \sum_{j=1}^N  q_j^2 \lambda_j.
 $$
 Here the first of these equalities follows from 
 the cyclic property of the trace and the fact that for $R$ a real orthogonal matrix chosen
 with Haar measure, $ \hat{\mathbf 1}_N^T R =  \vec{q}^T$, where $ \vec{q}$ is uniformly
 distributed on the unit sphere; see e.g.~\cite{DF17}.
 \end{proof}  
 
 \begin{remark} The second integral in (\ref{M4}) has the interpretation as the partition function of
 a spherical spin glass model \cite{KTJ76}. In the circumstance that $\{\lambda_j\}$ result
 the eigenvalues of a GOE matrix, it is shown in \cite{KTJ76,BL16} that the saddle point equation
 of the integrand in the evaluation given by the third integral in (\ref{M4}) relates to (\ref{1.8g}).
 This in turn implies that for large $N$ the model undergoes a phase transition as a function of
 $\alpha$.
 \end{remark}
 
 Let us rescale and generalise the (un-normalised) matrix distribution 
  $e^{- N {\rm Tr} (\tilde{H} - \alpha \hat{\mathbf 1}_N   \hat{\mathbf 1}_N^T)^2}$, by considering
\begin{equation}\label{R1}  
{1 \over t^{N(N+1)/2}} e^{- {\rm Tr} (\tilde{H} - \tilde{H}_0)^2/ 2t}.
\end{equation} 
Here the $t$-dependence of the normalisation has been made explicit for later purposes.
Changing variables to the eigenvalues and eigenvectors of $\tilde{H}$, analogous to 
(\ref{M3}), one has that the eigenvalue PDF of $\tilde{H}$,
$p_t(\lambda_1,\dots,\lambda_N)$, is proportional to 
\begin{equation}\label{R2}  
\prod_{1 \le j < k \le N} | \lambda_k - \lambda_j|
{1 \over t^{N(N+1)/2}} \int_{R \in O(N)} 
 e^{- {\rm Tr} (R \Lambda R^T - \Lambda_0)^2/ 2t} \, (R^T dR).
\end{equation} 
Implicit in the work of Dyson \cite{Dy62b}, on what is now known as Dyson Brownian
motion, is that $p_t$ satisfies a particular Fokker-Planck equation.

 \begin{proposition} 
 We have that $p_t$ satisfies the  Fokker-Planck equation 
 \begin{equation}\label{R3}  
 2 {\partial p_t \over \partial t} = \mathcal L p_t
\end{equation} 
where
 \begin{equation}\label{R4}  
 \mathcal L = \sum_{j=1}^N {\partial \over \partial \lambda_j} \bigg ( {\partial W \over \partial \lambda_j} +
 {\partial \over \partial \lambda_j} \bigg ), \qquad
 W = - \sum_{1 \le j < k \le N} \log | \lambda_k - \lambda_j|
 \end{equation} 
 and subject to the initial condition
  \begin{equation}\label{R4a}
   p_t(\lambda_1,\dots,\lambda_N) \Big |_{t=0} = \prod_{j=1}^N \delta (\lambda_j - \lambda_j^{(0)}).
   \end{equation} 
 \end{proposition}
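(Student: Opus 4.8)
The plan is to derive the Fokker--Planck equation \eqref{R3} directly from the matrix-valued diffusion underlying \eqref{R1}. The key observation is that the Gaussian kernel in \eqref{R1}, viewed as a function of $t$ with $\tilde H_0$ fixed, is the heat kernel for Brownian motion on the space of real symmetric matrices: if $\tilde H(t)$ denotes the symmetric matrix whose independent entries perform independent (rescaled) Brownian motions started from the entries of $\tilde H_0$, then the law of $\tilde H(t)$ has density proportional to \eqref{R1}, and this density satisfies the heat equation $2\,\partial_t P = \Delta_{\tilde H} P$, where $\Delta_{\tilde H}$ is the flat Laplacian in the independent matrix coordinates (the factor $2$ and the normalisation in \eqref{R1} being chosen to match the variances). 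So the first step is to record that $P_t(\tilde H)$ solves this matrix heat equation with $\delta$-function initial data at $\tilde H_0$, and that \eqref{R4a} is the corresponding statement after passing to eigenvalues.

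First I would pass from $\tilde H$ to the coordinates $(\lambda_1,\dots,\lambda_N,R)$ using the Jacobian $(d\tilde H)=\prod_{j<k}|\lambda_k-\lambda_j|\,d\vec\lambda\,(R^TdR)$ already quoted before \eqref{M3}, so that $P_t(\tilde H)$ corresponds to $p_t(\vec\lambda)$ via $p_t(\vec\lambda)\propto \prod_{j<k}|\lambda_k-\lambda_j|\int_{O(N)}P_t\big(R\Lambda R^T\big)\,(R^TdR)$, consistent with \eqref{R2}. The main computation is then to express the flat Laplacian $\Delta_{\tilde H}$ in these coordinates. This is the standard radial-part computation: because $\Delta_{\tilde H}$ is $O(N)$-invariant, acting on $O(N)$-invariant functions it reduces to a second-order operator in the $\lambda_j$'s alone, and the classical formula gives $\Delta_{\tilde H} = \tfrac{1}{|\Delta(\lambda)|}\sum_j \partial_{\lambda_j}\big(|\Delta(\lambda)|\,\partial_{\lambda_j}\big)$ where $\Delta(\lambda)=\prod_{j<k}(\lambda_j-\lambda_k)$ is the Vandermonde. (The cross terms in $\partial/\partial R$ drop out after integrating over $O(N)$ because of the invariance of Haar measure; alternatively one invokes Dyson's original derivation in \cite{Dy62b}.) Conjugating this operator by the factor $|\Delta(\lambda)|$ --- i.e. setting $p_t = |\Delta(\lambda)|\,\tilde p_t$ and rewriting --- converts the radial Laplacian acting on $\tilde p_t$ into exactly the operator $\mathcal L$ of \eqref{R4} acting on $p_t$, since $\partial_{\lambda_j}\log|\Delta(\lambda)| = \sum_{k\neq j}1/(\lambda_j-\lambda_k) = -\partial W/\partial\lambda_j$. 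Combining with the matrix heat equation yields \eqref{R3}, and the initial condition \eqref{R4a} follows from the $\delta$-function initial data for $P_t$ together with the change of variables.

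The step I expect to be the main obstacle is making the radial-part reduction fully rigorous: one must justify interchanging the Laplacian with the integral over $O(N)$, check that the off-diagonal (angular) contributions genuinely vanish upon integration against Haar measure, and handle the coincidence set $\{\lambda_j=\lambda_k\}$ where the Vandermonde vanishes and the change of variables degenerates. In practice the cleanest route --- and the one I would present --- is to cite Dyson's computation \cite{Dy62b}: verify that the matrix process $\tilde H(t)$ is precisely the one Dyson considered (up to the explicit normalisations in \eqref{R1}), that its entries satisfy the stochastic differential equations Dyson wrote down, and hence that the induced eigenvalue process is the diffusion whose generator is $\tfrac12\mathcal L$, so that its transition density $p_t$ satisfies the forward (Fokker--Planck) equation \eqref{R3}. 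The verification that the normalisation $t^{-N(N+1)/2}$ in \eqref{R1} and the factor $2$ in \eqref{R3} are mutually consistent is a short variance bookkeeping: the space of real symmetric $N\times N$ matrices has $N(N+1)/2$ independent real parameters, matching the power of $t$, and the diagonal-versus-off-diagonal variance asymmetry is exactly what produces Dyson's drift term $\partial W/\partial\lambda_j$ with the stated coefficient.
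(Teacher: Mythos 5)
Your proposal follows essentially the same route as the paper's sketch proof: recognise that the matrix density solves a flat heat equation in the independent entries (with the diagonal/off-diagonal variance weights accounting for the Laplace--Beltrami structure on real symmetric matrices), then pass to eigenvalue--eigenvector coordinates and identify the radial part of the Laplacian acting on $O(N)$-invariant functions, which after conjugation by the Vandermonde factor yields exactly the operator $\mathcal L$ in \eqref{R4}. The paper compresses the radial reduction into a citation of \cite[\S 11.1]{Fo10} and Dyson \cite{Dy62b}, whereas you spell out the radial-part formula $|\Delta|^{-1}\sum_j\partial_{\lambda_j}(|\Delta|\,\partial_{\lambda_j})$ and the conjugation step $\partial_{\lambda_j}\log|\Delta|=-\partial W/\partial\lambda_j$, but the underlying argument is identical.
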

 
 \begin{proof} (Sketch) With respect to the independent entries of $\tilde{H}$, the matrix distribution
 (\ref{R1}) factorises to be proportional to
  \begin{equation}\label{R4b} 
 \prod_{j=1}^N { e^{- (\tilde{H}_{jj}^2 - ( H_{jj}^{(0)})^2)/2t} \over \sqrt{t}}
 \prod_{j < k} { e^{- (\tilde{H}_{jk}^2 - ( H_{jk}^{(0)})^2)/t} \over \sqrt{t}},
  \end{equation} 
 where the $t$-dependence of the normalisation has been made explicit. Denoting this by $P_t$, the
 fact that the functional form corresponding to each independent element satisfies a one-dimensional
 heat equation implies that $P_t$ satisfies the multidimensional heat equation
  \begin{equation}\label{R5}  
 2 {\partial P_t \over \partial \tau} = \sum_{\mu} D_\mu {\partial^2 P_t \over \partial H_\mu^2}.
 \end{equation}  
 Here the label $\mu$ ranges over the label of the independent diagonal and upper triangular
 entries, while $D_\mu = 1$ for the diagonal entries and $D_\mu = 1/2$ for the off diagonal
 entries.

An essential idea from here, see e.g.~\cite[\S 11.1]{Fo10} for details, is to observe that as a function
of $\tilde{H} = R \Lambda R^T$, the PDF (\ref{R2}) must also satisfy
(\ref{R5}), provided we change variables in the latter. The change of variables can be carried out
using theory relating to the Laplacian associated with metric forms --- specifically the
RHS of (\ref{R5}) can be identified with the Laplacian operator on the space of
real symmetric matrices ---  and (\ref{R3}) results.
\end{proof}

\begin{remark}
As emphasised in \cite{Dy62b}), the Fokker-Planck equation specified by (\ref{R3}) and
(\ref{R4}) corresponds to a repelling $N$-particle system with a potential energy $W$,
executing overdamped Brownian motion in a fictitious viscous fluid with friction coefficient
$\gamma = 2$ at inverse temperature $\beta = 1$.
 \end{remark}

In the case of $\tilde{H}_0 =  \alpha \hat{\mathbf 1}_N   \hat{\mathbf 1}_N^T$, the initial condition
(\ref{R4a}) has
  \begin{equation}\label{R4c}
  \lambda_1^{(0)} = \alpha, \qquad \lambda_j^{(0)} = 0 \, \, (j=2,\dots,N).
 \end{equation}  
 With this initial condition the matrix distribution relating to (\ref{M3}), for which the density obeys the
 description of Proposition \ref{P2.1} when $N$ is large, results when $t = 1/(2N)$.
 The trajectories of the eigenvalues are easy to   simulate by choosing a value of $\delta t = 1/(2NM)$ for
 some $M \gg N$, forming a sequence of random real symmetric matrices $\{ \tilde{H}^{(j)} \}_{j=0,\dots,M}$
 by sampling the entries according to (\ref{R4b}) with $H^{(0)} =  \tilde{H}^{(j-1)}$ and $ t \mapsto \delta t$,
 calculating their eigenvalues and forming paths. An example is given in Figure \ref{F3a}.

 \begin{figure*}
\centering
\includegraphics[width=0.65\textwidth]{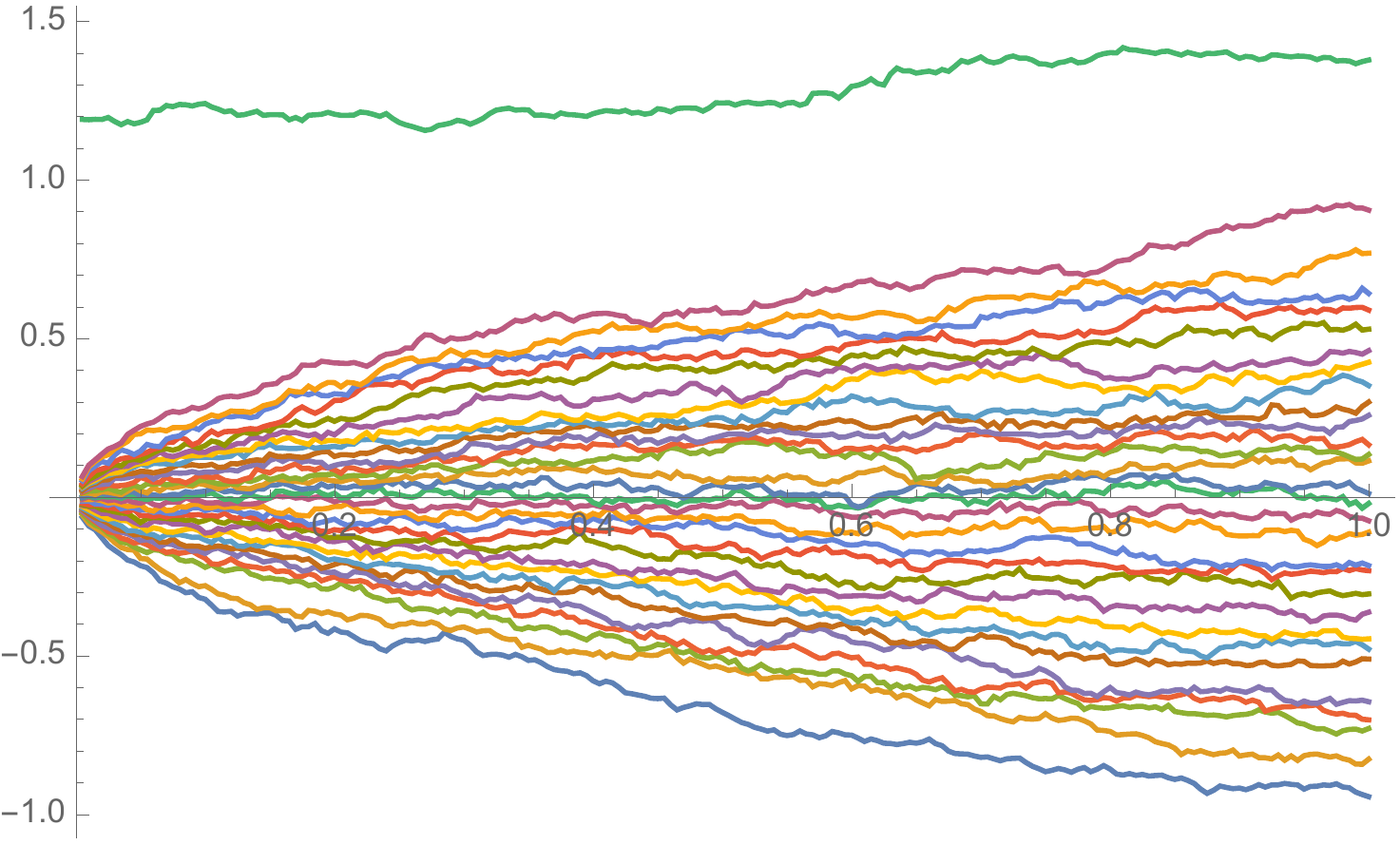}
\caption{Sample trajectories corresponding to the Dyson Brownian motion
underlying (\ref{M3}) with $N = 30$ and $\alpha = 1.2$.}
\label{F3a}
\end{figure*}

\subsection{The soft edge critical regime}\label{S2.4}
In \S \ref{S2.2} a random tridiagonal matrix was specified which has the same eigenvalue probability
density function as the rank $1$ perturbed scaled GOE matrix (\ref{1.6}). One application,
due to Bloemendal and Vir\'ag \cite{BV12}, of this reduction
has been to characterise the distribution of the largest eigenvalue $\lambda_1$ in a so-called critical regime. The latter
is specified by the large $N$ scalings
 \begin{equation}\label{C1}
 2 N^{2/3} (\lambda_1 - 1) \to x_1, \qquad N^{1/3} (1 - 2 \alpha) \to w,
  \end{equation} 
which have the feature that the distribution function then tends to a non-trivial limit dependent on $w$.
To gain insight, a parameter $\beta > 0$ referred to as the Dyson index can be introduced so that under consideration is the random tridiagonal
matrix
\begin{equation}\label{Tb}
T_\beta := {1 \over \sqrt{2 \beta N}} ( C_0 + C_1 + C_1^T),
 \end{equation} 
where
$$
C_0 = {\rm diag} \, \Big ( {\rm N}[\alpha \sqrt{2\beta N}, 1], {\rm N}[0,1],\dots, {\rm N}[0,1] \Big ), \qquad
C_1 = {\rm diag}^+ \, \Big ( \tilde{\chi}_{\beta (N - 1)},  \tilde{\chi}_{\beta (N - 2)}, \dots,  \tilde{\chi}_{\beta} \Big ).
$$
We can  verify the non-random limits
\begin{equation}\label{Tb+}
\lim_{\beta \to \infty} {1 \over \sqrt{\beta}} C_0 = {\rm diag} \, \Big ( \alpha \sqrt{2N}, 0,\dots, 0 \Big ), \quad
\lim_{\beta \to \infty} {1 \over \sqrt{\beta}}  C_1 = {\rm diag}^+ \, \bigg ( \sqrt{{N - 1 \over 2}},  \sqrt{N - 2 \over 2}, \dots,  \sqrt{1 \over 2} \bigg ).
 \end{equation} 
As a minor modification of working in \cite{DE02}, these facts can be used to show that
the large $\beta$ form of $T_\beta$ relates to a discretisation of the differential operator
 \begin{equation}\label{Tb-}
- {d^2 \over d x^2} + x,
 \end{equation} 
subject to the boundary conditions
 \begin{equation}\label{Tbb}
{\psi'(0) \over \psi(0)} = w , \qquad \psi(x) \to 0 \:\: {\rm as} \:\: x \to \infty.
 \end{equation}

 \begin{proposition} 
 Write $T_\beta = [t_{ij} ]_{i,j=1,\dots,N}$ and set
 $$
 D = {\rm diag} \, \Big ( (N/2)^{-(i-1)/2} \prod_{k=1}^{i-1} t_{k,k+1} \Big ).
 $$
 We have
 \begin{equation}\label{Tc}
 \lim_{\beta \to \infty} D T_\beta D^{-1} - \mathbb I_N = - {1 \over {2} N^{2/3}} \Big ( N^{2/3} \Delta_N +
 N^{-1/3} \tilde{J}_N \Big ),
 \end{equation} 
 where
  \begin{align*}
 \Delta_N & =  {\rm diag} \, (2,2,\dots,2) + {\rm diag}^+ \, (-1,-1,\dots,-1) +  {\rm diag}^- \, (-1,-1,\dots,-1), \\
  \tilde{J}_N & =  {\rm diag}^- \, (1,2,\dots,N-1).
  \end{align*}
 Furthermore the eigenvectors of (\ref{Tc}) are given by $D \mathbf x$ where $\mathbf x = [x_j]_{j=1,\dots,N}$ with
 $$
 x_{N-n} = \bigg ( {1 \over \sqrt{\pi} n! 2^n} \bigg )^{1/2} e^{-\lambda^2/2} H_n(\lambda),
 $$
  where $H_n(\lambda)$ denotes the Hermite polynomial of degree $n$, and $\lambda$ is required
 to be such that
 \begin{equation}\label{Tc+} 
 2 \alpha = {x_0\over x_1 }.
 \end{equation}  
 \end{proposition}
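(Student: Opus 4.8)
The plan is to verify the three assertions in turn: the conjugation formula \eqref{Tc}, the identification of the eigenvectors as $D\mathbf x$ with $\mathbf x$ built from Hermite functions, and the boundary-type constraint \eqref{Tc+}. The starting point is the explicit tridiagonal form $T_\beta$ in \eqref{Tb} together with the large-$\beta$ limits \eqref{Tb+}; since the $\chi$-variables concentrate, $\frac{1}{\sqrt\beta}T_\beta$ converges entrywise to the deterministic tridiagonal matrix with super- and sub-diagonal entries $\frac{1}{\sqrt{2N}}\sqrt{(N-k)/... }$ read off from \eqref{Tb+} and with first diagonal entry $\alpha\sqrt{2N}/\sqrt{2N}=\alpha\cdot\sqrt 2\cdots$ suitably scaled. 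First I would write out $DT_\beta D^{-1}$: conjugation by the diagonal matrix $D$ whose entries are the running products of the super-diagonal entries symmetrizes nothing new but rescales the sub-diagonal entries, turning the (already symmetric) Jacobi matrix into a non-symmetric one whose super-diagonal is $\equiv 1$ and whose sub-diagonal absorbs the product of the old super- and sub-diagonal entries. In the $\beta\to\infty$ limit this product is $\prod$ of Gamma-means, i.e. essentially $(N-k)/(2N)$, so the sub-diagonal entry in position $k$ becomes $1 - \tfrac{k}{2N} + O(N^{-2})$; expanding, one extracts precisely $\mathbb I_N$ minus $\tfrac{1}{2N^{2/3}}(N^{2/3}\Delta_N + N^{-1/3}\tilde J_N)$ after writing $N^{2/3}$ as the overall prefactor. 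This is the computational heart, and the bookkeeping of the three scales $N^{2/3}, 1, N^{-1/3}$ is where care is needed.

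Next I would address the eigenvectors. Because $D T_\beta D^{-1}$ is a similarity transform of $T_\beta$, its eigenvectors are $D$ times those of $T_\beta$. For the deterministic limiting Jacobi matrix $\frac1{\sqrt\beta}C_0 + \frac1{\sqrt\beta}C_1 + \frac1{\sqrt\beta}C_1^T$ with the \eqref{Tb+} entries (ignoring for the moment the modified $(1,1)$ entry), the eigenvalue equation is exactly the three-term recurrence for Hermite functions $x_{N-n} = c_n e^{-\lambda^2/2}H_n(\lambda)$: the recurrence $H_{n+1}(\lambda)=2\lambda H_n(\lambda)-2nH_{n-1}(\lambda)$, after the normalization $c_n = (\sqrt\pi\, n!\, 2^n)^{-1/2}$, becomes precisely $\sqrt{\tfrac{n+1}{2}}\,x_{N-n-1} + \sqrt{\tfrac{n}{2}}\,x_{N-n+1} = \lambda\, x_{N-n}$, matching the off-diagonal entries in \eqref{Tb+} (up to the common factor $1/\sqrt{2N}$ and the eigenvalue rescaling). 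I would state this as a direct check using the standard Hermite recurrence and the contiguous relations; the normalization constant is forced by requiring $\sum_n x_{N-n}^2=1$ via the orthogonality $\int e^{-\lambda^2}H_n^2 = \sqrt\pi\,2^n n!$. The only genuinely free index is $n=0$ at the "bottom" end, which is where the spectral parameter $\lambda$ is pinned.

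Finally, \eqref{Tc+}: the modified first diagonal entry of $C_0$ is $\mathrm N[\alpha\sqrt{2\beta N},1]$, whose $\beta\to\infty$ scaled value is $\alpha\sqrt{2N}$ rather than $0$. Plugging the Hermite-function ansatz into the eigenvalue equation at the very first row of $T_\beta$ (the row involving this modified entry and the single off-diagonal entry $\tilde\chi_{\beta(N-1)}/\sqrt{2\beta N}\to \sqrt{(N-1)/(2\cdot 2N)}\cdots$, i.e. the $n=1,n=0$ link) produces a single scalar equation; after the Hermite recurrence is used to eliminate $\lambda x_0$ in favour of $x_1$ (note $H_0\equiv 1$, $H_1(\lambda)=2\lambda$, so $\lambda x_0 = \tfrac12\sqrt{\cdots}\,x_1$ already encodes the undeformed link), what survives is exactly $2\alpha = x_0/x_1$. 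In other words the shift $\alpha\sqrt{2N}$ in the top entry is, in the limit, equivalent to imposing this ratio condition, which is the discrete avatar of the Robin boundary condition $\psi'(0)/\psi(0)=w$ in \eqref{Tbb} once the $N^{1/3}(1-2\alpha)\to w$ scaling is invoked. I expect the main obstacle to be not any single step but the consistent tracking of normalizations — the $1/\sqrt{2\beta N}$ prefactor in \eqref{Tb}, the eigenvalue rescaling implicit in passing from $T_\beta$ to \eqref{Tc}, and the Gamma-mean asymptotics $\tilde\chi_{\beta m}/\sqrt\beta\to\sqrt{m/2}$ — so that the three terms in \eqref{Tc} come out with exactly the stated coefficients and so that \eqref{Tc+} emerges with the factor $2$ and not some other constant.
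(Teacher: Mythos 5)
Your overall plan matches the paper's: the substance is (i) the observation that the three-term recurrence for the normalized Hermite functions $\phi_n(\lambda)=(\sqrt\pi\,n!\,2^n)^{-1/2}e^{-\lambda^2/2}H_n(\lambda)$ coincides with the eigen-equation of the limiting Jacobi matrix in all rows but the first, and (ii) the identification of the first row as the source of the constraint \eqref{Tc+}, read as a discrete Robin condition. In fact the paper's own ``proof'' consists of essentially only these two observations; the conjugation identity \eqref{Tc} is not verified there at all, so your part~1 goes beyond what the paper does.

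However, you have the two ends of the matrix reversed, and this is a genuine error rather than a slip of notation. With the assignment $x_{N-n}=\phi_n(\lambda)$, the \emph{first} row of $T_\beta$ (the one whose diagonal entry is shifted by $\alpha\sqrt{2N}$) corresponds to $n=N-1$, not to $n=1$; and $x_0,x_1$ in \eqref{Tc+} are the ghost value $\phi_N(\lambda)$ and the top component $\phi_{N-1}(\lambda)$, not $\phi_0,\phi_1$. Your parenthetical ``note $H_0\equiv 1$, $H_1(\lambda)=2\lambda$'' therefore invokes the wrong polynomials. The actual mechanism is: the first-row eigen-equation reads $\alpha\sqrt{2N}\,\phi_{N-1}+\sqrt{(N-1)/2}\,\phi_{N-2}=\lambda\phi_{N-1}$; substituting the recurrence at $n=N-1$, $\lambda\phi_{N-1}=\sqrt{(N-1)/2}\,\phi_{N-2}+\sqrt{N/2}\,\phi_N$, cancels the $\phi_{N-2}$ terms and leaves $\alpha\sqrt{2N}\,\phi_{N-1}=\sqrt{N/2}\,\phi_N$, i.e.\ $2\alpha=\phi_N/\phi_{N-1}=x_0/x_1$. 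By contrast, the $n=0$ end (the \emph{last} row of the matrix) is automatically satisfied because $\phi_{-1}\equiv 0$ and imposes no constraint; it is not where $\lambda$ is pinned. You also misattribute the normalization constant: it is simply the $L^2(\mathbb R)$ normalization of the Hermite functions and is not forced by $\sum_n x_{N-n}^2=1$ (a discrete sum, which is a different quantity).

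On part~1, your sketch contains numerical errors. With $D$ as stated, the superdiagonal of $DT_\beta D^{-1}$ equals $(N/2)^{1/2}$ identically, not $1$; and the subdiagonal in the $\beta\to\infty$ limit is $(N/2)^{-1/2}t_{k,k+1}^2\to \sqrt{2/N}\cdot (N-k)/(4N)$, not $1-k/(2N)$. The right-hand side of \eqref{Tc} has superdiagonal $\tfrac12$ and subdiagonal $\tfrac12-\tfrac{k}{2N}$, so neither of your estimates matches. (In fact the $(N/2)^{-(i-1)/2}$ prefactor in the stated $D$ appears inconsistent with \eqref{Tc}; a prefactor $2^{i-1}$ would make both the super- and sub-diagonals come out as stated. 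You are right that careful bookkeeping is the crux here, but the specific constants you quote are off.)
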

 
\begin{proof}
The normalised Hermite polynomials as a function of $\lambda$ and multiplied by $e^{-\lambda^2/2}$,
$\phi_n(\lambda)$ say, satisfy the recurrence
 \begin{equation}\label{X1}
 \lambda \phi_n(\lambda) = \sqrt{n/2} \phi_{n-1}(\lambda) + \sqrt{(n+1)/2} \phi_{n+1}(\lambda).
  \end{equation}  
  The structure of this recurrence is identical to that for the components of the eigen-equation for
  (\ref{Tb}), eigenvalue $\lambda$ and eigenvectors $[\phi_{N-1-j}(x) ]_{j=0}^{N-1}$, except in the first
  row. The latter requires
  $$
  \alpha \sqrt{2N} \phi_{N-1}(\lambda) + \sqrt{N-1 \over 2} \phi_{N-2}(\lambda) = \lambda \phi_{N-1}(x),
  $$
  which upon use of (\ref{X1}) with $n = N-1$ implies the restriction on $\lambda$ (\ref{Tc+}).
  \end{proof}
  
  Taking $N$ large in (\ref{Tc}) we recognise the right hand side as a discretisation of (\ref{Tb-}) with
  lattice spacing $N^{-1/3}$. With this latter value, upon rewriting (\ref{Tc+}) to read
  $$
  N^{1/3} (1 - 2 \alpha) = {1 \over x_1} \Big ( {x_1 - x_0 \over N^{-1/3}} \Big ),
  $$
  we see from (\ref{C1}) that the first of the boundary conditions in (\ref{Tbb}) results.
  
  As explained in \cite{DE02} and is readily verified,  taking $N$ large with 
  $\beta$ fixed, the appropriate modification of (\ref{Tc}) is that
   \begin{equation}\label{TcW} 
  D T_\beta D^{-1} -  \mathbb I_N = - {1 \over {2} N^{2/3}} \Big ( N^{2/3} \Delta_N +
 N^{-1/3} \tilde{J}_N    + {2 \over \sqrt{\beta}} W \Big ),
  \end{equation}  
 where $W$ is the bidiagonal random matrix
$$
 W = - {N^{1/6} \over \sqrt{2} }
\left [
\begin{array}{ccccc}
N[0,1] & & & &\\
b_{(N-1)\beta} & N[0,1] & & &\\
 & b_{(N-2)\beta} & N[0,1] & &\\
&  & \ddots & \ddots  &\\
& & & b_\beta & N[0,1] \end{array} \right ],
$$
with  $b_{(N-j) \beta} = (2 \tilde{\chi}_{(N-j)\beta}^2 - (N-j)\beta)/
\sqrt{2\beta N}$. A direct calculation shows
$b_{(N-j)\beta}$ has mean zero and variance $1 - j/N$, and so
each element of $W$ has mean zero, and to leading in $N$ 
for $j$ fixed
has standard deviation $N^{1/6}$.
This  is consistent with a discretisation, lattice spacing
$h = N^{-1/3}$, of a Brownian motion process 
which has mean zero and standard
deviation $\sqrt{h}$ over an interval $(x,x+h]$. Recalling that (\ref{Tc}) is a
discretisation of (\ref{Tb-}), these facts suggest that (\ref{TcW})
is a discretisation of the stochastic Airy operator
\begin{equation}\label{1.SA}
- {d^2 \over dx^2} + x + {2 \over \sqrt{\beta} } B'(x),
\end{equation}
where $B(x)$ defines a standard Brownian path. 

For $\alpha = 0$
the above reasoning was made rigorous
in the work \cite{RRV06} with the boundary condition $\psi(0) = 0$, and subsequently extended 
in \cite{BV12} to the case of nonzero $\alpha$
with the scaling as in (\ref{C1}) and boundary condition (\ref{Tbb}); in relation to the
latter see too \cite{LS19,LNR20}.
Note in particular that the ground state eigenvalue $\Lambda_0$ of (\ref{1.SA}) with
boundary conditions (\ref{Tbb}) corresponds to $-x_1$ as specified by the scaling of
the largest eigenvalue $\lambda_1$ for the tridiagonal matrix (\ref{Tb}). Transforming
(\ref{1.SA}) to a stochastic diffusion equation using a simple Ricatti change of independent
function $p(x) = {d \over dx} \psi(x)$ allows theory relating to Kolmogorov's backward
equation to be invoked. With $F_{\beta, w}(x)$ the cumulative distribution function
of $-\Lambda_0$, this implies the partial differential equation  \cite{BV12} 
\begin{equation}\label{1.K}
{\partial F \over \partial x} + {2 \over \beta} {\partial^2 F \over \partial w^2} + (x - w^2) {\partial F \over \partial w} = 0,
\end{equation}
subject to the boundary conditions that $F(x,w) \to 1$ as $x,w \to \infty$ simultaneously, and $F(x,w) \to 0$ as
$w \to -\infty$ with $x$ bounded above.

With regards to graphing the PDF of the largest eigenvalue in the critical regime,
use of (\ref{1.K}) is yet to demonstrate a numerical scheme with guaranteed accuracy. Instead, following
a suggestion in Edelman and Rao \cite{ER05} an accurate and efficient Monte Carlo procedure can be based on
(\ref{Tb}). First, it is argued that with respect to the largest eigenvalue and for $N$ large, truncating the $N \times N$
tridiagonal matrix to an $N_0 \times N_0$ tridiagonal  matrix with $N_0 \approx 10 N^{1/3}$
does not cause appreciable error. Moreover, the operations of storing a sparse (tridiagonal) matrix and computing
the largest eigenvalue, knowing that it is near $1$ are all highly efficient with modern software. Finally,
scaling the largest eigenvalue as required by (\ref{C1}), and repeating $M \gg 1$ times with $M$ large allows
for a histrogram approximating the PDF to be obtained for a given value of $w$.
An example is given in Figure \ref{F2.5}.

 \begin{figure*}
\centering
\includegraphics[width=0.65\textwidth]{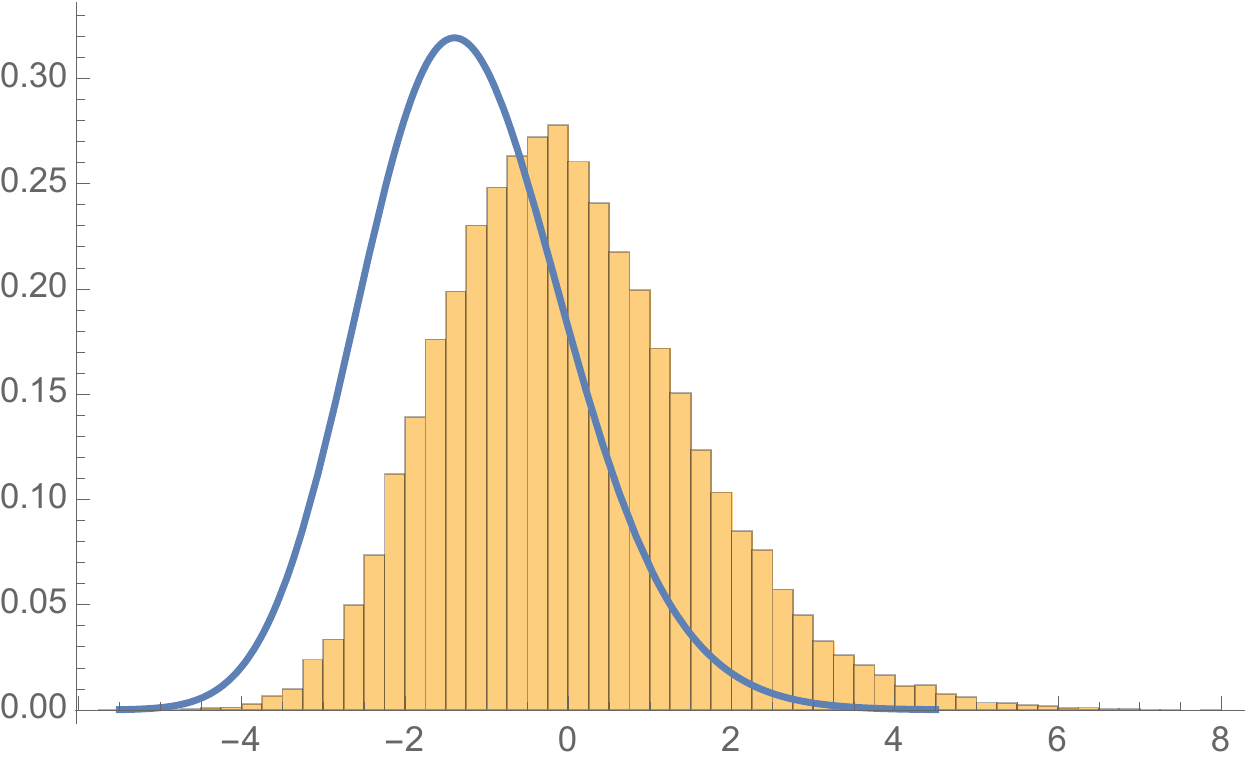}
\caption{Simulated histogram of the PDF for the largest
eigenvalue of (\ref{Tb}) with $\beta = 1$ in the scaled critical regime with
$N = 10^5$,  $\alpha = 1/2$ (equivalently $w=0$), and $M = 5*10^4$ repetitions. The solid
curve is the theoretical graph in the case $\alpha = 0$ (Tracy--Widom $\beta = 1$ distribution).}
\label{F2.5}
\end{figure*}

\begin{remark}
Combining a number of ideas, in particular that of a Pfaffian point process (see Remark \ref{R3.8} below)
and that of a Riemann-Hilbert characterisation of certain transcendents, a special function evaluation of
$F_{1, w}(x)$ has been given in \cite{Mo12}. However the complexity of this expression is such that it is
yet to be used for further analysis of properties $F_{1, w}(x)$, nor specifically for the determination of its numerical
values.
\end{remark}

\subsection{General $\beta > 0$}
In the previous section the Dyson index $\beta$ was introduced into the tridiagonal matrix formulation
of the rank $1$ perturbed GOE (\ref{1.6}) for convenience of motivating the ensuing working.
As known from the pioneering studies on random matrix theory undertaken by Dyson in the early
1960's, the first being \cite{Dy62}, there is a special significance in the three values $\beta = 1,2$
and 4. We know already that $\beta = 1$ corresponds to the GOE. The values $\beta = 2$ and $4$
correspond to the Gaussian unitary ensemble (GUE) and Gaussian symplectic ensemble (GSE),
for which the joint PDF of the elements of the corresponding Hermitian matrices $G$
is again $e^{-{\rm Tr} \, G^2/2}$. For the GUE the off-diagonal entries are complex, while for the GSE they
are quaternion; see \cite[\S 1.3]{Fo10}. With this choice of joint element PDF, upon scaling by $1/\sqrt{2N\beta}$
the eigenvalues to leading order are supported on the interval $[-1,1]$ and have limiting normalised
density given by the Wigner semi-circle (\ref{1.6a}) independent of $\beta$; see \cite{PS11}. From the
discussion of Remark \ref{R2.2}(1), the corresponding rank $1$ perturbation (\ref{1.6}) then has
eigenvalue separation properties as specified by Proposition \ref{P2.1}.

An extension of the tridiagonalisation procedure of Trotter \cite{Tr84}, as discussed in \S \ref{S2.2},
gives for each of the real, complex and quaternion cases of (\ref{1.6}) the tridiagonal matrix $T_\beta$
(\ref{Tb}). In fact the tridiagonal model allows Proposition \ref{P2.3} to be extended to general $\beta > 0$.

\begin{proposition}
Consider the rank $1$ perturbation  (\ref{1.6}) in the case of the GOE ($\beta = 1$),
GUE ($\beta = 2$) and GSE ($\beta = 4$). Extend this to general $\beta > 0$ according to the
tridiagonal matrix  (\ref{Tb}).
Up to normalisation, the eigenvalue PDF 
is proportional to
\begin{equation}\label{A3+}   
\prod_{j=1}^N e^{- \beta N \lambda_j^2} \prod_{1 \le j < k \le N} | \lambda_j - \lambda_k|^\beta
\int_{- \infty- i c}^{\infty - i c} e^{i t} \prod_{j=1}^N \Big ( i t - 2 \alpha \beta N \lambda_j \Big )^{-\beta/2} \, dt,
\end{equation} 
where $c > 2 \alpha \beta N \lambda_{1}$.
 \end{proposition}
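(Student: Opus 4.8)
The plan is to follow the proof of Proposition \ref{P2.3} step for step, replacing each $\beta = 1$ ingredient by its general-$\beta$ counterpart. First I would record the tridiagonal reduction. For $\beta = 1, 2, 4$ the extension of Trotter's procedure \cite{Tr84} mentioned above conjugates the rank $1$ term in (\ref{1.6}), as in (\ref{1.8a}), to $\alpha \sqrt{2\beta N}\, {\rm diag}\,(1,0,\dots,0)$ using the orthogonal/unitary/symplectic invariance of the Gaussian weight, and then tridiagonalises the remaining Gaussian matrix by Householder reflectors, all of which fix $\hat{\mathbf e}_N^{(1)}$ and hence fix $\hat{\mathbf e}_N^{(1)}\hat{\mathbf e}_N^{(1)\,T}$. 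Consequently the eigenvalues of (\ref{1.6}) coincide in distribution with those of $T_\beta$ in (\ref{Tb}), and for general $\beta > 0$ this is the definition. It thus suffices to work with $\sqrt{2\beta N}\,T_\beta = C_0 + C_1 + C_1^T$: this differs from its $\alpha = 0$ specialisation only in that the top left entry $a_N$ is distributed as ${\rm N}[\alpha \sqrt{2\beta N},1]$ rather than ${\rm N}[0,1]$, so, up to normalisation, its law is that of the $\alpha = 0$ matrix reweighted by $e^{\alpha \sqrt{2\beta N}\, a_N}$.

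Next I would invoke the Dumitriu--Edelman change of variables \cite{DE02} (see also \cite[\S 1.9.2]{Fo10}). Parametrising the $\alpha = 0$ tridiagonal matrix by its eigenvalues $\lambda_1 > \cdots > \lambda_N$ and the positive first components $q_1, \dots, q_N$ of the corresponding normalised eigenvectors, its probability measure is proportional to
\[
\prod_{j=1}^N e^{-\lambda_j^2/2} \prod_{1 \le j < k \le N} |\lambda_j - \lambda_k|^\beta \prod_{j=1}^N q_j^{\beta - 1} \, \delta\Big( \sum_{j=1}^N q_j^2 - 1 \Big) \, (d\vec{\lambda})(d\vec{q}),
\]
which for $\beta = 1$ is (\ref{M1}) (the factor $\prod_j q_j^{\beta-1}$ being absent there). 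Since $a_N = \hat{\mathbf e}_N^{(1)\,T} (C_0 + C_1 + C_1^T) \hat{\mathbf e}_N^{(1)} = \sum_{j=1}^N q_j^2 \lambda_j$ by the spectral decomposition, the reweighting factor is $e^{\alpha \sqrt{2\beta N} \sum_j q_j^2 \lambda_j}$, exactly as in (\ref{M2}).

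I would then integrate out $\vec{q}$ over $(\mathbb{R}^+)^N$; this is the only step that genuinely differs from the $\beta = 1$ case. Substituting $s_j = q_j^2$ converts $\prod_j q_j^{\beta-1}(d\vec{q})$ into $2^{-N} \prod_j s_j^{\beta/2-1}(d\vec{s})$, and, writing $\delta(\sum_j s_j - 1)$ as a Fourier integral and assuming temporarily that each $\lambda_j < 0$, the $s_j$-integrals decouple into $N$ copies of the Gamma integral $\int_0^\infty s^{\beta/2-1} e^{(it + \alpha\sqrt{2\beta N}\lambda_j)s}\,ds = \Gamma(\beta/2)(-it - \alpha\sqrt{2\beta N}\lambda_j)^{-\beta/2}$. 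After the change $t \mapsto -t$ this yields
\[
\int_{(\mathbb{R}^+)^N} \prod_{j=1}^N q_j^{\beta-1}\, \delta\Big( \sum_{j=1}^N q_j^2 - 1 \Big)\, e^{\alpha\sqrt{2\beta N}\sum_{j=1}^N q_j^2 \lambda_j}\,(d\vec{q}) \propto \int_{-\infty}^\infty e^{it} \prod_{j=1}^N \Big( it - \alpha\sqrt{2\beta N}\lambda_j \Big)^{-\beta/2}\,dt,
\]
the analogue of (\ref{M2a}). Multiplying by the eigenvalue-dependent factors of the measure above and rescaling $\lambda_j \mapsto \sqrt{2\beta N}\lambda_j$ to undo the prefactor in (\ref{Tb}) turns $e^{-\lambda_j^2/2}$ into $e^{-\beta N \lambda_j^2}$ and $\alpha\sqrt{2\beta N}\lambda_j$ into $2\alpha\beta N\lambda_j$, the Vandermonde being multiplied only by a constant; finally, deforming the contour to be parallel to the real axis but passing through $-ic$ with $c > 2\alpha\beta N\lambda_1$ — so that it stays clear of the branch points at $t = -2i\alpha\beta N\lambda_j$ of the (for non-integer $\beta$, genuinely multivalued) integrand, which lie above the real axis when $\lambda_j < 0$ — removes the temporary assumption $\lambda_j < 0$ by analytic continuation and gives (\ref{A3+}).

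I expect the main obstacle to be organisational rather than analytic: one has to be sure that for $\beta = 2, 4$ the $\beta$-extension of Trotter's tridiagonalisation really leaves the perturbation acting through the single matrix entry $a_N$ and nowhere else (which rests on the reflectors fixing $\hat{\mathbf e}_N^{(1)}$ together with the Dumitriu--Edelman tridiagonal reduction of the unperturbed GUE/GSE matrix), and that the Dumitriu--Edelman parametrisation may be used with the perturbation present — it can, because the parametrisation is that of the $\alpha = 0$ matrix and the perturbation merely reweights the same $(\vec{\lambda}, \vec{q})$-measure by $e^{\alpha\sqrt{2\beta N}\, a_N}$. The Gamma-integral evaluation, the interchange of the $\vec{s}$-integral with the Fourier representation of the delta function, and the contour deformation are all routine and identical in spirit to the $\beta = 1$ working; for general $\beta > 0$ the result is then a matter of definition once (\ref{Tb}) has been adopted.
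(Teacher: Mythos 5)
Your proof is correct and follows essentially the same route as the paper, whose own argument is simply the terse observation that the Dumitriu--Edelman measure acquires the factor $\prod_l q_l^{\beta-1}$ and that inserting this into the left side of (\ref{M2a}) and evaluating gives (\ref{A3+}). You have filled in exactly the details the paper leaves implicit: the substitution $s_j = q_j^2$ turning the $q$-integral into a product of Gamma integrals, the change $t \mapsto -t$, and the contour shift past the branch points $-i\,2\alpha\beta N\lambda_j$ to remove the temporary hypothesis $\lambda_j < 0$; the remaining bookkeeping ($a_N = \sum_j q_j^2\lambda_j$ and the rescaling $\lambda_j\mapsto\sqrt{2\beta N}\lambda_j$) matches the $\beta=1$ proof of Proposition \ref{P2.3} line for line. (One small remark: the paper's proof cites Proposition \ref{P2.1}, which is evidently a slip for Proposition \ref{P2.3}; you have correctly modelled your argument on the latter.)
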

 
 \begin{proof}
 This requires only minor modification of the proof of Proposition \ref{P2.1}.
 Specifically, from \cite{DE02} we know that the $\beta$ generalisation of
 (\ref{M1}) is to include a factor of $\prod_{l=1}^N q_l^{\beta - 1}$.
 Including this factor in the integrand on the LHS of (\ref{M2a}) and evaluating implies
the integral in (\ref{A3+}).
 \end{proof}

The considerations of \S \ref{S2.3} also admit $\beta$ generalisations. First, the matrix integral
evaluation (\ref{M4}) extends to a matrix integral over the Haar measure for unitary matrices
 and unitary symplectic matrices,
these diagonalising GUE and GSE matrices respectively. Of particular interest for the future working
of Section \ref{S3.2} is the unitary matrix integral
\begin{equation}\label{A4+}   
  \int_{U } e^{4 \alpha N {\rm Tr} (U \Lambda U^\dagger  \hat{\mathbf 1}_N   \hat{\mathbf 1}_N^T)}
  \, (U^\dagger d U)  \propto
\int_{-\infty-ic}^{\infty -ic} e^{it} \prod_{j=1}^N \Big ( i t - 4 \alpha N \lambda_j \Big )^{-1} \, dt.
 \end{equation}
 
 Generalising (\ref{R1}) to the shifted GUE and GSE ensembles gives a characterisation of the
 corresponding parameter dependent probability density function $p_t$ as satisfying the
 Fokker-Planck equation (\ref{R3}) with the Dyson index $\beta$ appearing as a factor of
 the second appearance of the partial derivative with respect to $\lambda_j$ through a
 multiplication by $1/\beta$; see \cite[\S 11.1]{Fo10}. In the theory of the Fokker-Planck equation,
 $\beta$ then has the interpretation as the inverse temperature, which is a prevalent point
 in the writings of Dyson on random matrices, beginning with \cite{Dy62}. A conjugation of
 the Fokker-Planck operator gives a Schr\"odinger operator of Calogero-Sutherland type
 in imaginary time ---
 see \cite[\S 11.3]{Fo10} --- with the case $\beta = 2$ then corresponding to free fermions.
 The latter also admits an interpretation as non-intersecting Brownian walkers 
 \cite{Ka16}; for a study of outliers in this context see \cite{ADM09}.
 
 In the theory of the scaled distribution of the largest eigenvalue for the Gaussian ensembles,
 as implied by the joint PDF (\ref{A3+}) in the case $\alpha = 0$, well known results due to Tracy
 and Widom (see the review \cite{FW15}) give evaluations in terms of a particular Painlev\'e II transcendent.
 This transcendent is the Hasting-Macleod solution of the Painlev\'e II  equation, specified as satisfying
 \begin{equation}\label{A5+}  
 q'' = s q + 2 q^3, \qquad q(s) \mathop{\sim}\limits_{s \to \infty} {\rm Ai}(s),
 \end{equation}
 where ${\rm Ai}(s)$ denotes the Airy function. Let $p_\beta^{\rm soft}(s)$ denote the PDF of the scaled
 largest eigenvalue, with the scaling defined by the first formula in (\ref{C1}) for $\beta = 1,2$ and by that formula
 with $N \mapsto N/2$ for $\beta = 4$. Denote the corresponding cumulative distribution by $E_\beta^{\rm soft}(s)$
 so that $p_\beta^{\rm soft}(s) = - {d \over d s} E_\beta^{\rm soft}(s)$. The results of Tracy
 and Widom give
  \begin{equation}\label{A6}
  E_2^{\rm soft}(s) = \exp\Big ( - \int_s^\infty (x-s) q^2(x) \, dx \Big )
  \end{equation}
  and
  \begin{equation}\label{A6a}  
   E_1^{\rm soft}(s) = ( E_2^{\rm soft}(s) )^{1/2} \exp \Big ( {1 \over 2}  \int_s^\infty q (x) \, dx \Big ), \quad
     E_4^{\rm soft}(s) = ( E_2^{\rm soft}(s) )^{1/2} \cosh \Big ( {1 \over 2}   \int_s^\infty q (x) \, dx \Big ).
  \end{equation}   
  For the general critical regime scaling, define the scaled variable $w$ as in   (\ref{C1}) for $\beta = 1,2$ and by that formula
 with $N \mapsto N/2$ for $\beta = 4$, in keeping with the prescription relating to the scaling of the largest eigenvalue.
 In the cases $\beta = 2,4$ it turns out that formulas in terms of the transcendent $q(s)$ again hold
 true \cite{BR01a,Ba06,Wa09,BV12,Fo13}, although $q(s)$ must be supplemented by two functions $f=f(s,w)$,
 $g = g(s,w)$ satisfying
   \begin{equation}\label{A7}
{\partial \over \partial w}
\begin{pmatrix} f \\ g \end{pmatrix}  =
\begin{pmatrix} q^2 & - w q - q'  \\
- w q + q'  &  w^2 - s - q^2 \end{pmatrix} \begin{pmatrix} f \\ g \end{pmatrix} ,
 \end{equation}  
subject to the initial conditions
\begin{equation}\label{fgE}
f(s,0) = g(s,0) = E(s), \qquad E(s) := \exp \Big ( - \int_s^\infty q(t) \, dt \Big ).
\end{equation}
The equation (\ref{A7}) is known in the theory of Painlev\'e II as one member of the Lax pair for $q(s)$,
first considered in \cite{FN80}.

\begin{proposition}\label{P2.11}
Specify $F_{\beta, w}(s)$ as the critical regime scaling generalisation of $E_\beta^{\rm soft}(s)$.
We have
\begin{align}
F_{2, w}(s) & = f(s;w) E_2^{\rm soft}(s), \label{fgF} \\
F_{4, w}(s) & = {1 \over 2} \Big ( (f(s;w)+g(s;w)) (E(s))^{-1/2} + (f(s;w)-g(s;w)) (E(s))^{1/2} \Big ) ( E_2^{\rm soft}(s))^{1/2}. \label{fgF1} 
\end{align}
\end{proposition}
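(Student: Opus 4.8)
The plan is to derive the formulas \eqref{fgF} and \eqref{fgF1} by starting from the tridiagonal representation of the critical regime established in \S\ref{S2.4}, passing to the stochastic Airy operator \eqref{1.SA} with boundary condition \eqref{Tbb}, and then matching the resulting characterisation of $F_{\beta,w}(s)$ against the Painlev\'e II data $(q,f,g)$. The key point is that the cumulative distribution function of $-\Lambda_0$ satisfies the Kolmogorov-type PDE \eqref{1.K}, so I would aim to show that the right-hand sides of \eqref{fgF} and \eqref{fgF1} both satisfy \eqref{1.K} for the relevant $\beta$ and obey the correct boundary conditions, and then invoke uniqueness. Concretely: first, for $\beta = 2$, I would substitute $F_{2,w}(s) = f(s;w)\,E_2^{\rm soft}(s)$ into \eqref{1.K} with $\beta = 2$; using the evolution equation \eqref{A7} for $\partial f/\partial w$ (only the $f$-row is needed here, since $f$'s $w$-derivative is $q^2 f - (wq+q')g$, but one must then also differentiate in $w$ again and in $s$, bringing in $g$ and the Painlev\'e II equation \eqref{A5+}) together with the identity $\partial_s \log E_2^{\rm soft}(s) = \int_s^\infty q^2$ coming from \eqref{A6}, the PDE should close. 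Second, for $\beta = 4$, the scaling prescription $N \mapsto N/2$ in \eqref{C1} means the effective Dyson index in \eqref{1.K} is $4$; I would substitute the stated combination of $f \pm g$ with the half-integer powers of $E(s)$ and $E_2^{\rm soft}(s)$ and verify \eqref{1.K} with $\beta = 4$ the same way, now using both rows of \eqref{A7}.

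For the boundary conditions, at $w = 0$ the initial data \eqref{fgE} gives $f(s,0) = g(s,0) = E(s)$, so \eqref{fgF} reduces to $E(s)\,E_2^{\rm soft}(s)$; I would check this coincides with the known Tracy--Widom value: indeed $\log(E(s) E_2^{\rm soft}(s)) = -\int_s^\infty q + ({1\over 2}\int_s^\infty q - \int_s^\infty(x-s)q^2\,dx)\cdot 2$... more carefully, $F_{2,0}(s)$ must equal $E_1^{\rm soft}(s)$ up to the correct reconciliation with \eqref{A6a}, since $w = 0$ is the $\alpha = 1/2$ boundary case; here I would use \eqref{A6a} directly. Similarly \eqref{fgF1} at $w = 0$ should collapse to $E_4^{\rm soft}(s)$ in \eqref{A6a} after the $\cosh$ is rewritten via $\cosh x = {1\over 2}(e^x + e^{-x})$, matching $(f+g)(E)^{-1/2} + (f-g)(E)^{1/2} = 2E \cdot {1\over 2}((E)^{-1/2} + (E)^{1/2})\cdot$... which indeed yields the $\cosh$ structure with argument ${1\over 2}\int_s^\infty q$ once one recalls $E(s)^{\pm 1/2} = \exp(\mp{1\over 2}\int_s^\infty q)$. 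The large-$(s,w)$ behaviour $F \to 1$ and the $w \to -\infty$ behaviour $F \to 0$ follow from the asymptotics of $q$ (Airy decay) fed through \eqref{A7}.

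The main obstacle I anticipate is the bookkeeping in verifying \eqref{1.K}: one has to differentiate the product $f \cdot E_2^{\rm soft}$ once in $s$ and twice in $w$, and the $\partial_s f$ term is not directly given by \eqref{A7} — it requires either an auxiliary $s$-evolution for $(f,g)$ (the other half of the Lax pair, or the isomonodromy/Schlesinger equations associated with \eqref{FN80}) or an indirect argument. The cleanest route is probably to avoid $\partial_s f$ altogether by instead verifying that the candidate $F_{\beta,w}$ satisfies the same PDE \eqref{1.K} that \eqref{BV12} attaches to $-\Lambda_0$, treating \eqref{1.K} as a first-order evolution in $s$ that determines $F$ from its $w$-profile; then only $w$-derivatives of $f,g$ are needed, and $\partial F/\partial s$ is read off from \eqref{1.K} itself, reducing the claim to a consistency check on the $w$-ODE \eqref{A7} plus the $s$-dependence of $q$ through \eqref{A5+}. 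Assembling these pieces — PDE verification via \eqref{A7} and \eqref{A5+}, boundary matching via \eqref{A6}, \eqref{A6a}, \eqref{fgE}, and uniqueness for \eqref{1.K} — completes the proof for $\beta = 2, 4$.
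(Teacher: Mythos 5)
Your proposal is essentially the paper's approach: the paper's proof is a one-line citation of Bloemendal and Vir\'ag \cite{BV12}, who observe that the formulas can be established by directly checking the PDE characterisation (\ref{1.K}). Your more detailed plan --- substituting the candidate expressions into (\ref{1.K}), using (\ref{A7}) and (\ref{A5+}) to close the computation, and matching boundary conditions via (\ref{fgE}) and (\ref{A6a}) --- is exactly what that verification amounts to, and your flagged concern about $\partial_s f$ is real but is handled by the $s$-half of the Lax pair (the paper notes that (\ref{A7}) is only ``one member'' of the pair from \cite{FN80}).
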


\begin{proof}
  Bloemendal and Vir\'ag \cite{BV12} have noted that the validity of these formulas can be
  established by directly checking the characterisation  (\ref{1.K}) of $F_{\beta, w}(x)$.
  \end{proof}

  \begin{remark} 1.~Substituting $w=0$ in (\ref{fgF1}) and comparing with the first formula
  in (\ref{A6a}) shows $F_{4, w}(s) |_{w=0} =  E_1^{\rm soft}(s)$, which in fact can be
  anticipated \cite{Wa09,Fo13}. \\
  2.~Rumanov \cite{Ru15,Ru16} has initiated a program of study on Lax pairs associated with
   (\ref{1.K}) for general even $\beta$, with concrete results obtained for $\beta = 6$. The
   latter have been  further refined in \cite{GIKM16}. \\
   3.~For $s \to - \infty$ it is known \cite{DIK08,BBD07} (see also the review \cite[Eq.~(3.33)]{Fo14})
    \begin{equation}
    E_2^{\rm soft}(s) \mathop{\sim}\limits_{s \to - \infty} e^{- |s|^3/12 - (1/8) \log | s|} 
  \end{equation}
  and \cite{BR01+}  
    \begin{equation}
    f(s,w)  \mathop{\sim}\limits_{s \to - \infty}  e^{-|s|^{3/2}/6 + |x| w/2 - w^2 |x|^{1/2}},
   \end{equation}
   where in both formulas the exponents have been truncated at the constant term (i.e.~term
   independent of $s$).
   Substituting in (\ref{fgF}) gives the left tail asymptotics  of $F_{2, w}(s)$. Note that the resulting formula
   is consistent with (\ref{1.K}).
   \end{remark}
   
   \subsection{Eigenvector overlap}
   For a GOE matrix the eigenvectors $ \hat{\mathbf v}$ say
   are distributed uniformly on the unit sphere in $\mathbb R^N$. One consequence is
   that if we take a particular direction, say $\hat{\mathbf 1}$, and form
    \begin{equation}\label{v1}
   (\hat{\mathbf 1}_N \cdot \hat{\mathbf v})^2,
     \end{equation}
    then averaging over the eigenvectors gives zero. Moreover, for large $N$ this is the value of
  (\ref{v1}) almost surely. Of interest is the value of (\ref{v1}) in relation to the 
  eigenvector corresponding to the largest eigenvalue of the random matrix (\ref{1.6}).
  This was first determined by Benaych-Georges and Nadakuditi \cite{BR11}.
  
  \begin{proposition}\label{P2.13}
  Denote the unit eigenvector corresponding to the largest eigenvalue of (\ref{1.6})
  by $\hat{\mathbf v}$. For $N \to \infty$ we have almost surely
  \begin{equation}\label{v2}
  (\hat{\mathbf 1}_N \cdot \hat{\mathbf v})^2 \to \left \{
  \begin{array}{ll} 0, & 0 \le \alpha \le 1   \\
  1 - 1/\alpha^2, & \alpha >  1. \end{array} \right.
 \end{equation}  
 \end{proposition}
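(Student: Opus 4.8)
The plan is to extend the resolvent argument used in the proof of Proposition~\ref{P2.1}, now tracking the eigenvector and not just the eigenvalue. If $\hat{\mathbf v}$ is the unit eigenvector of (\ref{1.6}) with largest eigenvalue $\lambda_1$, then $(\tilde{G}+\alpha\hat{\mathbf 1}_N\hat{\mathbf 1}_N^T)\hat{\mathbf v}=\lambda_1\hat{\mathbf v}$ rearranges to $(\lambda_1\mathbb{I}_{N\times N}-\tilde{G})\hat{\mathbf v}=\alpha(\hat{\mathbf 1}_N^T\hat{\mathbf v})\hat{\mathbf 1}_N$, and since $\lambda_1$ lies outside the spectrum of $\tilde{G}$ for every $N$ by the interlacing (\ref{1.8e}), the eigenvector is parallel to $(\lambda_1\mathbb{I}_{N\times N}-\tilde{G})^{-1}\hat{\mathbf 1}_N$. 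Normalising and writing $s_N(\lambda):=\hat{\mathbf 1}_N^T(\lambda\mathbb{I}_{N\times N}-\tilde{G})^{-1}\hat{\mathbf 1}_N$, so that $\hat{\mathbf 1}_N^T(\lambda\mathbb{I}_{N\times N}-\tilde{G})^{-2}\hat{\mathbf 1}_N=-s_N'(\lambda)$, one gets
\[
(\hat{\mathbf 1}_N\cdot\hat{\mathbf v})^2=\frac{\big(s_N(\lambda_1)\big)^2}{\hat{\mathbf 1}_N^T(\lambda_1\mathbb{I}_{N\times N}-\tilde{G})^{-2}\hat{\mathbf 1}_N}=\frac{\big(s_N(\lambda_1)\big)^2}{-\,s_N'(\lambda_1)} .
\]
Hence the whole claim is reduced to the behaviour of the single scalar function $s_N$ and its derivative near the relevant value of $\lambda_1$.

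For the second step I would identify the large-$N$ limit of $s_N(\lambda)$ for $\lambda$ just to the right of the support $[-1,1]$. Inserting the spectral decomposition of $\tilde{G}$ exactly as in the proof of Proposition~\ref{P2.1} gives $s_N(\lambda)=\sum_{j=1}^N(\hat{\mathbf 1}_N^T\mathbf u_j)^2/(\lambda-\mu_j)$; since for the GOE the eigenvectors are independent of the eigenvalues and uniformly distributed on the unit sphere, each squared component self-averages to $1/N$, so $s_N(\lambda)$ has the same limit as the normalised linear statistic $\tfrac1N\sum_j 1/(\lambda-\mu_j)$, namely the semicircle Stieltjes transform $G^{(\tilde{G})}(\lambda)$ of (\ref{G1}); differentiating, $s_N'(\lambda)\to(G^{(\tilde{G})})'(\lambda)<0$. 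Together with Proposition~\ref{P2.1} and Remark~\ref{R2.2}(1), which give $\lambda_1\to y^\ast$ with $1/\alpha=G^{(\tilde{G})}(y^\ast)$ in the separated regime, this yields
\[
(\hat{\mathbf 1}_N\cdot\hat{\mathbf v})^2\ \to\ \frac{G^{(\tilde{G})}(y^\ast)^2}{-(G^{(\tilde{G})})'(y^\ast)}\ =\ \frac{1}{\alpha^2\,\big(-(G^{(\tilde{G})})'(y^\ast)\big)} ,
\]
and substituting the explicit evaluation (\ref{1.8h}) of $G^{(\tilde{G})}$ together with the outlier location gives the stated value. In the non-separated regime there is no outlier, $\lambda_1$ converges to the spectral edge, and because $G^{(\tilde{G})}$ has infinite (one-sided) derivative at the square-root edge the denominator above diverges, forcing $(\hat{\mathbf 1}_N\cdot\hat{\mathbf v})^2\to 0$.

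The remaining work is to upgrade these limits to almost-sure statements. One needs (i) concentration of the quadratic form $s_N(\lambda)$, and of $s_N'(\lambda)$, about their deterministic limits uniformly for $\lambda$ in a neighbourhood of $y^\ast$ --- a routine resolvent-quadratic-form estimate, obtainable from a local semicircle law or from Lipschitz / Hanson--Wright concentration using that $\hat{\mathbf 1}_N$ is a fixed unit vector --- and (ii) control of the location of $\lambda_1$, which in the separated case is already provided by Proposition~\ref{P2.1}. I expect the genuinely delicate point to be the non-separated regime: there $\lambda_1$ sits at the soft edge, at distance $O(N^{-2/3})$ from the edge of the spectrum, where the resolvent $(\lambda_1\mathbb{I}_{N\times N}-\tilde{G})^{-1}$ has diverging operator norm, so one must invoke the edge local law (or the tridiagonal / stochastic-Airy description of \S\ref{S2.4}) to confirm that the corresponding eigenvector genuinely delocalises and that the informal estimate ``$-s_N'(\lambda_1)\to\infty$'' is not swamped by fluctuations, which is what actually forces $(\hat{\mathbf 1}_N\cdot\hat{\mathbf v})^2\to 0$ rather than a strictly positive limit.
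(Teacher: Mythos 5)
Your argument is essentially the paper's own proof: rearrange the eigenvalue equation to obtain $\hat{\mathbf v}\propto(\lambda_1\mathbb I_N-\tilde G)^{-1}\hat{\mathbf 1}_N$, normalise so that the overlap becomes $s_N(\lambda_1)^2/(-s_N'(\lambda_1))$ with $-s_N'(\lambda_1)=\hat{\mathbf 1}_N^T(\lambda_1\mathbb I_N-\tilde G)^{-2}\hat{\mathbf 1}_N$ playing the role of the paper's $c^2$, and then pass to the deterministic Stieltjes-transform limit via self-averaging and (\ref{1.8g})--(\ref{1.8h}), with the non-separated case handled by the divergence of the derivative at the soft edge. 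One small caution: when you assert that substituting (\ref{1.8h}) and the outlier location ``gives the stated value,'' carrying the algebra through with the normalisation actually fixed in Proposition~\ref{P2.1} (semicircle on $[-1,1]$, threshold $\alpha>1/2$, outlier at $\alpha+1/(4\alpha)$) yields $1-1/(4\alpha^2)$ rather than $1-1/\alpha^2$, so the constants displayed in (\ref{v2}) correspond to the common convention in which the semicircle edge sits at $2$.
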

 
 \begin{proof} (Outline)
 Let $\mu$ denote the largest eigenvalue of (\ref{1.6}) with corresponding
unit  eigenvector $\hat{\mathbf v}$.  Rearranging the eigen-equation shows
$$
(\mu \mathbb I_N - \tilde{X}) \hat{\mathbf v} = \alpha \hat{\mathbf 1}_N \hat{\mathbf 1}_N^T \hat{\mathbf v} =
\alpha (  \hat{\mathbf 1}_N^T  \hat{\mathbf v} ) \hat{\mathbf 1}_N,
$$
where the second equality follows from the fact that $ \hat{\mathbf 1}_N^T \hat{\mathbf v} $ is a scalar. This implies
$  \hat{\mathbf v} $ is proportional to $  (\mu \mathbb I_N - \tilde{X})^{-1}  \hat{\mathbf 1}_N$. Moreover, the
proportionality can be specified by the fact that $\hat{\mathbf v} $ is a unit vector. Thus
$$
 \hat{\mathbf v} ={1 \over c}  (\mu \mathbb I_N - \tilde{X})^{-1}  \hat{\mathbf 1}_N, \qquad c = ( \hat{\mathbf 1}_N^T ( \mu \mathbb I_N - \tilde{X})^{-2}  \hat{\mathbf 1}_N )^{1/2}.
 $$
 
 Next diagonalise the GOE matrix $\tilde{X}$ using $\tilde{X} = U \Lambda U^T$ as in the proof of Proposition
 \ref{P2.1}.  With $\hat{\mathbf w} = U^T  \hat{\mathbf 1}_N$ this shows
 $$
 \hat{\mathbf v} = U ( \mu \mathbb I_N - \tilde{X})^{-1} \hat{\mathbf w}, 
 \qquad c = ( \hat{\mathbf w}^T ( \mu \mathbb I_N - \Lambda)^{-2}  \hat{\mathbf w} )^{1/2}.
 $$
 It then follows 
 $$
   (\hat{\mathbf 1} \cdot \hat{\mathbf v})^2 = {1 \over c^2 } ( \hat{\mathbf w}^T ( \mu \mathbb I_N - \Lambda)^{-1}  \hat{\mathbf w} )^{2}.
   $$
   To close out the proof from here we need the fact that for $N$ large, and $\rho^{\rm W}(x)$ denoting the Wigner semi-circle (\ref{1.6a}),
   almost surely
   $$
   \hat{\mathbf w}^T ( \mu \mathbb I_N - \Lambda)^{-1}  \hat{\mathbf w}  \to \int_{-1}^1 {\rho^{\rm W}(x) \over \mu - x} \, dx.
   $$
 Note that in the  proof of Proposition
 \ref{P2.1} is established upon averaging. The proportionality $c$ is just minus the derivative with respect to $\mu$ of this for $\mu > 1$
 Recalling now (\ref{1.8g}) gives (\ref{v2}) for $\mu \ge 1$, with the cases $0 \le \alpha \le 1$ obtained  by taking $\lim_{\mu \to 1^+}$.
 \end{proof}
 
 \begin{remark}
Recently Bao and Wang \cite{BW21} have studied the eigenvector overlap for the random matrix (\ref{1.6}), with
$\tilde{G}$ therein a scaled GUE matrix, in the critical regime as specified by the scaling (\ref{C1}). Specifically, they studied
the first component $x_j^{(1)}$ of the eigenvector $\mathbf x_j$ corresponding to the $j$-th largest eigenvalue.
With $\{ \sigma_j \}_{j=1}^N$ the eigenvalues of (\ref{1.6}) and $\{\mu_j\}_{j=1}^{N-1}$ the eigenvalues of this same random
matrix with the first row and column removed, the starting point of their analysis is the identity
  \begin{equation}\label{rn}
| x_j^{(1)}|^2 = \prod_{i=1}^{j=1} {\sigma_j - \mu_j \over \sigma_j - \sigma_i} \prod_{i=j+1}^N {\sigma_j - \mu_{i-1} \over \sigma_j - \sigma_i}.
 \end{equation}
 Their analysis reveals that for $j$ fixed and in the critical regime, $N^{1/3}   | x_j^{(1)}|^2$ has a well defined limit proportional in
 distribution to the RHS of (\ref{rn}) with the eigenvalues therein replaced by their scaled critical regime counterparts. \\
\end{remark}

\section{A multiplicative rank $1$ perturbation for the LUE}\label{S3}
\subsection{Reduction to an additive rank $1$ perturbation}
Let $X$ be a standard complex Gaussian matrix of size $n \times N$,
$(n \ge N)$, and form the matrix $W = X^\dagger X$. The matrices
$\{ W \}$ are said to be particular complex Wishart matrices (specifically
such that $X$ has mean zero, and the covariance matrix associated with
$W$ is the identity) and are also matrix realisations of the Laguerre
unitary ensemble (LUE) in the case of the Laguerre parameter $a=n-N$.
For general Laguerre parameter $a > -1$, the LUE can be specified by the
eigenvalue PDF proportional to
\begin{equation}\label{3.1}
\prod_{l=1}^N x_l^a  e^{-x_l} \prod_{1 \le j < k \le N} (x_k - x_j)^2, \quad x_l \in \mathbb R^+.
\end{equation}
It follows that for $a=n-N$ a viewpoint on (\ref{3.1}) is as the PDF for the squared singular values
of $X$.

For $\Sigma$ and $N \times N$ positive definite matrix, construct from $X$ a correlated
complex Gaussian matrix 
\begin{equation}\label{3.1a}
\tilde{X} = X \Sigma^{1/2},
\end{equation}
and use it in turn to construct a correlation complex Wishart matrix
\begin{equation}\label{3.1b}
\tilde{W} =  \Sigma^{1/2} X^\dagger X \Sigma^{1/2}.
\end{equation}
We see that the choice
\begin{equation}\label{3.1b+}
\Sigma = {\rm diag} \, (b,1,\dots,1), \quad b > 0,
\end{equation}
corresponds to a rank $1$ multiplicative perturbation of $X$. In fact from the
viewpoint of eigenvalues, the corresponding multiplicative perturbation of $\tilde{W}$ can
be written as a rank $1$  additive perturbation. To see this we note that the
nonzero eigenvalues of $ \Sigma^{1/2} X^\dagger X \Sigma^{1/2}$ are the same
as those for $ X \Sigma X^\dagger$ (more generally $AB$ and $BA$ have the same
nonzero eigenvalues). But with $\Sigma$ given by (\ref{3.1b+}),
\begin{equation}\label{3.1c}
X \Sigma X^\dagger = X_1 X_1^\dagger + b \mathbf x \mathbf x^\dagger,
\end{equation}
where $X_1$ refers to $X$ with the first column deleted, and $\mathbf x$
denotes the first column of $X$. A strategy analogous to that used in the proof
of Proposition \ref{P2.1} now suffices to specify an eigenvalue separation
effect as a function of $b$ \cite{BBP05,BS06,BFF09}.

 \begin{proposition}\label{p3.1}
 Consider the particular correlated complex Wishart matrix specified by (\ref{3.1b}) and 
  (\ref{3.1b+}), and scale the eigenvalues by dividing by $n$. In the limit $n \to \infty$
  with $n/N := \gamma \ge 1$ fixed. Provided $b > 1 + \sqrt{\gamma}$, a single eigenvalue
  separates from the upper endpoint of the support $(1 + 1/\sqrt{\gamma})^2$, and occurs
  at the point
 \begin{equation}\label{3.1d} 
b \Big ( 1 + {\gamma^{-1} \over b - 1} \Big ).
\end{equation}
\end{proposition}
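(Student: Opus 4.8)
The plan is to carry out the reduction flagged immediately after (\ref{3.1c}), following the proof of Proposition \ref{P2.1} almost verbatim. Since $\tilde W$ and $X\Sigma X^\dagger=X_1 X_1^\dagger+b\,\mathbf x \mathbf x^\dagger$ have the same nonzero eigenvalues, I would work with the latter, an additive rank $1$ perturbation of the $n\times n$ positive semidefinite matrix $X_1 X_1^\dagger$ (which has rank $N-1$) by the rank $1$ matrix $b\,\mathbf x\mathbf x^\dagger$, and write down its characteristic equation for eigenvalues $\mu$ outside ${\rm spec}\,(X_1 X_1^\dagger)$. Pulling the nonvanishing factor $\det(\mu\,\mathbb I_{n\times n}-X_1 X_1^\dagger)$ out of $\det(\mu\,\mathbb I_{n\times n}-X_1 X_1^\dagger-b\,\mathbf x\mathbf x^\dagger)$ and applying the determinant identity (\ref{1.8b}) reduces the eigenvalue condition to the scalar equation $1=b\,\mathbf x^\dagger(\mu\,\mathbb I_{n\times n}-X_1 X_1^\dagger)^{-1}\mathbf x$. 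A sketch of the graph of the right-hand side as a function of $\mu$, as in Proposition \ref{P2.1}, shows there is at most one solution strictly above the largest eigenvalue of $X_1 X_1^\dagger$, and the poles at the eigenvalues of $X_1 X_1^\dagger$ are not zeros of the perturbed characteristic polynomial; this accounts for the single separated eigenvalue in the statement. The structural fact playing the role of the independence of eigenvectors and eigenvalues in Proposition \ref{P2.1} is that $\mathbf x$, the deleted first column of $X$, is independent of $X_1$.

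Next I would take $n\to\infty$ after the rescaling $\mu=n\lambda$. Writing $A=(n\lambda\,\mathbb I_{n\times n}-X_1 X_1^\dagger)^{-1}=n^{-1}(\lambda\,\mathbb I_{n\times n}-n^{-1}X_1 X_1^\dagger)^{-1}$, independence of $\mathbf x$ and $X_1$ gives $\mathbb E[\mathbf x^\dagger A\mathbf x\,|\,X_1]={\rm Tr}\,A$ with conditional variance $O(1/n)$, so $\mathbf x^\dagger A\mathbf x$ concentrates on $n^{-1}{\rm Tr}\,(\lambda\,\mathbb I_{n\times n}-n^{-1}X_1 X_1^\dagger)^{-1}$ (this is the counterpart of replacing $(u_1^{(j)})^2$ by $1/N$ in the proof of Proposition \ref{P2.1}). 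Because $X_1$ is $n\times(N-1)$ with $n/(N-1)\to\gamma\ge1$, the empirical spectral measure of $n^{-1}X_1 X_1^\dagger$ converges almost surely to the companion Marchenko--Pastur law $\underline\mu_\gamma=(1-\gamma^{-1})\delta_0+\gamma^{-1}\mu_{\rm MP}$, where $\mu_{\rm MP}$ is the Marchenko--Pastur law of ratio $\gamma^{-1}$ supported on $[(1-\gamma^{-1/2})^2,(1+\gamma^{-1/2})^2]$ and the atom at $0$ records the deficient rank. Hence in the limit the eigenvalue condition reads $1/b=\underline G_\gamma(\lambda^*)$ with $\underline G_\gamma(\lambda)=\int(\lambda-x)^{-1}\,\underline\mu_\gamma(dx)$, which is exactly the separation criterion (\ref{G2}) of Remark \ref{R2.2} with $\alpha\mapsto b$ and the base density $\rho^{\tilde G}\mapsto\underline\mu_\gamma$.

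It remains to solve $1/b=\underline G_\gamma(\lambda^*)$, which is elementary. Inserting the closed form $\underline G_\gamma(\lambda)=\bigl(\lambda+1-\gamma^{-1}-\sqrt{(\lambda-1-\gamma^{-1})^2-4\gamma^{-1}}\,\bigr)/(2\lambda)$ (branch fixed by $\underline G_\gamma(\lambda)\sim1/\lambda$ at infinity), isolating the square root and squaring makes the $(1-\gamma^{-1})^2$ terms cancel; dividing through by $\lambda^*$ then leaves a linear equation whose unique root is $\lambda^*=b\bigl(1+\gamma^{-1}/(b-1)\bigr)$, i.e.\ (\ref{3.1d}). One then checks admissibility: the quantity isolated before squaring equals $\bigl((b-1)^2-\gamma^{-1}\bigr)/(b-1)$, which for $b>1$ is nonnegative precisely when $b\ge1+\gamma^{-1/2}$, and the same inequality is exactly the condition that $\lambda^*$ exceed the upper spectral edge $(1+\gamma^{-1/2})^2$, since $b\mapsto b\bigl(1+\gamma^{-1}/(b-1)\bigr)$ is minimised at $b=1+\gamma^{-1/2}$ with value $(1+\gamma^{-1/2})^2$. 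As $\gamma\ge1$ forces $1+\sqrt\gamma\ge1+\gamma^{-1/2}$, the hypothesis $b>1+\sqrt\gamma$ suffices for separation and yields the outlier (\ref{3.1d}). I expect the only genuine obstacle to be the almost-sure inputs imported at the level of Proposition \ref{P2.1} --- that $n^{-1}X_1 X_1^\dagger$ itself has no outlier eigenvalues, and that $n^{-1}{\rm Tr}\,(\lambda\,\mathbb I_{n\times n}-n^{-1}X_1 X_1^\dagger)^{-1}$ converges uniformly for $\lambda$ in a neighbourhood of the edge --- for which one appeals to \cite{BBP05,BS06,BFF09}; the rest is the bookkeeping described above.
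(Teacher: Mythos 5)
Your proof follows the same strategy as the paper's sketch: reduce to the secular equation $1 = b\,\mathbf x^\dagger(\mu\,\mathbb I_{n} - X_1X_1^\dagger)^{-1}\mathbf x$ via the determinant identity (\ref{1.8b}), replace the quadratic form by its normalised trace using the independence of $\mathbf x$ and $X_1$, pass to the limiting Mar\v{c}enko--Pastur law, and solve $1/b=\underline G_\gamma(\lambda^*)$ in the spirit of Remark~\ref{R2.2}. Your algebra correctly recovers (\ref{3.1d}), and you are right that the sharp separation threshold produced by this computation is $b>1+\gamma^{-1/2}$, not the stated $b>1+\sqrt\gamma$: the latter is a sufficient (since $\gamma\ge 1$) but not necessary condition, and $b=1+\gamma^{-1/2}$ is precisely where (\ref{3.1d}) is minimised, with value equal to the quoted upper edge $(1+\gamma^{-1/2})^2$. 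One further point of comparison: the paper's sketch writes the Mar\v{c}enko--Pastur support endpoints as $c=(1-\sqrt\gamma)^2$, $d=(1+\sqrt\gamma)^2$ and uses the corresponding Stieltjes transform; that is the limit law of $N^{-1}X_1X_1^\dagger$ rather than $n^{-1}X_1X_1^\dagger$, which is inconsistent with the proposition's ``divide by $n$'' prescription and with the quoted upper edge $(1+\gamma^{-1/2})^2$, and, propagated literally, produces the outlier location $b\bigl(1+1/(b-\gamma)\bigr)$ rather than (\ref{3.1d}). Your normalisation, with support $[(1-\gamma^{-1/2})^2,(1+\gamma^{-1/2})^2]$, is the internally consistent one and matches the Baik--Silverstein formula \cite{BS06}.
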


\begin{proof} (Sketch)
When divided by $n$, replacing the Wigner semi-circle (\ref{1.6a}) for the
normalised density of $ X_1 X_1^\dagger $ in (\ref{3.1c}) is the Mar\v{c}enko--Pastur
functional form \cite{PS11}
 \begin{equation}\label{MP}
 \Big (1 - {1 \over \gamma} \Big ) \delta(x) + {\sqrt{(x - c)(d - x)} \over 2 \pi \gamma x} \delta_{c < x < d},
 \end{equation}
where $c = (1 - \sqrt{\gamma})^2$ and $d = (1 + \sqrt{\gamma})^2$. Here the delta function is in
keeping with the fraction of zero eigenvalues of $X_1 X_1^\dagger $ equalling $(1-1/\gamma)$.
The argument of the working of the proof of Proposition \ref{P2.1} gives that the secular equation
for the eigenvalues of the RHS of (\ref{3.1c}), divided by $n$, reads
 \begin{equation}\label{MP1}
1 = {b (1 - 1/\gamma) \over \lambda} + b \int_c^d {\tilde{\rho}^{\rm MP}_{(1)}(x) \over \lambda - x} \, dx,
 \end{equation}
where $\tilde{\rho}^{\rm MP}_{(1)}(x)$ denotes the second term in (\ref{MP}); cf.~(\ref{1.8g}). For the
integral in (\ref{MP1}) we have the evaluation (see e.g.~\cite[Eq.~(2.24)]{BFF09})
$$
{1 \over 2 \gamma} \bigg ( 1 - {\gamma - 1 \over z}- \Big ( 1 - {2 (\gamma + 1) \over z} +
{(\gamma - 1)^2 \over z^2} \Big )^{1/2} \bigg ).
$$
Substituting in (\ref{MP1}), and observing that both terms therein are decreasing
functions of $\lambda$ and so take their maximum value when $\lambda=d$ gives the stated condition
for eigenvalue separation, while solving for $\lambda$ under this condition gives the value
(\ref{3.1d}).
\end{proof}

An illustration of the prediction of Proposition \ref{p3.1} is given in Figure \ref{F3.1}. A comprehensive study of
this phase transition effect, extended to (\ref{3.1b}) with the parameter $b$ repeated $r$ times down the
diagonal of $\Sigma$ and including the critical regime (see subsection \ref{S3.3} below) was undertaken
by Baik, Ben Arous and P\'eche \cite{BBP05}. Subsequently, it has been customary to use
the term BBP transition in this context.

 \begin{figure*}
\centering
\includegraphics[width=0.95\textwidth]{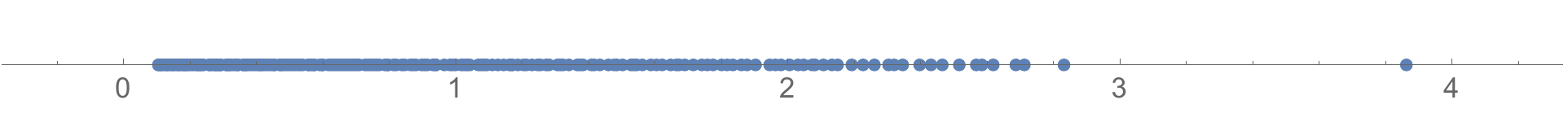}
\caption{Eigenvalues divided by 400 of a $200 \times 400$ sample random matrix as specified by (\ref{3.1b}) and 
  (\ref{3.1b+}) with $b=3$. Theoretically the bulk of the spectrum has support approximately $[0.09, 2.9]$, with the
  outlier at $15/4$. }
  \label{F3.1}
\end{figure*}

  \begin{remark}
  1.~(Rank 1 update)
 Generalise (\ref{3.1b+}) so that $\Sigma = {\rm diag} \, (b_1,b_2,\dots,b_N)$ with each $b_i > 0$.
 Specify $X$ as in (\ref{3.1c}), denote by $X_j$ the matrix obtained from $X$ by deleting the first $j$ columns
 $\mathbf x_1,\dots, \mathbf x_j$, and denote by $\Sigma_j$ the $(N - j) \times (N - j)$ matrix
 ${\rm diag} \, (b_{j+1},\dots, b_N)$. In keeping with (\ref{3.1c}) we have
 $$
 X_{j-1} \Sigma_{j-1} X_{j-1}^\dagger = X_j \Sigma_j X_j^\dagger + b_j \mathbf x_j \mathbf x_j^\dagger \qquad (j=1,\dots,N).
 $$
 Iterating this backwards, $j=N,N-1,\dots,1$ gives a rank $1$ update constuction of $X \Sigma X^\dagger$; recall the paragraph
 including (\ref{1.7}). \\
  2.~(Note on eigenvector overlap)
  Before the derivation of Proposition \ref{P2.13} the eigenvalue overlap associated with the rank $1$ perturbation
  (\ref{3.1c}) was calculated by Paul \cite{Pa07}. This was in the case that $X$ therein has real rather than complex entries, but this has no effect on 
  the result. The method of the proof of Proposition \ref{P2.13} carries over, with the role of (\ref{1.8g}) now played by (\ref{MP1}).
  To state the result,
  denote the unit eigenvector corresponding to the largest eigenvalue of (\ref{3.1c})
  by $\hat{\mathbf v}$. Then for $N \to \infty$ we have the almost surely convergence
  \begin{equation}\label{v2}
  |\hat{\mathbf 1}_N \cdot \hat{\mathbf v}|^2 \to \left \{
  \begin{array}{ll} \displaystyle {(b-1)^2 - \gamma \over (b-1)^2 + \gamma (b-1)} , & b > 1 + \sqrt{\gamma}    \\
  0, &  {\rm otherwise}. \end{array} \right.
 \end{equation}    
 The recent work \cite{DDC21} considers this overlap in the case of complex entries for the eigenvector corresponding
 to the smallest eigenvalue, and shows that when multiplied by $N$ it has the limiting distribution
 $\chi_2^2/(2b)$, with $\chi_2^2$ the chi-square random variable with two degrees of freedom.
\end{remark}  

\subsection{An application of the HCIZ matrix integral}\label{S3.2}
It follows from (\ref{3.1b}) and the definition of $X$ therein that the correlated complex Wishart matrix is
specified by a PDF proportional to
 \begin{equation}\label{4.1a} 
 \Big ( {1 \over \det  \Sigma} \Big )^{1/2} e^{- {\rm Tr} \, (\tilde{X}^\dagger \tilde{X} \Sigma^{-1})}.
  \end{equation}
  A fundamental change of variables in random matrix theory, see e.g.~\cite[Eq.~(3.23)]{Fo10}, gives that
  $\tilde{W} = \tilde{X}^\dagger \tilde{X}$ then has PDF proportional to
  $$
   \Big ( {1 \over \det  \Sigma} \Big )^{1/2}  \Big ( \det \tilde{W} \Big )^{(n-N)} e^{- {\rm Tr} \, (\tilde{W} \Sigma^{-1})}.
   $$
   Diagonalising the complex Hermitian matrix $\tilde{W}$ according to $\tilde{W} = U \Lambda U^\dagger$ for
   $U \in U(N)$ and making use of the corresponding change of variables formula (see \cite[Prop.~1.3.4]{Fo10})
   gives that the eigenvalue PDF of $\tilde{W}$ is proportional to
 \begin{equation}\label{4.1b}    
 \Big ( {1 \over \det  \Sigma} \Big )^{1/2}  \prod_{l=1}^N \lambda_l^{n-N} e^{- \lambda_l} \prod_{1 \le j < k \le N}
 (\lambda_k - \lambda_j)^2 \int_{U \in {\rm U}(N)} e^{- {\rm Tr} \, (U \Lambda U^\dagger (\Sigma^{-1} - \mathbb I_N))}
 \, (U^\dagger d U).
   \end{equation}
   
   For $A,B$ Hermitian matrices of size $N \times N$, the HCIZ matrix integral (named after Harish-Chandra
   \cite{HC57}, and Itzykson and Zuber \cite{IZ80})
   \begin{equation}\label{HCIZ}
\int_U \exp (U^\dagger A U B) \, [U^\dagger dU] = {\prod_{j=1}^N} \Gamma(j) \,
{\det [ e^{ a_j b_k} ]_{j,k=1}^N \over \Delta_N(a)  \Delta_N(b)},
\end{equation}
where $[U^\dagger dU]$ denotes the normalised Haar measure for $U(n)$, and for an array $x = (x_1,\dots, x_N)$,
$\Delta_N(x) := \prod_{1 \le j < k \le N} (x_k - x_j)$.
Application of (\ref{HCIZ}) allows the matrix integral in (\ref{4.1b}) to be calculated.

 \begin{proposition}\label{p3.2}
 In the case that $\Sigma$ is given by (\ref{3.1b+}) the PDF (\ref{4.1b}) of $\tilde{W}$ simplifies to be
 proportional to 
  \begin{equation}\label{4.1c} 
 \prod_{l=1}^N \lambda_l^{n-N} e^{- \lambda_l} \prod_{1 \le j < k \le N}
 (\lambda_k - \lambda_j) \det \Big [ [\lambda_j^{k-1}]_{j=1,\dots, N \atop k=1,\dots,N-1} \: [e^{c \lambda_j}]_{j=1}^N \Big ],
 \end{equation}
 where $c:=1-1/b$.
 \end{proposition}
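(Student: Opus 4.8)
The plan is to evaluate the $U(N)$ matrix integral in (\ref{4.1b}) by means of the HCIZ identity (\ref{HCIZ}), after first recasting it into the standard form and then dealing with the degeneracy caused by the rank-one nature of the perturbation.

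First I would put the integral into HCIZ form. Using the cyclic invariance of the trace together with the invariance of Haar measure under $U \mapsto U^\dagger$,
$$
\int_{U \in {\rm U}(N)} e^{- {\rm Tr}(U \Lambda U^\dagger (\Sigma^{-1} - \mathbb I_N))} \, (U^\dagger dU)
= \int_{U \in {\rm U}(N)} e^{{\rm Tr}(U^\dagger \Lambda U (\mathbb I_N - \Sigma^{-1}))} \, (U^\dagger dU),
$$
which is the left side of (\ref{HCIZ}) with $A = \Lambda$ and $B = \mathbb I_N - \Sigma^{-1}$. For $\Sigma$ as in (\ref{3.1b+}) we have $\mathbb I_N - \Sigma^{-1} = {\rm diag}(c,0,\dots,0)$ with $c = 1 - 1/b$, so $B$ has the eigenvalue $c$ once and the eigenvalue $0$ with multiplicity $N-1$.

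The substance of the argument is then a confluent limit of (\ref{HCIZ}): since $B$ has coincident eigenvalues, the right side of (\ref{HCIZ}) is a literal $0/0$, so I would replace the $N-1$ vanishing eigenvalues of $B$ by distinct values $\epsilon_2,\dots,\epsilon_N$, apply (\ref{HCIZ}) with $a = (\lambda_1,\dots,\lambda_N)$ (generically distinct) and $b = (c,\epsilon_2,\dots,\epsilon_N)$, and let $\epsilon_k \to 0$. Here $\Delta_N(b)$ factorises as $\prod_{k=2}^N(\epsilon_k - c)\,\Delta_{N-1}(\epsilon_2,\dots,\epsilon_N)$, the first factor tending to $(-c)^{N-1}$, while in the numerator the columns $[e^{\epsilon_k \lambda_j}]_j = \big[\sum_{m \ge 0}\epsilon_k^m \lambda_j^m/m!\big]_j$ undergo the standard confluent-Vandermonde collapse: after cancelling $\Delta_{N-1}(\epsilon_2,\dots,\epsilon_N)$ and letting $\epsilon \to 0$, the $k$-th of these columns ($k=2,\dots,N$) is replaced by $[\lambda_j^{k-2}/(k-2)!]_j$, while the column for $b_1 = c$ stays $[e^{c\lambda_j}]_j$. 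Collecting the $\Lambda$-independent constants — a sign from moving the exponential column to the last position, the factorials $\prod_{m=0}^{N-2} m! = \prod_{j=1}^{N-1}\Gamma(j)$, and $(-c)^{-(N-1)}$, noting that $\prod_{j=1}^N\Gamma(j)/\prod_{j=1}^{N-1}\Gamma(j) = \Gamma(N)$ — this shows the matrix integral in (\ref{4.1b}) is proportional to $\Delta_N(\lambda)^{-1}\det\big[[\lambda_j^{k-1}]_{j=1,\dots,N \atop k=1,\dots,N-1}\ [e^{c\lambda_j}]_{j=1}^N\big]$, with $\Delta_N(\lambda) = \prod_{1 \le j < k \le N}(\lambda_k - \lambda_j)$.

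Finally I would substitute this back into (\ref{4.1b}): the denominator $\Delta_N(a) = \Delta_N(\lambda)$ from (\ref{HCIZ}) cancels one of the two factors $\Delta_N(\lambda)$ coming from the eigenvalue Jacobian $\prod(\lambda_k-\lambda_j)^2$, leaving a single such factor, and the prefactor $(\det\Sigma)^{-1/2} = b^{-1/2}$ together with all the constants above is absorbed into the overall proportionality, giving exactly (\ref{4.1c}). The only non-routine step is the confluent limit of the HCIZ determinant ratio: everything is entire in the $\epsilon_k$, so interchanging the limit with the (compact, in $U$) integral is harmless, but the bookkeeping of the Vandermonde collapse and of the accompanying constants is where the care lies. (An alternative that avoids the limit: since $\mathbb I_N - \Sigma^{-1}$ is rank one, the integral equals $\mathbb E\big[\exp\big(c\sum_j \lambda_j |u_j|^2\big)\big]$ over a Haar-random unit vector $(u_1,\dots,u_N) \in \mathbb C^N$, i.e.\ a Dirichlet integral equal to the divided difference of $x \mapsto e^{cx}$ at $\lambda_1,\dots,\lambda_N$, which is then rewritten as the same determinant ratio; but staying within (\ref{HCIZ}) keeps the argument uniform with the rest of the section.)
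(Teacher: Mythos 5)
Your proposal is correct and takes essentially the same route as the paper: identify the group integral in (\ref{4.1b}) as an HCIZ integral with $B$ of rank one, then resolve the $0/0$ from the $(N-1)$-fold degenerate eigenvalue by a confluent Vandermonde limit, giving the ratio $\Delta_N(\lambda)^{-1}\det[\,[\lambda_j^{k-1}]\;[e^{c\lambda_j}]\,]$ which cancels one Vandermonde factor from the Jacobian. The only differences are presentational — you carry out the sign adjustment in the exponent and the column-reduction bookkeeping explicitly (and note the rank-one Dirichlet-integral alternative), whereas the paper states the confluent-limit formula and leaves those details to the reader.
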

 
 \begin{proof}
 Write the eigenvalues of $(\Sigma^{-1} - \mathbb I_N)$ as $c_1,c_2,\dots,c_N$. We then see that (\ref{4.1c}) follows from
 (\ref{4.1b}) by making use of (\ref{HCIZ}) and the limit formula
 $$
 \lim_{c_1,\dots,c_{N-1} \to 0} 
 {\det [ e^{ \lambda_j c_k} ]_{j,k=1}^N \over   \Delta_n(c)} \propto {1 \over c_N^{N-1}}
 \det \Big [ [\lambda_j^{k-1}]_{j=1,\dots, N \atop k=1,\dots,N-1} \: [e^{c_N \lambda_j}]_{j=1}^N \Big ].
 $$
 This limit formula in turn follows by taking the limits in order $c_k \to 0$, $(k=1,\dots,N-1)$ applied to column
 $k$ of the determinant, after first subtracting multiplies of the limiting value of the earlier columns so that the
 first $(k-1)$ terms of the power series expansion have been eliminated.
 \end{proof}
 
 \begin{remark}\label{R3.3}
 The above working implies that for $B$ of rank $1$, with its nonzero eigenvalue equal to $b$,
   \begin{equation}\label{HCIZ1}
\int_U \exp (U^\dagger A U B) \, [U^\dagger dU] \propto
{\det [ [a_j^{k-1}]_{j=1,\dots,N \atop k=1,\dots,N-1}   [e^{ a_j b}]_{j=1}^N ] \over \Delta_n(a)  b^{N-1}}.
\end{equation}
On the other hand (\ref{A4+}) tells us that this same matrix integral is proportional to
\begin{equation}\label{A4+d}   
\int_{-\infty-ic}^{\infty -ic} e^{it} \prod_{j=1}^N \Big ( i t - b a_j \Big )^{-1} \, dt.
 \end{equation}
 Indeed one can check that computing this contour integral using residues gives the same functional
 form as expanding the determinant in (\ref{HCIZ1}) by the final column and simplifying using
 the Vandermonde determinant formula. It follows that in the case $\beta = 2$ the eigenvalue PDF
 (\ref{A3+}) for the perturbed GUE can also be written, up to proportionality, as
 \begin{equation}\label{A3+1}   
\prod_{j=1}^N e^{- \beta N \lambda_j^2} \prod_{1 \le j < k \le N} ( \lambda_j - \lambda_k)
 \det \Big [ [\lambda_j^{k-1}]_{j=1,\dots, N \atop k=1,\dots,N-1} \: [e^{4 \alpha N  \lambda_j}]_{j=1}^N \Big ].
\end{equation} 
 \end{remark}
 
 \subsection{A joint eigenvalue PDF}\label{S3.2a}
 Starting from (\ref{3.1c}), it is possible to compute the joint eigenvalue PDF for the eigenvalues
 of both $X_1 X_1^\dagger$ and $X \Sigma X^\dagger$, following \cite[\S 3.1]{Fo13}.
 Denoting the $N-1$ nonzero eigenvalues of the $n \times n$ matrix
 $X_1 X_1^\dagger$ by $\{ \mu_j \}_{j=1}^{N-1}$ and ordered as in
 (\ref{1.8d}), from the derivation of (\ref{1.8c}) we have that the equation determining the
 eigenvalues of $X \Sigma X^\dagger$ is
 \begin{equation}\label{1.8c+}
0 = 1   - b \bigg (   {u_0 \over \lambda} 
+   \sum_{j=1}^{N - 1}{u_j \over \lambda -  \mu_j } \bigg ), \qquad u_0 = \sum_{j=N}^{n-N-1}  |x^{(j)}|^2 ,
\: \: u_j =  |x ^{(j)}|^2. 
 \end{equation}
Since $\mathbf x$ is a standard complex vector, we have that $u_0$ is distributed as $\Gamma[n-N-1,1]$,
and each $u_j$ as $\Gamma[1,1]$.
 The joint distribution of $ \{u_j\}_{j=0}^{N-1}$ is
 therefore proportional to 
  \begin{equation}\label{1.8c1}
 u_0^{n-N-1} e^{- u_0} \prod_{l=1}^{N-1} e^{- u_l}.
  \end{equation} 
  Regarding these variables as the residues in the random rational function specified by  the RHS of
  (\ref{1.8c+}), we have that $ \{ \mu_j \}_{j=1}^{N-1}$ are the poles, while the eigenvalues of
 $X \Sigma X^\dagger$ are the zeros. Let the latter be denoted $\{\lambda_j\}_{j=1}^N$, which we know
 must be interlaced as in (\ref{1.8e}). In terms of the zeros and the poles we have 
   \begin{equation}\label{1.8c2}
 1 - b \bigg (   {u_0 \over \lambda} 
+   \sum_{j=1}^{N - 1}{u_j \over \lambda -  \mu_j } \bigg )
 = {\prod_{l=1}^N (\lambda - \lambda_l) \over   \lambda \prod_{l=1}^{N-1} (\lambda - \mu_l)}.
 \end{equation}  
 For given $\mu_j$, computing the Jacobian for the change of variables from
 residues to the zeros gives a particular conditional PDF.
 
 \begin{proposition}
 Let the PDF for $\{u_j\}$ to given by (\ref{1.8c1}). The PDF for  $\{\lambda_j\}$ with $\{ \mu_j \}$ given is proportional to
  \begin{equation}\label{1.8c3}
 \prod_{l=1}^{N-1} \mu_l^{-n + N - 1} e^{\mu_l/b}  \prod_{k=1}^N \lambda_k^{n - N} e^{- \lambda_k/b} 
{ \prod_{1 \le j < k \le N} ( \lambda_j - \lambda_k) \over  \prod_{1 \le j < k \le N - 1} ( \mu_j - \mu_k)}
   \end{equation} 
   supported on (\ref{1.8e}) with $\mu_N = 0$.
 \end{proposition}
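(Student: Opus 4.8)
The plan is to treat the passage from the residue variables $\{u_j\}_{j=0}^{N-1}$ to the zeros $\{\lambda_k\}_{k=1}^{N}$ of the random rational function in (\ref{1.8c+}) as an explicit change of variables, with $\{\mu_j\}_{j=1}^{N-1}$ held fixed; this is legitimate because $\mathbf x$ is independent of $X_1$, so the joint law (\ref{1.8c1}) of the $\{u_j\}$ holds conditionally on $\{\mu_j\}$. Writing $m_0:=0$ and $m_j:=\mu_j$, equation (\ref{1.8c2}) displays $R(\lambda):=1-b\big(u_0/\lambda+\sum_{j=1}^{N-1}u_j/(\lambda-\mu_j)\big)$ in factored form, with simple poles at $m_0,\dots,m_{N-1}$ and simple zeros at $\lambda_1,\dots,\lambda_N$. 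Comparing residues at the pole $m_j$ on the two sides inverts the map,
\[
b u_j=-\underset{\lambda=m_j}{\mathrm{Res}}\,R(\lambda)=-\frac{\prod_{k=1}^{N}(m_j-\lambda_k)}{\prod_{i\ne j}(m_j-m_i)}\qquad(j=0,1,\dots,N-1),
\]
so in particular $b u_0=\prod_k\lambda_k/\prod_{i=1}^{N-1}\mu_i$; and matching the coefficient of $1/\lambda$ as $\lambda\to\infty$ in (\ref{1.8c2}) gives $b\sum_{j=0}^{N-1}u_j=\sum_{k=1}^{N}\lambda_k-\sum_{i=1}^{N-1}\mu_i$, which turns the exponential weight $e^{-u_0}\prod_{l=1}^{N-1}e^{-u_l}$ of (\ref{1.8c1}) into $\prod_k e^{-\lambda_k/b}\prod_i e^{\mu_i/b}$. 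A sketch of the graph of $R$ on $\mathbb R$, exactly as in the proof of Proposition \ref{P2.1}, shows that positivity of all the $u_j$ is equivalent to the interlacing (\ref{1.8e}) with $\mu_N=0$, which fixes the support; the residue formulas also make the sign bookkeeping on the $u_j$ transparent.

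For the Jacobian, differentiating the residue formula gives $\partial(b u_j)/\partial\lambda_k=\prod_{i\ne j}(m_j-m_i)^{-1}\prod_{\ell\ne k}(m_j-\lambda_\ell)$, so the matrix $[\partial u_j/\partial\lambda_k]$ factors, up to the constant $b^{-N}$, as $\mathrm{diag}_j\!\big(\prod_\ell(m_j-\lambda_\ell)/\prod_{i\ne j}(m_j-m_i)\big)$ times the Cauchy matrix $\big[(m_j-\lambda_k)^{-1}\big]_{0\le j\le N-1,\,1\le k\le N}$. The Cauchy determinant equals $\prod_{j<j'}(m_{j'}-m_j)\prod_{k<k'}(\lambda_k-\lambda_{k'})/\prod_{j,k}(m_j-\lambda_k)$, and multiplying by the diagonal factors cancels $\prod_{j,k}(m_j-\lambda_k)$ identically, collapsing the pole-dependent part to $1/\prod_{j<j'}(m_j-m_{j'})$. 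Splitting off the $m_0=0$ row via $\prod_{0\le j<j'\le N-1}(m_j-m_{j'})=(-1)^{N-1}\big(\prod_{i=1}^{N-1}\mu_i\big)\prod_{1\le j<j'\le N-1}(\mu_j-\mu_{j'})$ yields, up to a sign and a power of $b$,
\[
\Big|\det\big[\partial u_j/\partial\lambda_k\big]\Big|\ \propto\ \frac{\prod_{1\le j<k\le N}(\lambda_j-\lambda_k)}{\big(\prod_{i=1}^{N-1}\mu_i\big)\,\prod_{1\le j<k\le N-1}(\mu_j-\mu_k)}.
\]

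Finally I would substitute into (\ref{1.8c1}): the prefactor involving $u_0$ becomes a power of $\prod_k\lambda_k$ over the same power of $\prod_i\mu_i$, the exponentials become $\prod_k e^{-\lambda_k/b}\prod_i e^{\mu_i/b}$, and the Jacobian supplies the ratio of Vandermonde products together with one further factor $\prod_i\mu_i^{-1}$; collecting the powers of $\prod_k\lambda_k$ and of $\prod_i\mu_i$ then reproduces (\ref{1.8c3}) on the interlacing region (\ref{1.8e}) with $\mu_N=0$. One should track the $b$-powers (irrelevant to the stated proportionality) and the exponent attached to $u_0$ — here the relevant parameter is the dimension $n-N+1$ of the kernel of $X_1X_1^\dagger$, on which $u_0$ is the squared projection length of $\mathbf x$, so that the $\lambda_k$ acquire the power $n-N$ and the $\mu_l$ the power $-n+N-1$ as in (\ref{1.8c3}). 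The main obstacle is the Jacobian evaluation — in particular treating the variable $u_0$, whose associated ``pole'' sits at the origin, on the same footing as the $u_j$, and carrying the cancellation through cleanly; recognising the Cauchy-matrix (Dixon--Anderson) structure is what makes this step routine rather than painful.
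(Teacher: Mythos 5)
Your proof is correct and follows essentially the same route as the paper's: residues at the poles (with $\mu_N=0$, your $m_0=0$) to express the $u_j$, matching the $1/\lambda$ coefficient to produce the exponential factors, and recognising the Cauchy/double-alternant structure to evaluate the Jacobian. You were also right to re-derive the exponent on $u_0$ from the kernel dimension $n-N+1$ of $X_1X_1^\dagger$, so that its marginal is $\propto u_0^{n-N}e^{-u_0}$; the exponent printed in (\ref{1.8c1}) is off by one, and it is your corrected value that makes (\ref{1.8c3}), and the joint density (\ref{1.8c3-}) that follows from it, come out consistently with the Laguerre parameter $a=n-N+1$ for $\{\mu_j\}$.
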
     
 
 \begin{proof}
 Expanding both sides of (\ref{1.8c2}) in powers of $1/\lambda$ and equating the coefficient of $1/\lambda$ shows
 $$
  b  \sum_{j=0}^{N-1} u_j =  \sum_{j=1}^N  \lambda_j -    \sum_{j=1}^{N-1} \mu_j,
 $$
 which when substituted in (\ref{1.8c1}) accounts for the exponential term in (\ref{1.8c3}). It remains to compute
 the Jacobian. For this purpose, note from  (\ref{1.8c2}) by
 computing residues that
 $$
b u_0   = { \prod_{l=1}^N \lambda_l \over \prod_{l=1}^{N-1} \mu_l}, \qquad
- b u_j = {\prod_{l=1}^N (\mu_j - \lambda_l) \over   \prod_{l=1, l \ne j }^N (\mu_j - \mu_l)}.
 $$
 This shows that up to a possible sign, and with $\lambda_N = 0$,
   \begin{equation}\label{1.8c3+}
 \det \bigg [ {\partial u_{j-1} \over \partial \lambda_k} \bigg ]_{j,k=1}^N = {1 \over b^N} {\prod_{j,l=1}^N (\mu_j - \lambda_l) \over \prod_{j,l=1 \atop l \ne j}^N (\mu_j - \mu_l) }
 \det \Big [ {1 \over \mu_j - \lambda_k} \Big ]_{j,k=1}^N.
  \end{equation} 
 The determinant on the RHS is known of the Cauchy double alternant and has an evaluation in terms of products
 (see e.g.~\cite[Eq.~(4.33)]{Fo10}) which implies the remaining terms in (\ref{1.8c3}).
 \end{proof}

 In the setting of (\ref{3.1c}) the given eigenvalues $\{\mu_j\}$ in (\ref{1.8c3}) have the PDF (\ref{3.1}) with 
 $N$ replaced by $N-1$ and $a = n - N + 1$.
 Hence the joint PDF of the eigenvalues of the matrices $X_1 X_1^\dagger$ and $X \Sigma X^\dagger$ in (\ref{3.1c})
 is proportional to
 \begin{equation}\label{1.8c3-}
 \prod_{l=1}^{N-1}e^{-\mu_l(1 - 1/b)}  \prod_{k=1}^N \lambda_k^{n - N} e^{-\lambda_k/b} 
\prod_{1 \le j < k \le N} ( \lambda_j - \lambda_k)   \prod_{1 \le j < k \le N - 1} ( \mu_j - \mu_k)
   \end{equation} 
   again with the requirement of the interlacing (\ref{1.8e}) with $\lambda_N = 0$.
 The latter
  the ordering of $\{ \mu_j \}_{j=1}^N$  (\ref{1.8d}), the function of  $\{ \lambda_j \}_{j=1}^N$ defined by the interlacing when viewed
 as an indicator function has the determinantal form
  \begin{equation}\label{1.8c4}
  \det [ \chi_{\lambda_j - \mu_k > 0} ]_{j,k=1}^N  \qquad     \chi_A = \bigg \{ \begin{array}{cc}  1 & A \, {\rm true} \\
  0 & A \, {\rm false}. \end{array} 
  \end{equation}  
 Including this as a factor in (\ref{1.8c3-}) allows the PDF for $\{\mu_j\}$ to be computed by integrating
 each $\mu_j$ over $\mathbb R^+$. These integrations can be done can be done with the aid
 of a minor variant of Andr\'eief's identity (see \cite{Fo18}) which shows
 \begin{multline}\label{1.8c5} 
 {1 \over (N-1)!} \int_0^\infty d \mu_1 \cdots  \int_0^\infty d \mu_{N-1} \, \prod_{l=1}^{N -1} e^{-(1-1/b)\mu_l}
    \prod_{1 \le j < k \le N-1}  ( \mu_k - \mu_j) 
     \det [ 
    \chi_{\lambda_j - \mu_k > 0} ]_{j,k=1}^N \\ = \det \bigg [ \Big [ \int_0^{\lambda_j}   \mu^{k-1}  e^{-\mu(1-1/b)}  \, d \mu \Big  ]_{j=1,\dots,N \atop k = 1,\dots, N - 1} \: [1]_{j=1,\dots,N}  \bigg ] \\ \propto \prod_{j=1}^N e^{- \mu (1 - 1/b) \lambda_j}
    \det \bigg [ \Big [  [\lambda_j^{k-1}]_{j=1,\dots,N \atop k=1,\dots,N-1} \: [ e^{\mu (1 - 1/b) \lambda_j} ]_{j=1,\dots,N} \bigg ],
     \end{multline}
     where the final expression follows using integration by parts and elementary column operations.
     Replacing the terms dependent on $\{ \mu_j \}$ in (\ref{1.8c3-}) by the final expression in (\ref{1.8c5}) reclaims
     (\ref{4.1c}) for the marginal PDF of  $\{\lambda_j\}$.
     
     \begin{remark}
     Consider the setting of (\ref{3.1c}) with $X$ of size $N \times (N + 1)$, $\Sigma$ of size $(N + 1) \times (N + 1)$ and $X_1$
     of size $N \times N$. Denote the eigenvalues of $X \Sigma X^\dagger$  $(X_1 X_1^\dagger)$ by $\{\lambda_j \}$ and
  ($\{ \mu_j  \})$. Repeating the considerations which lead to (\ref{1.8c3-}) shows that the joint eigenvalue PDF is proportional to
   \begin{equation}\label{1.8c3f}
 \prod_{l=1}^{N-1}e^{-\mu_l(1 - 1/b)}  \prod_{k=1}^Ne^{-\lambda_k/b} 
\prod_{1 \le j < k \le N} ( \lambda_j - \lambda_k)  ( \mu_j - \mu_k)
   \end{equation} 
   subject to the interlacing  (\ref{1.8e}). This PDF first appeared in the study of probabilistic models
   related to the longest increasing subsequence of a random permutation
   \cite{BR01a}; see also \cite{FR02b}.
   \end{remark} 
 
 \subsection{Correlation kernel for the soft edge critical regime}\label{S3.3}
 Let $p_N(x_1,\dots,x_N)$ denote an eigenvalue PDF supported on $I \subset \mathbb R$. The $k$-point correlation
 function $\rho_{(k)}(x_1,\dots,x_k)$ is specified in terms of $p_N$ by
 $$
 \rho_{(k)}(x_1,\dots,x_k) = {N! \over (N-k)!} \int_I dx_{k+1} \cdots \int_I dx_N \, p_N(x_1,\dots,x_N).
 $$
 Note that the case $k=1$ corresponds to the eigenvalue density. The eigenvalue PDFs (\ref{4.1c})
 and (\ref{A3+1}) correspond to a determinantal point processes. This  means that $ \rho_{(k)}$ can be
 expressed in determinant form
   \begin{equation}\label{B0}
 \rho_{(k)}(x_1,\dots,x_k) =  \det [ K_N(x_j,x_l) ]_{j,l=1,\dots,k},
  \end{equation} 
 where $K_N(x,y)$ --- referred to as the correlation kernel --- can be expressed in terms of certain
 orthogonal polynomials and special functions. In keeping with the focus of this section on perturbation
 of  the LUE, we consider (\ref{4.1c}). Relevant for this is the particular non-symmetric Laguerre polynomial
 kernel
  \begin{equation}\label{B1}
  K_n^a(x,y) = y^a e^{-y} \sum_{p=0}^{n-1} {(p+a)! \over p!} L_p^a(x) L_p^a(y).
 \end{equation}  
 It is a standard result in random matrix theory that substituting (\ref{B1}) with $n=N$ in (\ref{B0}) gives
 $ \rho_{(k)}$ for the unperturbed LUE (\ref{3.1}); see \cite[\S 5.1.2]{Fo10}. Also relevant are the so-called incomplete
 multiple Laguerre functions of type I and II \cite{BK04,DF06}
  \begin{equation}\label{B2}
\tilde{\Lambda}^{(1)}(x)  = 
\int_{{\mathcal C}_{\{0,-c\}}}
{e^{-xz} (1 + z)^{N+a} \over z^{N-1} (z + c)} \,
{dz \over 2 \pi i}, \quad
{\Lambda}^{(1)}(x)  = 
\int_{{\mathcal C}_{\{-1\}}} {e^{xz} z^{N-r}\over
(1 + z)^{N+a} } \, {dz \over 2 \pi i}. \\
\end{equation}
Here ${\mathcal C}_{\{0,-c\}}$, ${\mathcal C}_{\{-1\}}$ are simple contours
encircling the points $\{0,-c\}$ and $\{-1\}$ respectively.

\begin{proposition}
The $k$-point correlation function for the PDF (\ref{4.1c}) with $n-N = a$ is given by (\ref{B0}) with
  \begin{equation}\label{B3}
  K_N(x,y) =  K_{N-1}^{a+1}(x,y)  + \tilde{\Lambda}^{(1)}(x)  {\Lambda}^{(1)}(y).
 \end{equation}
 \end{proposition}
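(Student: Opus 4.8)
The plan is to exploit the fact that the PDF (\ref{4.1c}) is a \emph{biorthogonal ensemble} of a very special, integrable type. Absorbing the weight into one of the two determinantal factors, it reads $Z_N^{-1}\det[\phi_i(\lambda_j)]_{i,j=1}^N\det[\psi_i(\lambda_j)]_{i,j=1}^N$ with $\phi_i(x)=x^{i-1}$ and $\psi_i(x)=x^a e^{-x}x^{i-1}$ for $i\leq N-1$, $\psi_N(x)=x^a e^{-x}e^{cx}=x^a e^{-x/b}$; this is a type-II multiple Laguerre ensemble for the pair of weights $x^a e^{-x}$, $x^a e^{-x/b}$ with multi-index $(N-1,1)$. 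Equivalently --- and this is the viewpoint I would lean on, since the preceding derivation supplies it for free --- (\ref{4.1c}) is the $\lambda$-marginal of the two-level process (\ref{1.8c3-}) with interlacing determinant (\ref{1.8c4}), whose $\mu$-level is the Laguerre unitary ensemble with $N-1$ particles and parameter $a+1$ (as recorded after (\ref{1.8c3-}), or obtained directly by integrating out the $\lambda_j$ with Andr\'eief's identity), hence has kernel $K_{N-1}^{a+1}$. Either description gives, by the Eynard--Mehta/Borodin theorem (equivalently the Daems--Kuijlaars multiple-orthogonal-polynomial kernel formula), that the eigenvalue process is determinantal, with correlation kernel built from the inverse of the moment matrix $\bigl[\int_0^\infty \phi_i(x)\psi_j(x)\,dx\bigr]$; checking this matrix is nonsingular is the routine first step, after which (\ref{B0}) is automatic.

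To extract the explicit form (\ref{B3}) I would perform the biorthogonalisation in the basis adapted to the weight: replace the monomials $\{\phi_i\}$ by the Laguerre polynomials $\{L_{i-1}^{a+1}\}$ (a degree-preserving change, harmless up to an overall constant). The moment matrix then becomes essentially block triangular, since the first $N-1$ of the $\psi_i$ pair against $\{L_k^{a+1}\}_{k=0}^{N-2}$ through $\int_0^\infty x^{a+1}e^{-x}L_p^{a+1}(x)L_q^{a+1}(x)\,dx\propto\delta_{pq}$, and only the last ($e^{-x/b}$) column introduces off-diagonal coupling. Inverting this structure, the kernel splits into a sum of $N-1$ ``classical'' terms that reassemble into $K_{N-1}^{a+1}(x,y)$ exactly as in (\ref{B1}), plus a single rank-$1$ remainder of the form $(\text{function of }x)\times(\text{function of }y)$; the rank is $1$ precisely because there is one extra eigenvalue beyond the $N-1$ that diagonalise against the weight $x^a e^{-x}$. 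It then remains to identify the two factors of this remainder with the incomplete multiple Laguerre functions (\ref{B2}). For that I would rewrite the relevant finite Laguerre sums and the dual functions produced by the inverse moment matrix using the generating function $\sum_{p\geq0}L_p^{a}(x)t^p=(1-t)^{-a-1}e^{-xt/(1-t)}$ and the Rodrigues formula $L_p^{a}(x)=\frac{x^{-a}e^x}{p!}\frac{d^p}{dx^p}\bigl(x^{p+a}e^{-x}\bigr)$, convert them to contour integrals, and collapse the resulting geometric-type series; tracking the poles of the summed integrand --- at $z=0$ and $z=-c$ for one factor, at $z=-1$ for the other --- produces exactly the contours $\mathcal{C}_{\{0,-c\}}$, $\mathcal{C}_{\{-1\}}$, the exponents $N+a$, $N-1$, $N-r$, and the integrands of (\ref{B2}).

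The hard part will be this last step: the closed-form evaluation of the Christoffel--Darboux-type sums and of the dual functions as the contour integrals $\tilde\Lambda^{(1)}$, $\Lambda^{(1)}$, with the contours and all exponents and normalising constants pinned down correctly. A secondary subtlety is the bookkeeping of the rank-$1$ correction inside the Eynard--Mehta formula: because the $\mu$-level carries $N-1$ particles and the $\lambda$-level carries $N$, one must treat the ``virtual'' particle at $\mu_N=0$ --- equivalently the all-ones last column in (\ref{1.8c4}) --- with care, so that it contributes exactly one extra term $\tilde\Lambda^{(1)}(x)\Lambda^{(1)}(y)$ and leaves the $K_{N-1}^{a+1}$ part untouched. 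Once the two factors are matched to (\ref{B2}), assembling the pieces yields (\ref{B3}).
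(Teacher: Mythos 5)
The paper does not actually prove this proposition: it states the result and, in the Remark immediately following, attributes it to \cite{BBP05} and \cite{DF06}. Your sketch follows the general scheme of those references (biorthogonal ensemble / Eynard--Mehta, combined with the two-level interlacing process that \S\ref{S3.2a} supplies), so the strategy is appropriate. However, there is a concrete error in the step you rely on to produce the Laguerre parameter $a+1$.

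You write that, after replacing the monomial basis by $\{L_{i-1}^{a+1}\}$, the moment matrix diagonalises ``through $\int_0^\infty x^{a+1}e^{-x}L_p^{a+1}(x)L_q^{a+1}(x)\,dx \propto \delta_{pq}$''. But the weight in (\ref{4.1c}) is $\lambda^{a}e^{-\lambda}$, not $\lambda^{a+1}e^{-\lambda}$: with your own split $\phi_i(x)=x^{i-1}$ and $\psi_i(x)=x^a e^{-x}x^{i-1}$ ($i\le N-1$), the moment-matrix entries are $\int_0^\infty x^{i+j+a-2}e^{-x}\,dx$. Rotating both families into Laguerre polynomials still leaves the weight $x^a e^{-x}$ under the integral; the diagonalising basis is therefore $L^a$, not $L^{a+1}$, and the ``classical'' block reassembles into $K_{N-1}^{a}$, not $K_{N-1}^{a+1}$. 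The shift $a\mapsto a+1$ in (\ref{B3}) is not a free consequence of choosing the $L^{a+1}$ basis; it is a genuine feature of the decomposition, and the biorthogonalisation as you set it up cannot deliver it without further reorganisation (e.g.\ a Christoffel-type rank-one move from $L^a$ to $L^{a+1}$, which must then be reconciled with the rank-one piece from $\psi_N$).

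The cleaner route to the shift is precisely the one you raised and then set aside: view (\ref{4.1c}) as the $\lambda$-marginal of the interlacing two-level process (\ref{1.8c3-})--(\ref{1.8c4}). There the $\mu$-level is, by construction, LUE with $N-1$ particles and Laguerre parameter $n-(N-1)=a+1$ (one fewer particle raises the exponent by one), which is exactly where $K_{N-1}^{a+1}$ enters on the nose. The Eynard--Mehta/Borodin formula for the two-level process then writes the $\lambda$-level kernel as the $\mu$-level kernel transported through the interlacing transition, plus a one-dimensional correction from the extra ($N$-th) $\lambda$-particle (equivalently the all-ones last column in (\ref{1.8c4})); that correction is the rank-one term $\tilde\Lambda^{(1)}(x)\Lambda^{(1)}(y)$. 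Your final paragraph --- generating function, Rodrigues formula, collapsing finite sums into contour integrals to pin down $\mathcal{C}_{\{0,-c\}}$, $\mathcal{C}_{\{-1\}}$ and the exponents of (\ref{B2}) --- is the right kind of computation, but it should be fed the $\mu$-level data rather than the misbalanced biorthogonalisation.
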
  
  
 \begin{remark}
 This kernel is a special case of the correlation kernel for the PDF (\ref{4.1c}) with the determinant
 factor therein replaced by $\det [ e^{c_k \lambda_j} ]_{j,k=1,\dots,N}$. A double contour form of
 the determinant in this more general case was first given in \cite{BBP05}, and rederived in the
 context of multiple orthogonal polynomials in \cite{DF06}. Choosing all but $r$ of the $\{c_k\}$
 equal to zero was then shown in the latter reference to allow an evaluation in terms of
the nonsymmetric Laguerre kernel  $K_{N-r}^{a+r}(x,y)$, plus a sum of $r$ terms involving
incomplete
 multiple Laguerre functions of type I and II, which for $r=1$ is (\ref{B3}).
 \end{remark}
 
 Common to both (\ref{4.1c})
 and (\ref{A3+1})  is that there is a tuning of the parameters $c$ and $\alpha$ respectively so that the
 statistical state corresponding to the critical regime --- recall \S \ref{S2.4} --- is identical for both.
 For (\ref{A3+1})  the required scaling is given by (\ref{C1}) while $w,x_1$ for  (\ref{4.1c})
 with $n,N \to \infty$, $n/N = \gamma \ge 1$ fixed are specified by
   \begin{equation}\label{B4}
{N^{1/3} \over (1 + \sqrt{1/\gamma})^{2/3}} \Big ( 1 - \gamma {c \over c + 1} \Big ) \to w, \qquad
{N^{2/3} \sqrt{1/\gamma} \over (1 + \sqrt{1/\gamma})^{4/3}} \Big (
{\lambda_1 \over N} - (1 + \sqrt{\gamma})^2 \Big ) \to x_1.
 \end{equation}
 Applying the scaling (\ref{B4}) to (\ref{B3}), or the scaling (\ref{C1}) to the analogue of (\ref{B3})
 for the PDF (\ref{A3+1}) gives a functional form involving Airy functions \cite{BBP05,DF06}.
 
 \begin{proposition}
 The $k$-point correlation function for the PDF (\ref{4.1c}) with the scaling (\ref{B4}) is
 given by (\ref{B0}) with correlation kernel
  \begin{equation}\label{B5}
  K^{\rm soft,c}(x,y;w) =  K^{\rm soft}(x,y)  + {\rm Ai}(y) \int_{-\infty}^x e^{-w (x - t)} {\rm Ai}(t) \, dt,
 \end{equation}
 where
  \begin{equation}\label{B5a}
K^{\rm soft}(x,y)  = {{\rm Ai}(x)   {\rm Ai}'(y)  - {\rm Ai}(y)   {\rm Ai}'(x)   \over x - y}.
 \end{equation}
  \end{proposition}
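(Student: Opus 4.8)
The plan is to establish (\ref{B5}) by a Laplace/steepest-descent asymptotic analysis of the kernel (\ref{B3}) under the scaling (\ref{B4}), treating its two summands separately. The first summand $K^{a+1}_{N-1}$ is the non-symmetric Laguerre polynomial kernel, and its behaviour near the upper (soft) spectral edge $\lambda\approx N(1+\sqrt{\gamma})^2$ --- after the $O(N^{-1/3})$ rescaling of the local variable implicit in the second relation of (\ref{B4}) and including the associated Jacobian --- is the classical convergence to the Airy kernel $K^{\rm soft}(x,y)$ of (\ref{B5a}); see e.g.~\cite{Fo10}. The replacements $N\mapsto N-1$ and $a\mapsto a+1$ change only subleading corrections to the position and width of the edge, and hence do not affect this limit. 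The substance of the proof is therefore the behaviour of the rank-one correction $\tilde{\Lambda}^{(1)}(x)\,\Lambda^{(1)}(y)$.

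For $\Lambda^{(1)}(y)$ (with $r=1$ and $a=n-N$) I would write the exponentially large part of the integrand in (\ref{B2}) as $e^{N\Psi(z)}$, where the linear-in-$z$ term of $\Psi$ carries the soft-edge value of $\lambda/N$. A short computation shows that at this value the critical point equation $\Psi'(z)=0$ acquires a double root $z_*$, so that $\Psi(z)=\Psi(z_*)+\tfrac{1}{6}\Psi'''(z_*)(z-z_*)^3+\cdots$ near $z_*$. Deforming the contour ${\mathcal C}_{\{-1\}}$ to the steepest-descent path through $z_*$, rescaling $z=z_*+\rho\,N^{-1/3}\zeta$ (with $\rho$ a fixed constant so that the cubic term becomes $\zeta^3/3$), and keeping the local variable at the scale $N^{-1/3}$, the contour integral becomes the Airy integral; hence $\Lambda^{(1)}(y)\sim \kappa_N\,{\rm Ai}(y)$ for an explicit $N$- and $\gamma$-dependent constant $\kappa_N$.

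For $\tilde{\Lambda}^{(1)}(x)$ the exponentially large part of the integrand has the phase $-\Psi(z)$, so the analysis runs through the same coalescing saddle $z_*$, but the integrand now carries the extra simple pole at $z=-c$. The point of the first relation in (\ref{B4}) is precisely that it tunes $c$ so that, in the local variable $\zeta=(z-z_*)/(\rho\,N^{-1/3})$, the relevant pole sits at a fixed point $\zeta_w$ proportional to $w$; equivalently, the pole approaches the saddle at the Airy rate $N^{-1/3}$. After the same rescaling the factor $1/(z+c)$ becomes $N^{1/3}/\rho$ times $1/(\zeta-\zeta_w)$, which compensates the $\rho\,N^{-1/3}$ from $dz$, and the contour integral converges to a constant multiple of $\frac{1}{2\pi i}\int_{\mathcal C}\frac{e^{\zeta^3/3-x\zeta}}{\zeta-\zeta_w}\,d\zeta$, with the path ${\mathcal C}$ passing on the appropriate side of $\zeta_w$. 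Writing $1/(\zeta-\zeta_w)$ as a Laplace integral and using the Airy contour representation of ${\rm Ai}$ identifies this last integral, up to a constant, with $\int_{-\infty}^{x}e^{-w(x-t)}{\rm Ai}(t)\,dt$; the remaining $N$-dependent prefactors are exactly those that cancel $\kappa_N$ together with the rescaling Jacobian, so that the scaled $\tilde{\Lambda}^{(1)}(x)\Lambda^{(1)}(y)$ converges to ${\rm Ai}(y)\int_{-\infty}^{x}e^{-w(x-t)}{\rm Ai}(t)\,dt$, and adding the two limits gives (\ref{B5}).

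I expect the main obstacle to be the uniformity of the estimates and the contour bookkeeping, not the formal saddle computation: one must deform ${\mathcal C}_{\{0,-c\}}$ and ${\mathcal C}_{\{-1\}}$ to genuine steepest-descent paths while keeping the migrating pole on the correct side of the contour throughout the deformation, control the tail contributions away from $z_*$ so that convergence holds locally uniformly in $(x,y,w)$, and track the $N$- and $\gamma$-dependent prefactors so that they cancel between the two incomplete multiple Laguerre functions and against the Jacobian. A route that organises this more transparently, and is the one followed historically, is to start instead from the double-contour representation of the kernel for the PDF (\ref{4.1c}) established in \cite{BBP05} (see also \cite{DF06}), in which the $c$-dependence enters through a single ratio and the pole/saddle interaction governed by (\ref{B4}) is manifest; the same cubic expansion then yields (\ref{B5}) directly.
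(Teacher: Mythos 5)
Your proposal is correct and is essentially the route the paper takes, since the paper does not prove this proposition itself but defers to the cited references \cite{BBP05,DF06}, where exactly the coalescing-saddle analysis you describe is carried out. In particular, you have correctly identified the structural point: under the scaling (\ref{B4}), the pole-free function $\Lambda^{(1)}(y)$ limits to a constant multiple of ${\rm Ai}(y)$, while $\tilde{\Lambda}^{(1)}(x)$ carries the simple pole at $z=-c$ which migrates toward the coalescing saddle at rate $N^{-1/3}$, so that after the cubic rescaling and use of the Laplace representation of $1/(\zeta-\zeta_w)$ together with the Airy contour integral one obtains the term $\int_{-\infty}^{x}e^{-w(x-t)}{\rm Ai}(t)\,dt$. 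Your closing observation that the double-contour representation of the full kernel makes the pole--saddle interaction manifest is also apt: that representation is precisely the starting point of \cite{BBP05}, while the single-contour incomplete multiple Laguerre function decomposition (\ref{B3}) is from \cite{DF06}, and both lead to (\ref{B5}) by the same cubic expansion. The caveats you raise about contour bookkeeping, tail control, and prefactor cancellation are real, but are exactly the technical content supplied in those references.
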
  

 \begin{remark} 1.~The correlation kernel (\ref{B5a}) is well known in random matrix theory
 as specifying the scaled state in the neighbourhood of the largest eigenvalue for the
 GUE and LUE \cite{Fo93a}, and in fact for a much broader class of random matrices relating to
 Hermitian random matrices with complex entries \cite{PS11}.   Note that
 $ K^{\rm soft,c}(x,y;w) \to K^{\rm soft}(x,y)$ as $ w \to \infty$. \\
 2.~The density is given by setting $x=y$ in $K^{\rm soft,c}(x,y;w) $. Using the integral
 $\int_{-\infty}^\infty e^{wt} {\rm Ai}\,(t) \, dt = e^{w^3/3}$ shows
  \begin{equation}\label{B5b}
  \rho_{(1)}^{\rm soft}(x) \mathop{\sim}\limits_{x \to \infty} {\rm Ai}(x) e^{-wx + w^3/3},
   \end{equation}
   which from general considerations (see \cite[\S 3.5]{Fo14}) coincides with the right tail of
   the PDF corresponding to $F_{2,w}(x)$.
  \end{remark}
  
  The fact that the statistical state of the critical regime is a determinantal point process implies
  a formula for the cumulative distribution function $F^{2,w}(s)$ in terms of a 
  Fredholm determinant,
  \begin{equation}\label{C1a}
  F^{2,w}(s) = \det ( \mathbb I - \mathbb K_2^{2,w});
 \end{equation}    
 see e.g.~\cite[\S 9.1]{Fo10} for the general theory.
Here $ \mathbb K_2^{2,w}$ is the integral operator on $(s,\infty)$ with kernel $K^{\rm soft,c}(x,y;w)$.
As made explicit by Bornemann \cite{Bo09}, there are advantages in using the Fredholm 
determinant for a numerical
tabulation rather than the Painlev\'e expression (\ref{fgF}). 
An exception is the case $w=0$. The, according to (\ref{fgF}),  (\ref{fgE}) and (\ref{A6a})
  \begin{equation}\label{C1b}
   F^{2,w}(s) \Big |_{w=0} = \Big ( E_1^{\rm soft}(s) \Big )^2,
   \end{equation}  
   which can be anticipated already at the finite $N$ level \cite[Eq.~(5.8)]{FR02}.
 The significance of this is that $E_1^{\rm soft}(s)$ and the corresponding PDF,
 which correspond to Tracy--Widom $\beta = 1$, are
 now part of standard software. A tabulation of the PDF corresponding to (\ref{C1b}),
 compared against a simulation based on (\ref{Tb}), is 
given in Figure \ref{F5}.

 \begin{figure*}
\centering
\includegraphics[width=0.65\textwidth]{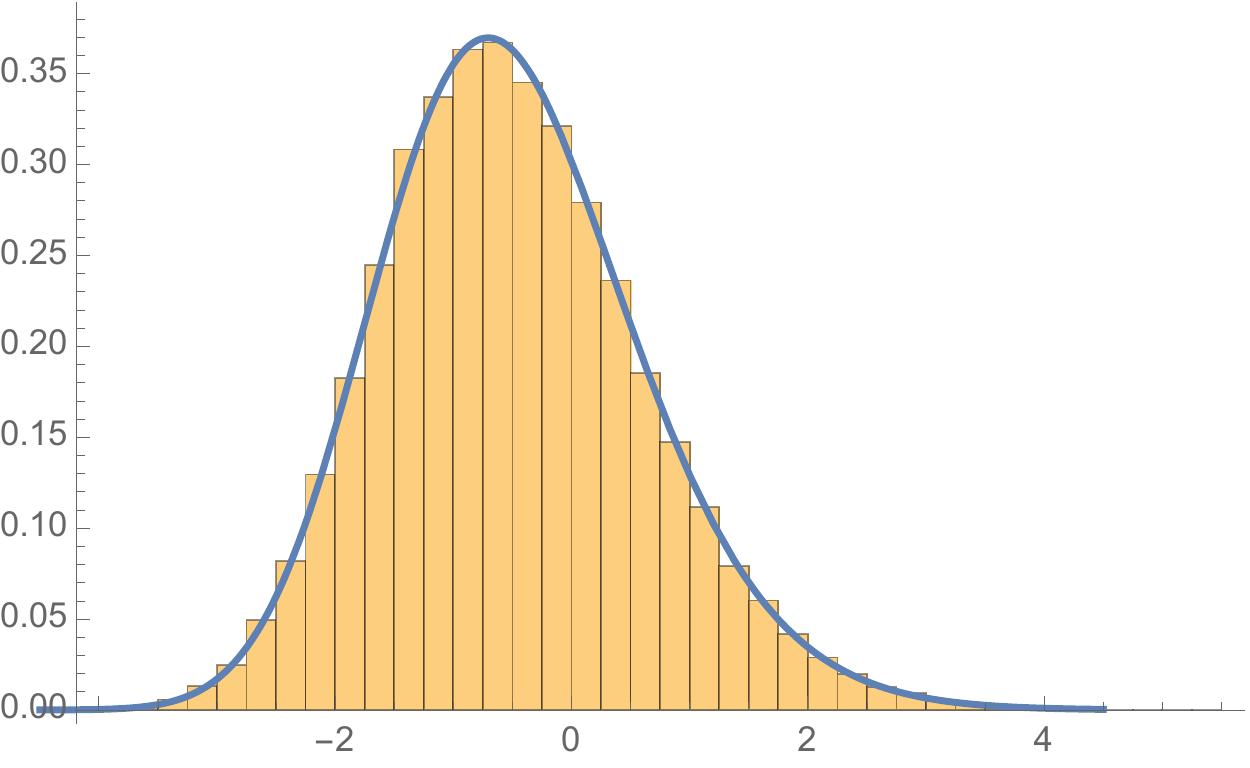}
\caption{Simulated histogram of the PDF for the largest
eigenvalue of (\ref{Tb}) with $\beta = 2$ in the scaled critical regime with
$N = 10^5$,  $\alpha = 1/2$ (equivalently $w=0$), and $M = 5*10^4$ repetitions. The solid
curve is the theoretical graph obtained by  taking minus the derivative of (\ref{C1a}) with respect to $s$.}
  \label{F5}
\end{figure*}

\begin{remark}\label{R3.8}
The statistical state of the $\beta = 4$ soft edge statistical regime is known to form a Pfaffian point process
\cite[Prop.~16]{FR02}, meaning
that the general $k$-point correlation function can be written in the form
\begin{equation}\label{C1c}
\rho_{(k)}(x_1,\dots,x_k) = {\rm Pf} \, (A Z_{2k} ), \qquad A = \begin{bmatrix}
\begin{bmatrix}  f^{11}(x_j, x_l) & f^{12}(x_j, x_l)   \\
 f^{21}(x_j, x_l)  &  f^{11}(x_l, x_j)  \end{bmatrix}_{j,l=1,\dots, k}   \end{bmatrix},
 \end{equation}  
Here the  functions $f^{11}, f^{12}, f^{21}$ can each be expressed in terms of $K^{\rm soft}(X,Y)$,
and $Z_{2k}$ is the elementary $2k \times 2k$ anti-symmetric matrix $\mathbb I_k \otimes \begin{bmatrix} 0 & -1 \\ 1 & 0 \end{bmatrix}$. 
 Explicitly, in the simplest case $k=1$, we have \cite[Eq.~(4.15) with $w \mapsto - w$]{Fo13}
 \begin{multline}\label{C1e} 
  \rho_{(1)}(X)   =    {1 \over 2} K^{\rm soft}(X,X) \\ - {1 \over 2} \int_{-\infty}^X e^{-w(X - t)/2} 
 {\partial \over \partial X} K^{\rm soft}(t,X)  \, dt 
  + {w \over 4}  \int_{-\infty}^X  dt \, e^{-w(X - t)/2}  \int_X^\infty du \, {\partial \over
 \partial t}  K^{\rm soft}(u,t).
  \end{multline}  
  \end{remark}

   \subsection{Characterisation of the hard edge critical regime for general $\beta > 0$}\label{S3.4}
   Analogous to the tridiagonal reduction of (\ref{1.6}) introduced in \S \ref{S2.2} and $\beta$
   generalised in \S \ref{S2.4}, the Wishart matrix (\ref{3.1b}) with correlation matrix (\ref{3.1b+})
   can, by the application of Householder transformations, be reduced to a tridiagonal form
   which allows for a $\beta$ generalisation. This follows by first  applying
 Householder transformations to reduce $\tilde{X}^\dagger$ to the $N \times N$ bidiagonal form
  \cite{Si85,DE02}
 \begin{equation}\label{Bd}
B_\beta^\dagger :=
{1 \over \sqrt{\beta} } \begin{bmatrix} \sqrt{b} \chi_{\beta n} & & &  \\
\chi_{\beta (N - 1)} & \chi_{\beta (n - 1)} & &  \\
 & \chi_{\beta (N - 2)} & \chi_{\beta (n - 2)} &  & \\
 & \ddots & \ddots &  & \\
& & \chi_\beta & \chi_{\beta (n - N + 1)} 
\end{bmatrix}
\end{equation}
with $\beta = 2$.
Here $\chi_p^2$ refers to the particular gamma distribution $\Gamma[p/2,2]$,
 and $n-N$ zero columns which do not effect the non-zero eigenvalues of $X^\dagger X$ 
have been removed.  We know from \cite{Fo13} that for general $\beta > 0$, $B_\beta^\dagger B_\beta$
has eigenvalue PDF proportional to
\begin{equation}\label{PB}
\prod_{j=1}^N \lambda_j^{\beta (n - N + 1)/2 - 1} e^{- \beta \lambda_j/2}
\prod_{1 \le j < k \le N} |\lambda_j - \lambda_k|^\beta
\int_{-\infty+ic}^{\infty  + i c} e^{i t} \prod_{j=1}^N \Big ( i t - {b -1 \over 2 b} \beta \lambda_j \Big )^{-\beta/2}
\, dt,
\end{equation}
(cf.~(\ref{A3+}))
which for $\beta = 2$ is consistent with (\ref{4.1b}) and the evaluation of the matrix integral as implied by
Remark \ref{R3.3}.

Since Wishart matrices are positive definite, the smallest eigenvalue is the eigenvalue closest to the
origin. With $\Sigma = \mathbb I$ in (\ref{3.1b}) , a well defined statistical state in the neighbourhood
of the origin --- refereed to as the hard edge --- results from scaling the eigenvalues $\lambda_j \to
\lambda_j/N$ \cite{Fo93a}. In this scaling, the eigenvalues about the origin are spaced of order unity
apart, and the Laguerre parameter $\beta(n - N + 1)/2 - 1 =: a$ (this is the exponent in the first term in (\ref{PB}))
is fixed. Introducing now the covariance matrix (\ref{3.1b+}), it was shown in \cite{DF06}
by explicit calculation of the correlation functions
for $\beta = 2$ that scaling $b = c/ N$ leads to a well defined hard edge critical regime
dependent on $c$. Subsequently these scalings applied to $B_\beta^\dagger B_\beta$ were
shown to extend the meaning of this regime to general $\beta > 0$. Moreover, with $1 - \mathcal F_{\beta,c}(x)$
denoting the cumulative distribution of the smallest scaled eigenvalue, 
and thus $ \mathcal  F_{\beta,c}(x)$ equal to the probability that the interval $(0,x)$ is free of eigenvalues,
ideas relating to the derivation of
(\ref{1.K}) as applies at the soft edge critical regime were adapted to obtain an analogous characterisation at the
hard edge \cite{RR17}.

 \begin{proposition}\label{P3.9}
 The hard edge scaled distribution $\mathcal F_{\beta,c}(x)$ satisfies the partial differential equation
\begin{equation}\label{PB1} 
- x {\partial \mathcal F \over \partial x} + {2 \over \beta} c^2 {\partial^2  \mathcal F \over \partial c^2} +
\bigg ( \Big ( {2 \over \beta}(a+2) - 1 \Big ) c - c^2 - x \bigg )  {\partial  \mathcal F \over \partial c} = 0,
\end{equation}
subject to the boundary conditions
\begin{equation}\label{PB2} 
\mathcal F_{\beta,c}(0) = 1, \quad \lim_{x \to \infty}  \mathcal F_{\beta,c}(x) = 0,   \quad \lim_{c \to 0^+} \mathcal  F_{\beta,c}(x) = 0.
\end{equation}
\end{proposition}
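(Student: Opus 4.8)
The plan is to follow the soft-edge template of \S\ref{S2.4} step by step, with the stochastic Airy operator replaced by the stochastic Bessel operator. First I would start from the bidiagonal model $B_\beta^\dagger$ of (\ref{Bd}), whose squared version $B_\beta^\dagger B_\beta$ carries the eigenvalue PDF (\ref{PB}), and impose the hard-edge critical scaling $\lambda_j\mapsto\lambda_j/N$, $b=c/N$ (cf.\ \S\ref{S3.4}), keeping the Laguerre parameter $a=\beta(n-N+1)/2-1$ fixed. Writing $B_\beta^\dagger$ as $N^{-1/2}$ times a bidiagonal array of $\chi$-variables, one expands each $\chi$-entry about its mean near the bottom corner of the array; under a suitable discretization the fluctuations assemble into increments of a Brownian path. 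The outcome, following \cite{DE02} and made rigorous in \cite{RR17}, is that the scaled smallest eigenvalue of $B_\beta^\dagger B_\beta$ converges to the bottom of the spectrum of the stochastic Bessel operator $\mathfrak{G}_{\beta,a}$ on $(0,\infty)$; the new ingredient is that the spiked corner entry, of size $\propto\sqrt{b}=\sqrt{c/N}$, is exactly of the order that turns the Dirichlet condition at the origin into a Robin-type condition with parameter proportional to $c$.

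Second, the smallest scaled eigenvalue is the ground-state eigenvalue of $\mathfrak{G}_{\beta,a}$ with this $c$-dependent boundary condition, so that $\mathcal F_{\beta,c}(x)$ is precisely the probability that the spectrum avoids $(0,x)$. I would then perform the hard-edge analogue of the Riccati substitution used below (\ref{1.SA}): setting $p(t)=\psi'(t)/\psi(t)$, suitably conjugated by the exponential weight of $\mathfrak{G}_{\beta,a}$, in the eigenvalue equation $\mathfrak{G}_{\beta,a}\psi=x\psi$ converts it into a one-dimensional It\^o diffusion whose coefficients involve the Brownian path together with $a$, the level $x$, and a starting value governed by $c$. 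The event that $(0,x)$ is eigenvalue-free becomes the event that this diffusion, started from the $c$-dependent boundary value, does not explode before reaching the other endpoint.

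Third, applying the theory of Kolmogorov's backward equation to this diffusion --- with $c$ in the role of the backward, time-like variable and $x$ the level --- yields a linear second-order PDE for $\mathcal F$ in $(x,c)$. Reading off the generator gives the diffusion term $\tfrac{2}{\beta}c^2\,\partial_c^2\mathcal F$ (the factor $2/\beta$ coming from $\tfrac{1}{2}$ times the square of the noise coefficient $\tfrac{2}{\sqrt{\beta}}$ in $\mathfrak{G}_{\beta,a}$), the transport term $-x\,\partial_x\mathcal F$ from the level dependence, and the drift $\bigl(\tfrac{2}{\beta}(a+2)-1\bigr)c-c^2-x$, which together are (\ref{PB1}). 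The three boundary conditions (\ref{PB2}) are then immediate: $\mathcal F_{\beta,c}(0)=1$ since a degenerate interval is trivially eigenvalue-free; $\mathcal F_{\beta,c}(x)\to0$ as $x\to\infty$ since the smallest eigenvalue is almost surely finite; and $\mathcal F_{\beta,c}(x)\to0$ as $c\to0^+$ because in that limit the Robin condition degenerates back to one forcing an eigenvalue into $(0,x)$ with probability one, which is the limiting regime just below the spiked transition where the would-be outlier merges into the hard edge.

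The main obstacle is the first step: proving the operator-limit convergence of the spiked bidiagonal model to $\mathfrak{G}_{\beta,a}$ with its $c$-dependent boundary condition in a topology strong enough to transfer convergence of the smallest eigenvalue, and, within that, the It\^o bookkeeping needed to verify that a spike of size $\sqrt{c/N}$ produces a boundary parameter \emph{linear} in $c$ rather than in $\sqrt{c}$ or $c^{2}$ --- this is what pins down the precise coefficients in (\ref{PB1}). Once the Riccati SDE is in hand the PDE derivation is routine and parallels \cite{BV12,RR17}, so I would cite those references for the analytic estimates and concentrate the write-up on the scaling computation and the identification of the boundary parameter.
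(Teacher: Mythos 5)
The paper does not itself prove Proposition~\ref{P3.9}; it states the result with a citation to \cite{RR17} and says only that the derivation adapts the argument leading to (\ref{1.K}) at the soft edge. Your proposal follows precisely this route --- bidiagonal model~(\ref{Bd}) with $b=c/N$, operator limit to a stochastic Bessel operator with a $c$-dependent boundary condition, Riccati transform, Kolmogorov backward equation in the boundary parameter --- so in spirit you have reconstructed the intended approach.

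Several details need correcting, however. The spiked $(1,1)$ entry of $B_\beta^\dagger$ is $\sqrt{b}\,\chi_{\beta n}/\sqrt{\beta}$, and since the Laguerre parameter $a$ is held fixed we have $n\sim N$, so this entry concentrates near $\sqrt{bn}\approx\sqrt{c}$, an $O(1)$ quantity rather than $O(\sqrt{c/N})$ as you write; it is the resulting $\approx c$ contribution to the $(1,1)$ entry of $B_\beta^\dagger B_\beta$ that supplies the linear-in-$c$ boundary parameter. More significantly, the spike sits at the top-left corner of the bidiagonal matrix, which is the end \emph{farthest} from the hard edge --- the small $\chi$-parameters $\chi_\beta,\chi_{\beta(n-N+1)}$ live at the bottom-right, and it is that corner which produces the origin of the Bessel operator and the usual $a$-dependent hard-edge boundary behaviour. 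The spike modifies the boundary condition (equivalently, the starting value of the Riccati diffusion) at the \emph{opposite} endpoint, so "turning the Dirichlet condition at the origin into a Robin-type condition" mislocates the effect. Finally, the narrative for $c\to 0^+$ ("the would-be outlier merges into the hard edge") conflates the hard-edge critical regime with the soft-edge one: as $b=c/N\to 0$ the smallest eigenvalue is driven to zero, so $(0,x)$ is certain to contain an eigenvalue; no outlier is involved --- outliers are a large-$b$ (soft-edge) phenomenon. The conclusion $\mathcal F_{\beta,c}(x)\to 0$ is of course correct, as the explicit $a=0$ formula (\ref{PB3}) shows directly, but for this simpler reason. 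None of these points change the overall strategy, which matches the cited reference, but they do need fixing before the Riccati/Kolmogorov step can produce the precise coefficients in (\ref{PB1}).
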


In the special case $a=0$, it is known from \cite[Eq.~(3.24)]{Fo13} that for finite $N$ and with
$\Sigma = {\rm diag} \, (b_1,\dots, b_N)$ that the probability of no eigenvalues in the
interval $(0,s)$ for the $\beta$ generalisation of (\ref{3.1b}) has the simple functional form
$e^{-s \sum_{j=1}^N(1/2 b_j)}$ and hence
\begin{equation}\label{PB3} 
\mathcal F_{\beta,c}(x) \Big |_{a=0} = \exp \bigg ( - {\beta x \over 2} \Big ( {1 \over c} + 1 \Big ) \bigg ).
\end{equation}
It is a simple task to show that (\ref{PB3}) is consistent with Proposition \ref{P3.9}.
Note the large $c$ limiting behaviour
\begin{equation}\label{PB4} 
\lim_{c \to \infty} \mathcal F_{\beta,c}(x) \Big |_{a=0} = E_\beta^{\rm hard}(s;a) \Big |_{a=0},
\end{equation}
where $E_\beta^{\rm hard}(s;a)$ denotes the probability of no eigenvalues in $(0,s)$ of the scaled hard 
state for $\Sigma = \mathbb I$; this relies on knowledge of the formula $E_\beta^{\rm hard}(s;a)  |_{a=0} =
e^{-\beta x/2}$ \cite{Fo94}. The formula (\ref{PB4}) is to be expected for general $a > - 1$.

Rumanov \cite{Ru14} has found a Lax pair solution of (\ref{PB1}), (\ref{PB2}) in the cases
$\beta = 2, 4$ analogous to that of Proposition \ref{P2.11} for the distribution of the soft edge
critical state. These Lax pair solutions now involve particular Painlev\'e III transcendents,
and are more complicated than for the soft edge
critical state. Nonetheless, it is shown in \cite{Ru14} that in the limit $a \to \infty$, and with suitable
scaling of $c$ and $x$ that the results for the latter can be reclaimed.
Previously Painlev\'e III transcendent evaluations were known for $E_\beta^{\rm hard}(s;a)$ in the
case $\beta = 2$ \cite{TW94b} and $\beta = 1,4$ \cite{Fo99b}.

\begin{remark}
For $\beta = 2$ the hard edge critical state is a determinantal point process.
The explicit form of the correlation kernel is given in \cite{DF06} and
\cite[\S 7.2.4]{Fo10}.
\end{remark}

\section{Rank 1 perturbations with two-dimensional support}\label{S4}
\subsection{An additive rank $1$ anti-Hermitian perturbation for the GUE}\label{S4.1}
Let $A$ be an Hermitian matrix with fixed eigenvalues $\{ \mu_j \}_{j=1}^N$ ordered
as in (\ref{1.8d}), and let $\hat{\mathbf v}$ be an $N \times 1$ column vector chosen
uniformly on the sphere in $\mathbb C^N$. Form the projection matrix $\hat{\mathbf v} \hat{\mathbf v}^\dagger$,
and use it to create the additive rank $1$ anti-Hermitian perturbation of $A$, 
\begin{equation}\label{AS1}
A + i \alpha \hat{\mathbf v} \hat{\mathbf v}^\dagger, \qquad \alpha > 0.
\end{equation}
This model, in the case of $A$ is random from the GOE, was first considered by Ullah \cite{Ul69} in the context
of resonances in scattering processes.
The working leading to (\ref{1.8c}) tells us that the eigenvalues of (\ref{AS1})
are determined by the solution of the equation, in the variable $z$,
\begin{equation}\label{AS2}
0 = 1 - i \alpha \sum_{j=1}^N { | v^{(j)} |^2 \over z - \mu_j}.
\end{equation}
As noticed in \cite{HKSSS96,SS98}, for $\alpha \to \infty$ this implies that the $N-1$
of the eigenvalues, which are in general complex, will approach the real
axis and interlace with the sequence (\ref{1.8d}); in fact
 ${\rm Re} \, z$ is between $\mu_N$ and $\mu_1$ for each solution \cite{DE21}.
 The remaining eigenvalue in this
limit can be read off from (\ref{AS2}) by searching for a solution with $|\mu|$ large.
This gives, after averaging over the components of $\mathbf v$,
\begin{equation}\label{AS3}
z \sim {i \alpha }.
\end{equation}
The large $z$ form of (\ref{AS2}) implies a sum rule constraining the eigenvalues
for fixed $\alpha$. Thus, with $\{z_l\}$ the eigenvalues, partial fractions give
that the rational function in (\ref{AS2}) can be written as $\prod_{l=1}^N (z - z_l)$
divided by 
$\prod_{l=1}^N (z - \mu_l)$. Equating the coefficient of $1/z$ in the large
$z$ expansion of both expressions and taking imaginary parts shows
\begin{equation}\label{AS3a}
\alpha = \sum_{l=1}^N {\rm Im} \, z_l.
\end{equation}

Note the similarity between (\ref{AS3}) and the $\alpha \to \infty$ form the outlier (\ref{1.6b}) for (\ref{1.6}),
as well as the corresponding interlacing in this limit. A further known general property of 
(\ref{AS2}) is that the roots $\{ \mu_j \}$ all have positive imaginary parts
\cite{OW17,Ko17,DE21}. One way to see this is, in keeping with the Schur decomposition  discussed in
the text including (\ref{4.1a}) below, to conjugate (\ref{AS1}) by a unitary matrix, bring it to triangular form with
diagonal entries equal to the eigenvalues. By inspection the diagonal entries on the RHS have positive
imaginary part for $\alpha > 0$.

Specialise now, as in the references \cite{SS98,FK99}, to the case that $A$ is random
from the GUE, which in the physics application corresponds
to a broken time reversal symmetry.
  Since such matrices are unchanged by conjugation with unitary matrices, the eigenvalues
of the perturbed matrix (\ref{AS1}) are the same as for
\begin{equation}\label{AS4}
A + i \alpha \hat{\mathbf 1} \hat{\mathbf 1}^\dagger \quad {\rm or} \quad A + i \alpha {\rm diag} \, (1,0,\dots,0).
\end{equation}
 In the case that $\alpha$ itself is a random
variable with distribution $\Gamma[N-1,\alpha_0]$, which is equivalent to replacing
$\hat{\mathbf v} $ in (\ref{AS1}) by a standard complex Gaussian vector, and setting $\alpha = \alpha_0$,
the exact form of the joint eigenvalue PDF was first calculated in \cite{SS98}.
A different working was later given by Fyodorov and Khoruzhenko \cite{FK99},
allowing the eigenvalue PDF of (\ref{AS4}) to be determined directly.

\begin{proposition}\label{P4.1}
The eigenvalue PDF of the random matrices (\ref{AS4}) in the case that $A$ is a chosen from the
GUE, with the eigenvalues denoted $\{ z_l = x_l + i y_l \}$, is proportional to
\begin{equation}\label{4.1+}
\prod_{l=1}^N e^{- (x_l^2 - y_l^2)}  \prod_{1 \le j < k \le N} | z_k - z_j |^2
{e^{ - \alpha^2} \over \alpha^{N-1}} \delta \Big ( \alpha -  \sum_{l=1}^N y_l \Big ),
\end{equation}
supported on $y_l > 0$, ($l=1,\dots,N$).
\end{proposition}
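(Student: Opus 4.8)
The plan is to obtain the eigenvalue PDF of $M:=A+i\alpha\,{\rm diag}\,(1,0,\dots,0)$, which by unitary invariance of the GUE has the same eigenvalues as (\ref{AS4}), by a change of variables from the eigenvalues $\{\mu_j\}$ of $A$ together with the squared moduli $t_j$ of the first components of its eigenvectors, to the complex eigenvalues $\{z_l=x_l+iy_l\}$ of $M$. Here $A$ is normalised so that its eigenvalue PDF is proportional to $\prod_j e^{-\mu_j^2}\prod_{j<k}(\mu_j-\mu_k)^2$ (equivalently the matrix density is $\propto e^{-{\rm Tr}\,A^2}$). Diagonalising $A=U\Lambda U^\dagger$, the eigenvalues carry this PDF while $U$ is Haar and independent, so by the same facts about GUE eigenvectors used in the proof of Proposition \ref{P2.1} the vector $(t_1,\dots,t_N)$ is uniform on the probability simplex, with density proportional to $\delta\!\bigl(\sum_j t_j-1\bigr)$, and independent of $\{\mu_j\}$. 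Since $\{z_l\}$ depends only on $(\{\mu_j\},\{t_j\})$ --- the remaining freedom in $U$ integrates out --- it remains to push the joint density $\prod_j e^{-\mu_j^2}\prod_{j<k}(\mu_j-\mu_k)^2\,\delta\!\bigl(\sum_j t_j-1\bigr)$ forward under the map defined, on rewriting (\ref{AS2}) as an identity between rational functions, by
\[
1-i\alpha\sum_{j=1}^N\frac{t_j}{z-\mu_j}=\frac{\prod_{l=1}^N(z-z_l)}{\prod_{j=1}^N(z-\mu_j)},
\qquad\text{i.e.}\qquad p(z)=q(z)-i\alpha\,r(z),
\]
with $p(z):=\prod_l(z-z_l)$, $q(z):=\prod_j(z-\mu_j)$, $r(z):=\sum_j t_j\prod_{k\ne j}(z-\mu_k)$.

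From this identity I would extract three facts. Comparing coefficients of $z^{N-1}$ gives $\sum_l z_l=\sum_j\mu_j+i\alpha\sum_j t_j$; its imaginary part on $\sum_j t_j=1$ is the sum rule $\sum_l y_l=\alpha$ of (\ref{AS3a}), and since $\sum_j t_j=\tfrac1\alpha\sum_l y_l$ identically one has $\delta\!\bigl(\sum_j t_j-1\bigr)=\alpha\,\delta\!\bigl(\sum_l y_l-\alpha\bigr)$ as distributions in the $z$ variables. Taking the trace of $M^2$ gives the matrix identity ${\rm Tr}\,A^2={\rm Re}\,{\rm Tr}\,M^2+\alpha^2$, i.e.\ $\sum_j\mu_j^2=\sum_l(x_l^2-y_l^2)+\alpha^2$, so $\prod_j e^{-\mu_j^2}=e^{-\sum_l(x_l^2-y_l^2)}e^{-\alpha^2}$. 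And the residue at $z=\mu_j$ is $-i\alpha\,t_j=\prod_l(\mu_j-z_l)/\prod_{k\ne j}(\mu_j-\mu_k)$, which together with $q_m={\rm Re}\,p_m$, $r_m=-\tfrac1\alpha\,{\rm Im}\,p_m$ makes the inverse map explicit.

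The main computation is the Jacobian, which I would evaluate by factoring the map through the coefficient vectors of $p,q,r$. The step $(\{\mu_j\},\{t_j\})\mapsto(\{q_m\},\{r_m\})$ is block triangular since $q$ depends only on $\mu$; its Jacobian is $\det[\partial q_m/\partial\mu_j]\cdot\det[\partial r_m/\partial t_j]$, the first being the Vandermonde $\pm\Delta_N(\mu)$ and the second the matrix of coefficients of the polynomials $\prod_{k\ne j}(z-\mu_k)$, whose determinant is again $\pm\Delta_N(\mu)$ by an evaluation of the same type as that around (\ref{1.8c3+}); so this step contributes $\pm\Delta_N(\mu)^2$. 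The step $p_m=q_m-i\alpha\,r_m$ with real $q_m,r_m$ contributes $\pm\alpha$ per coefficient, hence $\pm\alpha^N$. Finally $\{p_m\}\mapsto\{z_l\}$ inverts the holomorphic elementary-symmetric-function map, whose complex Jacobian is $\pm\Delta_N(z)$, hence whose real Jacobian is $|\Delta_N(z)|^2$; this contributes $1/|\Delta_N(z)|^2$. Altogether $\bigl|\det\partial(\{\mu_j\},\{t_j\})/\partial\{z_l\}\bigr|=|\Delta_N(z)|^2/(\alpha^N\Delta_N(\mu)^2)$. Multiplying the joint density by this, the $\Delta_N(\mu)^2$ cancels, and inserting the two substitutions from the previous paragraph together with $\delta\!\bigl(\sum_j t_j-1\bigr)=\alpha\,\delta\!\bigl(\sum_l y_l-\alpha\bigr)$ yields precisely (\ref{4.1+}), with $e^{-\alpha^2}/\alpha^{N-1}$ arising as $e^{-\alpha^2}\cdot\alpha/\alpha^N$.

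There is no single hard obstacle, but two points need care. The first is that $(\{\mu_j\},\{t_j\})\leftrightarrow\{z_l\}$ must be a genuine bijection onto the relevant open subset of $\{y_l>0,\ \sum_l y_l=\alpha\}$: one must check that for configurations coming from $M$ the ``real part'' polynomial $q$ has only real roots and the reconstructed residues $t_j$ are positive --- both consequences of $A$ being Hermitian and $\alpha>0$ through the Herglotz property of $z\mapsto\sum_j t_j/(z-\mu_j)$, which also gives $y_l>0$ --- and that the symmetrisations over orderings of $\{\mu_j\}$ and of $\{z_l\}$ match. The second is the bookkeeping of signs and constants in the Jacobian, for which the factorisation through polynomial coefficients is the cleanest route.
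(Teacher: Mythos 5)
Your proof is correct, but it takes a genuinely different route from the paper's.  The paper (following Fyodorov--Khoruzhenko) works with the full non-normal matrix $J=H+i\Gamma$: it applies the Schur decomposition $J=UTU^\dagger$ (with Jacobian $\prod_{j<k}|z_k-z_j|^2$), writes the matrix delta function $\delta(\Gamma-\operatorname{Im}J)$ as a Fourier integral over Hermitian matrices, integrates out the strictly upper-triangular parts by completing squares, and finally invokes the rank-$1$ HCIZ integral to collapse the remaining $U(N)$ average into a single contour integral whose residues produce the delta constraint and the factor $e^{-\alpha^2}/\alpha^{N-1}$.  Your argument instead works at the level of the secular equation $1-i\alpha\sum_j t_j/(z-\mu_j)=\prod_l(z-z_l)/\prod_j(z-\mu_j)$: it pushes forward the explicit joint law of the GUE eigenvalues $\{\mu_j\}$ and the first eigenvector components $t_j=|u_1^{(j)}|^2$ (which for GUE are independent of $\{\mu_j\}$ and uniform on the simplex), computing the Jacobian by factoring through the coefficient vectors of $p,q,r$.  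In spirit this is exactly the method the paper itself uses in \S\ref{S3.2a} for the Wishart case (secular equation, residues, polynomial-coefficient Jacobians) and is close to the tridiagonal derivation of Kozhan \cite{Ko17} cited after the proposition.  Your route is more elementary --- it needs no HCIZ or matrix delta function and makes completely transparent where the Vandermonde cancellation, the factor $\alpha^{-(N-1)}$ and the sum-rule delta come from; conversely the paper's route produces the intermediate formula (\ref{4.1e+}) valid for arbitrary $\Gamma$ of any rank and ties into the HCIZ machinery used throughout the survey.  The Jacobian bookkeeping checks out: $\det[\partial(q,r)/\partial(\mu,t)]=\pm\Delta_N(\mu)^2$, the coefficient-wise map $(q_m,r_m)\mapsto(\operatorname{Re}p_m,\operatorname{Im}p_m)=(q_m,-\alpha r_m)$ gives $\alpha^N$, and the holomorphic Vieta map gives $|\Delta_N(z)|^2$, which together with the simplex constraint $\delta(\sum t_j-1)=\alpha\,\delta(\sum y_l-\alpha)$ and the trace identity $\sum\mu_j^2=\sum(x_l^2-y_l^2)+\alpha^2$ reproduces (\ref{4.1+}).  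Two small imprecisions worth flagging but not fatal: the phrase \emph{by the same facts as Proposition~\ref{P2.1}} is loose --- that proposition treats the GOE, where the squared eigenvector components are Dirichlet$(1/2,\dots,1/2)$ rather than uniform on the simplex; it is the complex-unit-sphere case that gives Dirichlet$(1,\dots,1)$, as you then correctly use.  And the bijectivity you defer to the Herglotz property is really the Hermite--Biehler theorem (for monic $p$ with all roots in the open upper half-plane, $\operatorname{Re}p$ and $\operatorname{Im}p$ have real, simple, interlacing roots and the residues are positive), which would be worth naming to close the argument.
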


\begin{proof}
Following \cite{FK99} we adopt the viewpoint that the sum of an Hermitian matrix
$H$ and anti-Hermitian matrix $i \Gamma$ is a complex matrix $J = H + i \Gamma$.
With $H$ chosen from the GUE and $\Gamma$ fixed, the distribution on $J$
is proportional to 
\begin{equation}\label{4.1a+}
e^{- {\rm Tr} ({\rm Re} \, J)^2} \, \delta  ( \Gamma - {\rm Im} \, J ), \quad  {\rm Im} \,  J := {1 \over 2i}(J - J^\dagger), \quad  {\rm Re} \,  J := {1 \over 2}(J + J^\dagger).
\end{equation}
The next step is to write $J$ in terms of its Schur decomposition, $J = U T U^\dagger$.
Here $U$ is a unitary matrix, unique up to the phase of each column, and $T$ is an
upper triangular matrix with the elements on the diagonal the eigenvalues. The Jacobian for the change of variables
is $  \prod_{1 \le j < k \le N} | z_k - z_j |^2$; see e.g.~\cite[Eq.~(15.9)]{Fo10}.

Regarding the matrix delta function in (\ref{4.1a+}), we have the matrix integral form over 
Hermitian matrices $A$ \cite[Eq.~(3.27)]{Fo10}
\begin{equation}\label{4.1b+}
\delta( \Gamma - {\rm Im} \, J) \propto \int e^{i {\rm Tr} \, (A (\Gamma - {\rm Im} \, J))} \, (d A)  
= \int e^{-i {\rm Tr} \, ( A {\rm diag}\,(y_1,\dots,y_N))} e^{-i {\rm Im} \, {\rm Tr} \, (A \tilde{T})} e^{i {\rm Tr} \, (A U^\dagger \Gamma U)}  \, (d A),
\end{equation}
where $\tilde{T}$ refers to the strictly upper triangular portion of $T$. In obtaining the final expression, use has been
made of the invariance of the distribution of $A$ upon the mapping $A \mapsto U^\dagger A U$. We also have
\begin{equation}\label{4.1c+}
e^{- {\rm Tr} \, ((J + J^\dagger)/2)^2} = e^{- \sum_{l=1}^N x_l^2} e^{- {\rm Tr} \, (\tilde{T} \tilde{T}^\dagger )/2 }.
\end{equation}
Multiplying the RHS of (\ref{4.1c+}) with the RHS of (\ref{4.1b+}) and observing by completing the square that
\begin{equation}\label{4.1d+}
\int e^{- {\rm Tr} \, (\tilde{T} \tilde{T}^\dagger)/2 }  e^{-i {\rm Im} \, {\rm Tr} \, (A \tilde{T})}  \, (d \tilde{T}) \propto
e^{ -  {\rm Tr} \, (\tilde{A} \tilde{A}^\dagger)/2 }, 
\end{equation}
where $\tilde{A}$ denotes the strictly upper triangular portion of $A$, we are left with
$$
 \int e^{-i {\rm Tr} \, ( A {\rm diag}\,(y_1,\dots,y_N))} e^{i {\rm Tr} \, (A U^\dagger \Gamma U)} e^{ -  {\rm Tr} \, (\tilde{A} \tilde{A}^\dagger)/2 } \, (d A).
 $$
 Integrating over the independent elements of $\tilde{A}$ by completing the square gives the term $e^{-2  {\rm Tr} \, \tilde{\Gamma} \tilde{\Gamma}^\dagger  }$ in analogy with
 (\ref{4.1d+}). This leaves an integration over the diagonal entries of $A$.  Writing $K = {\rm diag} \,(a_{11},\dots,a_{NN})$, and integrating
 too over the invariant measure $[U^\dagger dU]$  of the unitary matrices $U$ in the Schur decomposition gives for the eigenvalue
 PDF, up to proportionality,
 \begin{equation}\label{4.1e+}
  e^{- \sum_{l=1}^N (x_l^2 - y_l^2)}   e^{-  {\rm Tr} \, {\Gamma}^2 }
     \prod_{1 \le j < k \le N} | z_k - z_j |^2 \, \int (dK) \, e^{- i {\rm Tr} \, K  \, {\rm diag} \, (y_1,\dots,y_N)} 
  \int [U^\dagger dU] \, e^{i {\rm Tr} \, (K U^\dagger \Gamma U)} .
 \end{equation}
 Here we have also used the fact that $e^{-2  {\rm Tr} \, \tilde{\Gamma} \tilde{\Gamma}^\dagger  } = 
e^{\sum_{l=1}^N y_l^2}   e^{-  {\rm Tr} \, {\Gamma}^2 }$; cf.~(\ref{4.1c+}).

All the above working holds for general $\Gamma$. We now specialise to the rank $1$ case as implied
by (\ref{AS4}). The matrix integral over $[U^\dagger dU]$ is then the rank $1$ HCIZ integral and
so evaluates to a single contour integral. Substituting in (\ref{4.1e+}) then shows the integrals over the diagonal
matrices $K$ factorise as a product of $N$ independent contour integrals, each of which can
be evaluated by closing the contour and computing the residue.
The final integral over $t$ is then a delta function, accounting for all terms in (\ref{4.1+}).
\end{proof}

\begin{remark}
A method of proof of (\ref{4.1+}), together with a $\beta$ generalisation has been given by Kohzan \cite{Ko17},
which is based on a finite $N$ tridiagonal formalism; see also \cite{AK21}.
\end{remark}

The bulk scaling of the eigenvalues at the origin for the GUE is specified by $x_j \mapsto X_j/\sqrt{2N}$
to give an expected density of $1/\pi$;
see e.g.~\cite[\S 7.1.1]{Fo10}. This scaling can be carried out in (\ref{AS4}) by multiplying the matrix sum by $\sqrt{2N}$,
which in turn requires  that in (\ref{4.1+}) the imaginary part of the eigenvalues be similarly scaled
$y_j \mapsto  Y_j/\sqrt{2N}$. If we further scale $\alpha \mapsto  \sqrt{N/2} \alpha_0$, then the delta function
constraint in (\ref{4.1+}) tells us that ${1 \over N} \sum_{j=1}^N Y_j = \alpha_0$, and thus on average each $Y_j$ is of
order unity. It was shown in \cite{FK99} that this scaled limit gives rise to a determinantal point process and the
explicit form of the correlation kernel was computed.

\begin{figure*}
\centering
\includegraphics[width=0.65\textwidth]{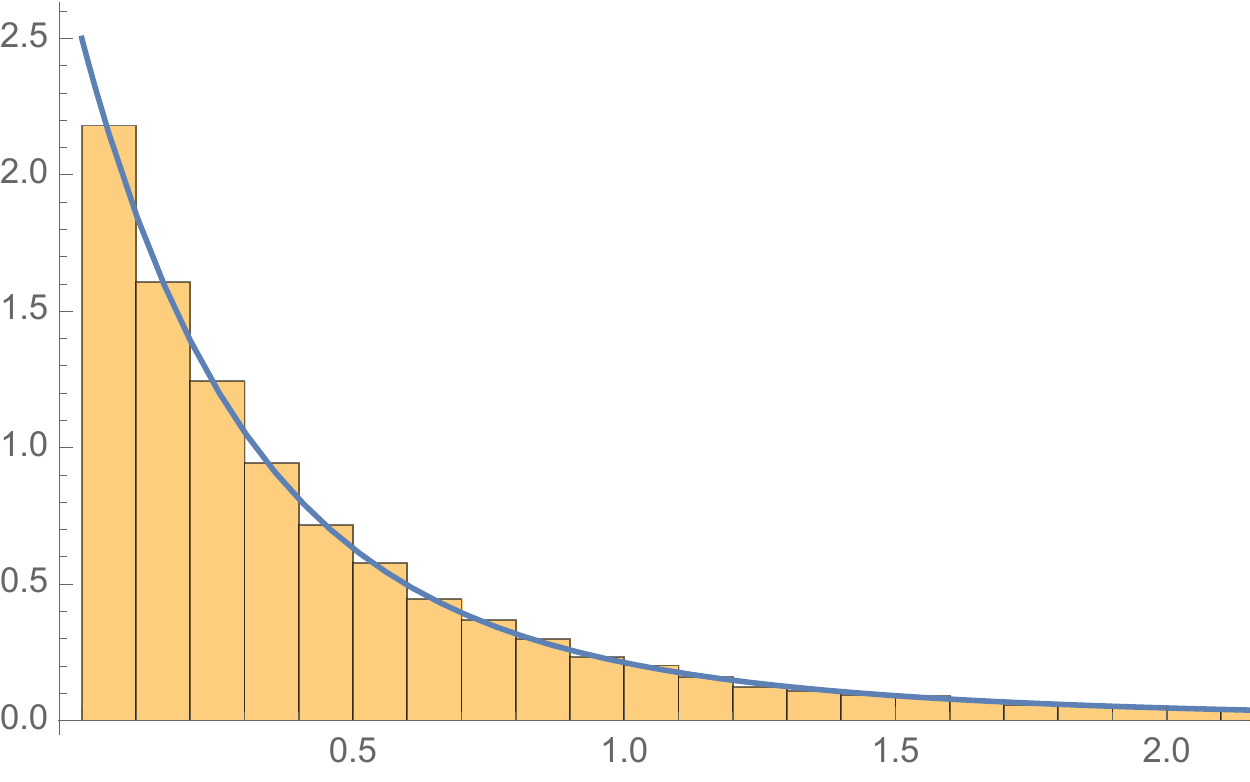}
\caption{Simulated histogram of the PDF for the scaled imaginary part of  (\ref{AS1})
 with $\alpha_0=2$ and thus $g=5/2$, plotted against (\ref{4.1h}).
 In the simulations the scaled version of (\ref{AS1}) was sampled $5,000$ times,
  and the imaginary part of the $8$ eigenvalues with real part closest to the origin were
  recorded each time.}
  \label{F4.3}. 
\end{figure*}

\begin{proposition}\label{P4.2}
Consider the scalings of (\ref{AS4}) and the corresponding eigenvalues as specified in the above 
paragraph. The correlations have the determinantal form (\ref{B0}) with correlation kernel
 \begin{equation}\label{4.1f}
 K(Z_j, Z_k) = e^{-  g (Y_j + Y_k)} \int_{-1}^1 (g + s ) e^{i s (Z_j - \bar{Z}_k)} \, ds,
 \end{equation}
 where $Z = X + i Y$ and $g = (1/2) (\alpha_0 + 1/\alpha_0)$. Here the normalisation has been chosen
 so that $\int_0^\infty K(Z,Z) \, d Y = 1$.
\end{proposition}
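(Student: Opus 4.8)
\emph{Proposed approach.} The plan is to reduce the problem to a $t$-mixture of two-dimensional determinantal ensembles and then use the fact that, in the stated scaling, the mixture concentrates. Starting from the finite-$N$ eigenvalue PDF (\ref{4.1+}) of Proposition \ref{P4.1}, I would first note that on its support one has $0<y_l<\alpha$ for every $l$, so each $y_l$-integration may be restricted to $(0,\alpha)$ before representing the delta function as $\frac1{2\pi}\int_{-\infty}^{\infty}e^{it(\alpha-\sum_l y_l)}\,dt$; after interchanging orders of integration the $y$-integrations are then over a bounded region, hence convergent, and the $t$-contour is at our disposal (the device used around (\ref{M2a}) and in the proof of Proposition \ref{P2.3}). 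For $t$ held fixed the integrand factorises as $\prod_{l=1}^N w_t(z_l)$, where $w_t(x+iy)=e^{-(x^2-y^2)}e^{-ity}$ is supported on $y>0$, times the squared Vandermonde $\prod_{1\le j<k\le N}|z_j-z_k|^2$. A planar density of this form is, formally, a determinantal point process whose correlation kernel $K_N^{(t)}$ is built from the monic polynomials $P_m(\,\cdot\,;t)$ orthogonal with respect to $w_t$ on the upper half-plane together with their norms $h_m(t)$. Consequently the $k$-point function of (\ref{4.1+}) is a mixture over $t$,
$$
\rho_{(k)}(z_1,\dots,z_k)\propto\int_{-\infty-ic}^{\infty-ic}dt\,e^{it\alpha}\,Z_N(t)\,\det\big[K_N^{(t)}(z_j,z_l)\big]_{j,l=1}^{k},
$$
with $Z_N(t)$ the $t$-dependent normalisation of the conditional ensemble.

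Next I would implement the scaling described just before the proposition, namely $x_j\mapsto X_j/\sqrt{2N}$, $y_j\mapsto Y_j/\sqrt{2N}$, $\alpha\mapsto\sqrt{N/2}\,\alpha_0$, together with the matching rescaling $t\mapsto\sqrt{2N}\,\theta$, so that the constraint becomes $\tfrac1N\sum_j Y_j=\alpha_0$. The $\theta$-integrand is the exponential of a free-energy functional of order $N$, so a Laplace/steepest-descent argument shows that it concentrates at a saddle $\theta=\theta^{*}$, with $\theta^{*}$ an explicit function of $\alpha_0$ fixed by the requirement that the empirical mean of the $Y_j$ equal $\alpha_0$. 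At leading order the $t$-averaging therefore collapses, and the limiting $k$-point function is the \emph{single} determinant of the scaling limit of $K_N^{(\theta^{*})}$; this already establishes the determinantal form (\ref{B0}).

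There remains the asymptotics of the kernel itself. Under the scaling the Gaussian factor $e^{-(x^2-y^2)}=e^{-(X^2-Y^2)/2N}$ tends to $1$ on the $O(1)$ scale near the origin (it is retained only to secure convergence of the $X$-integrations on the global $O(\sqrt N)$ scale where the spectrum is confined), while the polynomials $P_m(\,\cdot\,;\theta^{*})$ degenerate, for $m$ of order $N$, to plane waves $e^{iuZ}$ with $u=u(m)$; the sum $\sum_{m=0}^{N-1}$ passes to an integral over $u$, which because $m$ ranges over $\{0,\dots,N-1\}$ and because of the value of $\theta^{*}$ fills the interval $(g-1,g+1)$ with $g=\tfrac12(\alpha_0+1/\alpha_0)$, the norms $h_m$ combining into the density $u\,du$. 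Thus the kernel delivered by the asymptotics is proportional to $\int_{g-1}^{g+1}u\,e^{iu(Z_j-\bar Z_k)}\,du$; substituting $u=g+s$ and discarding the diagonal conjugation factor $e^{ig(X_j-X_k)}$, which does not affect the correlation functions, produces
$$
K(Z_j,Z_k)=e^{-g(Y_j+Y_k)}\int_{-1}^{1}(g+s)\,e^{is(Z_j-\bar Z_k)}\,ds,
$$
that is, (\ref{4.1f}). The overall prefactor is then pinned by $\int_0^\infty K(Z,Z)\,dY=1$, which after the elementary $Y$-integral reduces to $\int_{-1}^1\tfrac{ds}{2}=1$.

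The main obstacle, and the only part requiring genuine work, is the asymptotic analysis of the planar orthogonal polynomials $P_m(\,\cdot\,;\theta^{*})$ attached to the half-plane weight $w_{\theta^{*}}$ — a weight that is neither positive nor, on the unrestricted domain $\{y>0\}$, integrable — together with the rigorous justification that the $\theta$-mixture concentrates, so that the limit is genuinely a determinant rather than a superposition of determinants. One must also track carefully the contour deformation in $t$ that keeps the intermediate expressions well-defined, and keep the global $O(\sqrt N)$ scale on which $e^{-(x^2-y^2)}$ acts cleanly separated from the local $O(1)$ scale on which the correlation kernel is computed.
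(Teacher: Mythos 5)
Your route is genuinely different from the one the paper relies on: Remark~\ref{R4.3}(1) states that the delta constraint in (\ref{4.1+}) prohibits a determinantal form at finite $N$, and that Fyodorov--Khoruzhenko \cite{FK99} instead start from products of characteristic polynomials averaged over the GUE. Your plan --- Fourier-decompose the delta, treat the fixed-$t$ conditional ensemble via planar orthogonal polynomials, and collapse the $t$-mixture by Laplace --- is a reasonable-looking alternative, but the two steps you flag as ``the only part requiring genuine work'' are in fact the entire mathematical content of the proposition, and neither is supplied nor, as far as I can see, supplyable by existing methods.

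Concretely: the conditional weight $w_t(x+iy)=e^{-(x^2-y^2)}e^{-ity}$ is complex-valued and grows like $e^{y^2}$, so it is not integrable on $\{y>0\}$; the cutoff $y<\alpha$ that restores finite moments introduces a boundary at height $\alpha=\sqrt{N/2}\,\alpha_0$, which itself scales with $N$, so the polynomials live on an $N$-dependent strip with a complex, non-rotationally-invariant weight --- a setting with no off-the-shelf planar-OP asymptotics (no equilibrium measure, no $\bar\partial$- or Riemann--Hilbert framework). Your statement that $P_m$ degenerates to the plane wave $e^{iuZ}$ with $u$ sweeping $(g-1,g+1)$ and the norms combining into density $u\,du$ is asserted rather than derived, and it already encodes the answer (\ref{4.1f}), so at the decisive step the argument is circular. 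Separately, collapsing the $\theta$-mixture to a single determinant needs proof: the mixture is of \emph{complex-signed} kernels (the fixed-$t$ ``ensemble'' is not a probability measure), and replacing it by $\det[K^{(\theta^*)}]$ requires showing both that $K_N^{(\theta)}$ converges uniformly near the saddle and that the Gaussian fluctuation in $\theta$ decouples from the $Z_j$-dependence --- the very obstruction that makes the finite-$N$ process non-determinantal in the first place. As written the proposal reverse-engineers (\ref{4.1f}); the route in \cite{FK99} avoids both difficulties by working throughout with averages over the unconstrained GUE.
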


\begin{remark}\label{R4.3} 
1.~For finite $N$ the delta function constraint in (\ref{4.1+})  prohibits  a
 determinantal form for the correlations. The starting point of the calculation in \cite{FK99} is
 to write the $k$-point correlation function in terms of a certain product of determinants
 averaged over the GUE. \\
 2.~The reproducing property of the kernel
  \begin{equation}\label{4.1g}
  \int_{-\infty}^\infty d X_2  \int_{0}^\infty d Y_2 \,  K(Z_1, Z_2)  K(Z_2, Z_3) =  K(Z_1, Z_3),
  \end{equation}
  which is associated with perfect screening (see \cite[\S 14.1]{Fo10}), is readily verified. \\
  3.~The normalised density profile in the $Y$ direction, $\rho_{(1)}(Y)$, is obtained by setting $Z_j = Z_k = Z$
  in (\ref{4.1f}), which shows
    \begin{equation}\label{4.1h}
 \rho_{(1)}(Y)  =  e^{- 2  g Y} \Big ( g {\sinh 2 Y \over  Y} -  {\partial \over \partial Y} {\sinh 2 Y \over  2 Y} \Big ).
  \end{equation}
  In Figure \ref{F4.3} this profile is compared against a simulation for a particular $g$. 
  For the random matrices (\ref{AS4})  with $A$ chosen from the GOE, upon the scaling as used in
  Proposition \ref{P4.2} an exact evaluation of $\rho_{(1)}(Y)$ is also known \cite{SFT99},
  while unlike when $A$ chosen from the GUE, the corresponding higher order correlations remain unknown.\\
  4.~Take the viewpoint that in (\ref{AS4}) one realisation of a GUE matrix is chosen, and $\alpha \ge 0$ is 
  a continuous parameter. From the discussion at the beginning of this section we know
  that for $\alpha >0 $ the eigenvalues have a positive imaginary part, although that as $\alpha \to \infty$
  all but one eigenvalue --- which can be considered as an outlier --- returns to the real axis. With the GUE matrix scaled by dividing by $\sqrt{N/2}$ so the leading order support
  is $(-2,2)$, 
  in the recent work \cite{DE21} it has been proved that with high probability the outlier can distinguished for all $\alpha > 1 + N^{-1/3+ \epsilon}$, $\epsilon > 0$;
  see Figure \ref{F4.4} for an illustration. Moreover, it was emphasised that the exponent $-1/3$ is identical to that giving rise to the soft edge critical regime
  of the additive perturbed GOE and GUE, as displayed in the second scaling relation of (\ref{C1}).
  \end{remark}   

\begin{figure*}
\centering
\includegraphics[width=0.65\textwidth]{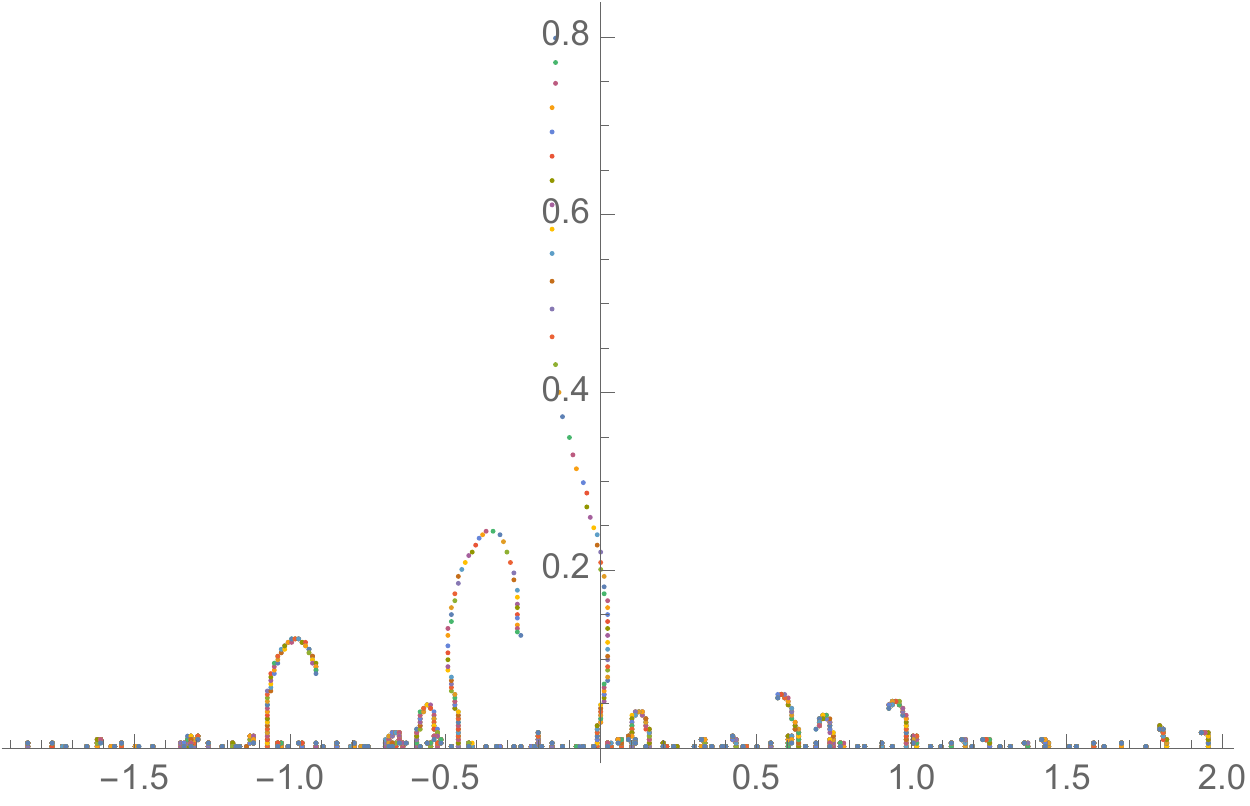}
\caption{Trajectories of the eigenvalues of the matrix  (\ref{AS4}) with one realisation of $A$ as a $100 \times 100$ scaled  GUE matrix,
and $\alpha$ varying from $0$ to 1.5 in intervals of $1/60$. The outlier is clearly visible.}
  \label{F4.4}. 
\end{figure*}

\subsection{A multiplicative sub-unitary rank $1$ perturbation for the CUE}\label{S4.2}
Closely related to the   anti-Hermitian additive rank $1$ perturbation for the GUE of the previous section
is a particular multiplicative rank $1$ perturbation of CUE matrices. Here CUE stands for the circular
unitary ensemble, this being terminology introduced by Dyson \cite{Dy62}, which is realised by
the set of complex unitary matrices distributed according to Haar measure. Let $A = {\rm diag} \, (a,1,\dots,1)$,
with $|a| < 1$, and consider the multiplicative rank $1$ perturbation of $U \in \, {\rm CUE}$ define by
$UA$. The joint eigenvalue PDF was shown by Fyodorov 
\cite{Fy01} (see also the review \cite{FS03}), to be proportional to
\begin{equation}\label{q0}
(1 - |a|^2)^{1-N}  \delta \Big ( |a|^2 - \prod_{l=1}^N | z_l|^2 \Big )
\prod_{1 \le j < k \le N} | z_j - z_k|^2,
\end{equation}
and supported on $|z_l|<1$.

As for the PDF (\ref{4.1+}), the delta function constraint prohibits a determinantal form of the correlations
for finite $N$.
Nonetheless, in distinction to (\ref{4.1+}), an exact finite $N$ expression is still possible \cite{Fy01}.
To present this result, define
\begin{equation}\label{q}
q_j(z_1,\dots,z_k) = [s^j] \det \Big [ \Big (s + x {d \over dx} \Big ) {x^N - 1 \over x - 1} \Big |_{x = z_i \bar{z}_j}
\Big ]_{i,j=1}^k,
\end{equation}
where $[s^j]$ denotes the coefficient of $s^j$ in the expression that follows.

\begin{proposition}\label{P5.1}
Require that $|z_l| < 1$, ($l=1,\dots,k)$ and $\prod_{l=1}^k |z_l|^2 \ge |a|^2$ we have 
\begin{equation}\label{12}
\rho_{(k)}(z_1,\dots,z_k) =  {1 \over \pi^k} (1 - |a|^2)^{1-N}\
\sum_{l=0}^k q_l(z_1,\dots,z_k) \Big ( {d \over dx} x \Big )^l
\Big [ {1 \over x} \Big ( 1 - {|a|^2 \over x} \Big )^{N-1} \Big ] \Big |_{x = \prod_{l=1}^k |z_l|^2}.
\end{equation}
\end{proposition}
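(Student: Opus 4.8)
The plan is to obtain $\rho_{(k)}$ directly from the joint PDF (\ref{q0}) by integrating out $z_{k+1},\dots,z_N$ and then reorganising the resulting sums. First I would pin down the normalisation: writing the density as $Z_N^{-1}(1-|a|^2)^{1-N}\,\delta(|a|^2-\prod_l|z_l|^2)\prod_{j<k}|z_j-z_k|^2$, the constant $Z_N$ is computed by the polar-coordinate calculation below in the degenerate case $k=0$ and comes out as $Z_N=\pi^N N$, so that the explicit prefactor $\pi^{-k}(1-|a|^2)^{1-N}$ in (\ref{12}) already absorbs all combinatorial constants. Then expand $\prod_{1\le j<l\le N}|z_j-z_l|^2=\big|\det[z_j^{m-1}]_{j,m=1}^N\big|^2=\sum_{\sigma,\tau\in S_N}\operatorname{sgn}(\sigma)\operatorname{sgn}(\tau)\prod_{l=1}^N z_l^{\sigma(l)-1}\bar z_l^{\tau(l)-1}$, pass to polar coordinates $z_l=r_l e^{i\theta_l}$, and do the angular integrals. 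For $l=k+1,\dots,N$ the $\theta_l$-integral forces $\sigma(l)=\tau(l)$, so $\sigma,\tau$ send $\{1,\dots,k\}$ onto a common $k$-element subset $P\subset\{0,1,\dots,N-1\}$ (identified with the exponents) and agree on the complement $P^c$. By the Laplace expansion along the first $k$ rows the shuffle signs square to $1$, the $z_1,\dots,z_k$-block collapses to $\big|\det[z_i^{\,p}]_{i\in\{1,\dots,k\},\,p\in P}\big|^2$, and the integrated variables contribute the permanent-type sum $\sum_\phi\prod_{l=k+1}^N|z_l|^{2\phi(l)}$ over bijections $\phi$ of $\{k+1,\dots,N\}$ onto $P^c$.

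Next I would carry out the radial integrations over $r_{k+1},\dots,r_N$. Putting $\rho_l=r_l^2$ and $x_0:=\prod_{l=1}^k|z_l|^2$, $c:=|a|^2/x_0$, the delta becomes $\tfrac1{x_0}\delta\big(\prod_{l=k+1}^N\rho_l-c\big)$, and by relabelling symmetry each of the $(N-k)!$ terms of the $\phi$-sum yields the same integral $\int_{(0,1)^{N-k}}\delta(\prod\rho-c)\prod\rho^{a}\,d\rho$ with exponents running over $P^c$. This I would evaluate with the Mellin representation $\delta(\prod\rho-c)=\tfrac1{2\pi i}\int c^{-s-1}(\prod\rho)^s\,ds$, the elementary integrals $\int_0^1\rho^{a+s}\,d\rho=(a+s+1)^{-1}$, and closing the contour to the left — legitimate precisely because $c<1$, i.e. because of the support condition $\prod_{l=1}^k|z_l|^2\ge|a|^2$. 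Summing residues gives the divided difference $\sum_{a\in P^c}c^a\big/\!\prod_{a'\in P^c\setminus\{a\}}(a'-a)$. Using $\prod_{a'\in P^c\setminus\{a\}}(a'-a)=(-1)^a a!(N-1-a)!\big/\prod_{b\in P}(b-a)$ (whose numerator vanishes for $a\in P$, so the sum extends to all $a\in\{0,\dots,N-1\}$), expanding $\prod_{b\in P}(b-a)=\sum_{j=0}^k(-1)^j e_{k-j}(P)a^j$ with $e_m(P)$ the degree-$m$ elementary symmetric polynomial in $P$, and recognising $\sum_a\binom{N-1}{a}(-c)^a a^j=(t\tfrac{d}{dt})^j(1+t)^{N-1}\big|_{t=-c}$, the radial contribution becomes $\tfrac1{(N-1)!}\sum_{j=0}^k e_{k-j}(P)\,(\tfrac{d}{dx}x)^j\big[\tfrac1x(1-|a|^2/x)^{N-1}\big]_{x=x_0}$; here one uses the substitution $x=1/y$, under which $\tfrac1x(1-|a|^2/x)^{N-1}\mapsto y(1-|a|^2y)^{N-1}$ and $\tfrac{d}{dx}x\mapsto 1-y\tfrac{d}{dy}$, which acts diagonally on the powers $y^{m+1}$.

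The final step is the repackaging. At this stage $\rho_{(k)}$ has been reduced, after accounting for the $(2\pi)^{N-k}2^{-(N-k)}$ from the measures, the $N!/(N-k)!$ from the definition of the correlation function, and $Z_N=\pi^N N$, to $\tfrac1{\pi^k}(1-|a|^2)^{1-N}\sum_{|P|=k}\big|\det[z_i^{\,p}]\big|^2\sum_{j=0}^k e_{k-j}(P)\,(\tfrac{d}{dx}x)^j[\cdots]_{x_0}$. Interchanging the sums, I would identify the $P$-sum with a Cauchy–Binet expansion: since $(s+x\tfrac{d}{dx})\tfrac{x^N-1}{x-1}\big|_{x=z_i\bar z_j}=\sum_{p=0}^{N-1}(s+p)z_i^{\,p}\bar z_j^{\,p}$, Cauchy–Binet gives $\det\big[(s+x\tfrac{d}{dx})\tfrac{x^N-1}{x-1}|_{x=z_i\bar z_j}\big]_{i,j=1}^k=\sum_{|P|=k}\big|\det[z_i^{\,p}]\big|^2\prod_{p\in P}(s+p)$, whose coefficient of $s^j$ is exactly $q_j(z_1,\dots,z_k)=\sum_{|P|=k}\big|\det[z_i^{\,p}]\big|^2 e_{k-j}(P)$ from (\ref{q}). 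Substituting reproduces (\ref{12}).

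I expect the main obstacle to be bookkeeping rather than anything conceptual: keeping the shuffle-sign combinatorics honest while the double permutation sum is reorganised over pairs $(P,\phi)$, and — more importantly — recognising that the raw subset sum $\sum_{|P|=k}\big|\det[z_i^{\,p}]\big|^2 e_{k-j}(P)$ is precisely the coefficient extracted from the compact determinant in (\ref{q}); without that recognition one is left with an opaque multiple sum rather than (\ref{12}). The Mellin-contour evaluation and the $k=0$ normalisation are routine but must be done carefully so that the prefactor emerges as $\pi^{-k}(1-|a|^2)^{1-N}$ exactly.
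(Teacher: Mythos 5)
The paper does not actually prove Proposition \ref{P5.1}; it states the result and attributes it to \cite{Fy01}, so there is no in-paper argument to compare against. Your reconstruction is therefore judged on its own terms, and it is essentially correct. The architecture is the natural one for this problem: expand $\prod_{j<l}|z_j-z_l|^2$ as a double permutation sum, pass to polar coordinates so the angular integrals over $z_{k+1},\dots,z_N$ force $\sigma$ and $\tau$ to agree on the integrated indices, handle the multiplicative constraint $\delta\bigl(\prod|z_l|^2-|a|^2\bigr)$ by a Mellin--Barnes representation in the radial variables, and then reassemble the subset sum into the compact determinant via Cauchy--Binet. I verified the crucial ingredients: the partial-fraction identity $\prod_{a'\in P^c\setminus\{a\}}(a'-a)=(-1)^a a!(N-1-a)!\big/\prod_{b\in P}(b-a)$; the extension of the residue sum to all of $\{0,\dots,N-1\}$ because the numerator vanishes on $P$; the generating-function identity $\sum_a\binom{N-1}{a}(-c)^a a^j=(t\,d/dt)^j(1+t)^{N-1}|_{t=-c}$; the operator identity $\bigl(\frac{d}{dx}x\bigr)^j\bigl[\frac1x f(-|a|^2/x)\bigr]=\frac{(-1)^j}{x}\bigl(t\frac{d}{dt}\bigr)^jf(t)|_{t=-|a|^2/x}$; the Cauchy--Binet expansion $\det\bigl[(s+x\frac{d}{dx})\frac{x^N-1}{x-1}|_{x=z_i\bar z_j}\bigr]_{i,j=1}^k=\sum_{|P|=k}\prod_{p\in P}(s+p)\,|\det[z_i^{\,p}]|^2$, whose coefficient of $s^j$ is exactly $q_j$ from (\ref{q}); and the normalisation $Z_N=N\pi^N$ (via the $k=0$ radial Mellin integral, which evaluates to $(1-|a|^2)^{N-1}/(N-1)!$). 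The constants then collapse to $\pi^{-k}$ exactly as you claim, with $\frac{N!}{(N-k)!}\cdot(N-k)!\cdot\frac{1}{(N-1)!}\cdot\frac{1}{Z_N}\cdot(2\pi/2)^{N-k}=\pi^{-k}$. The contour closure to the left for $c=|a|^2/\prod_{l\le k}|z_l|^2<1$ is also correctly identified as the source of the hypothesis $\prod_{l\le k}|z_l|^2\ge|a|^2$.

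Two small points of presentation rather than substance. First, your description of the operator identity via the substitution $x=1/y$ and ``$\frac{d}{dx}x\mapsto 1-y\frac{d}{dy}$'' is valid but indirect; the cleaner statement is that $\frac{d}{dx}x$ acts diagonally with eigenvalue $m+1$ on $x^m$, so with $t=-|a|^2/x$ (hence $t\,d/dt=-x\,d/dx$) one gets $\frac{d}{dx}(xg)=\frac1x(-t\,d/dt)f$ when $g(x)=\frac1x f(t)$, and the $(-1)^j$ in the residue sum is absorbed. Second, the formula is implicitly for $k\le N-1$: for $k=N$ there is no integration variable left to resolve the delta, and (\ref{12}) is a smooth function, so it cannot equal $N!\,p_N$. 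This restriction is consistent with the intended use but is worth flagging.
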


\begin{remark}
1.~Considering 
 the case $a=0$. Then the only nonzero term in 
(\ref{12}) is  $l=0$, implying
\begin{equation}\label{qA}
\rho_{(k)}(z_1,\dots,z_k) \Big |_{a=0} = {1 \over \pi^k} 
\det \Big [  {d \over dx} {x^N - 1 \over x - 1} \Big |_{x = z_i \bar{z}_j} \Big ]_{i,j=1}^k.
\end{equation}
On the other hand setting $a=0$ in the definition of $A$ and forming $UA$ shows the
resulting matrix has the first column of $A$ replaced by a column of zeros. Hence
there is one zero eigenvalue, with the remaining eigenvalues the same as those
of the $(N-1) \times (N-1)$ submatrix of $U$ obtained by deleting the first row and the
first column. For this ensemble the joint eigenvalue PDF was first derived in
 \cite{ZS99} to be proportional to $\prod_{1 \le j < k \le N-1} | z_j - z_k|^2$, supported
 on $|z_l| < 1$. Without any delta function constraint, this corresponds to a determinantal
 point process, and the $k$-point correlation (\ref{qA}) was obtained in  \cite{ZS99}.  \\
 2.~The density formula $k=1$ of (\ref{12}) is a special case of a formula for the eigenvalue
 density of $U \sqrt{G}$, $G = {\rm diag} \, (g_1,\dots, g_N)$, each $g_i \ge 0$, obtained
 in \cite{WF08}. \\
 3.~In analogy with Remark \ref{R4.3} point 4.~varying the parameter $a$ in $A$ from $1$ to $0$
 with $U$ a single sample
 gives rise to an eigenvalue process where all eigenvalues begin on the unit circle for
 $a=1$, and as $a$ varies to $0$ exactly one eigenvalue ends at $z=0$. However unlike the
 setting for the random matrices (\ref{AS4}), the remaining eigenvalues do not return
 to the unit circle, although in a qualitative sense most do remain close to the unit circle;
 see Figure \ref{F4.5} for an illustration.
  \end{remark}
  
 \begin{figure*}
\centering
\includegraphics[width=0.45\textwidth]{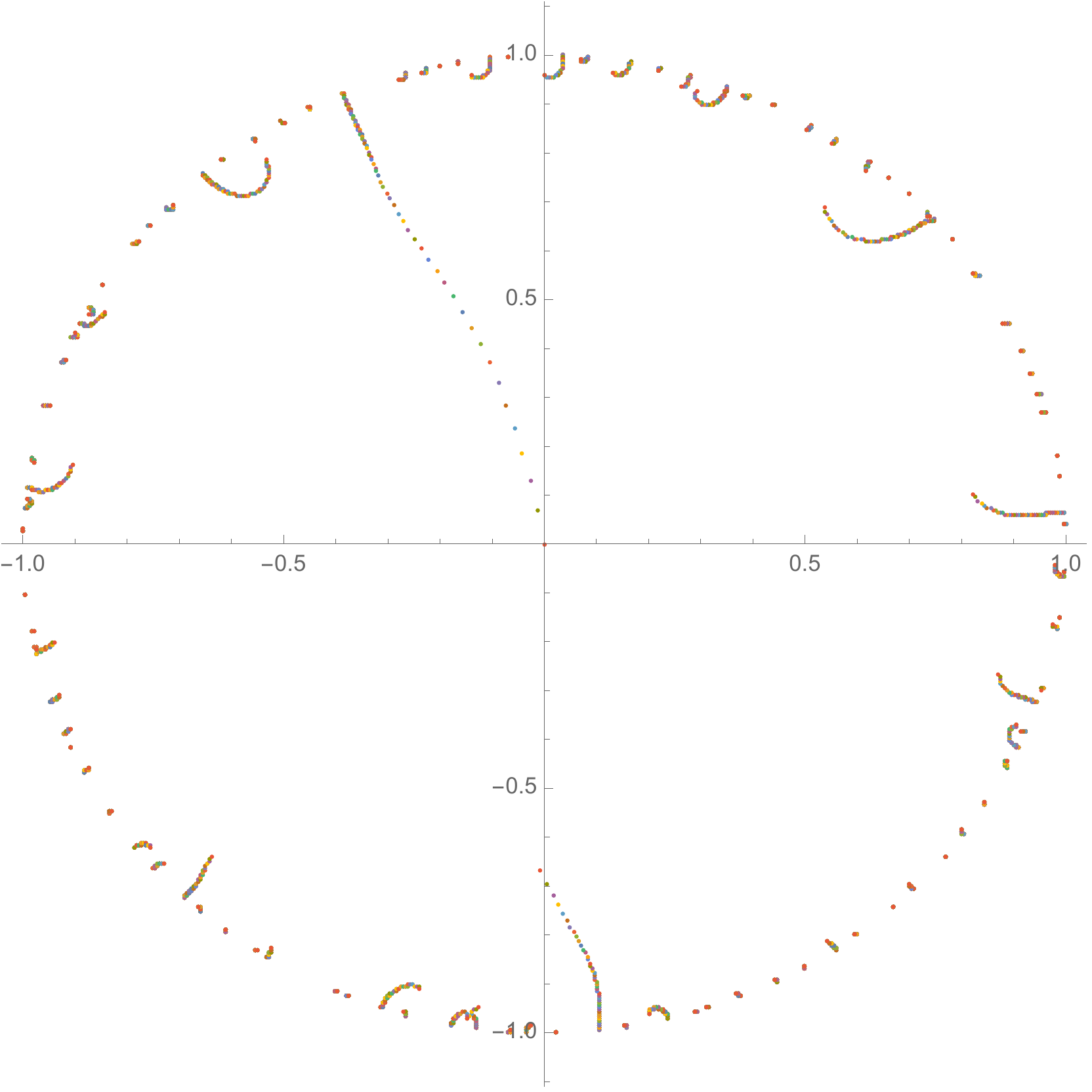}
\caption{Trajectories of the eigenvalues of the matrix  $UA$ with one realisation of $U$ as a $100 \times 100$ scaled  CUE matrix,
and $a$ varying from $1$ to $0$ in intervals of $0.01$. }
  \label{F4.5}. 
\end{figure*}

The finite $N$ result (\ref{12}) admits two distinct scaling limits. One is to expand the coordinates in the neighbourhood of the
boundary of the unit circle. This can be done by writing $z_j = (1 + (i x_j - y_j)/N + O(1/N^2))$. Now taking $N \to \infty$ reduces
(\ref{12}) to the form (\ref{B0}) for a determinantal point process \cite{Fy01}. In fact this form is precisely 
(\ref{4.1f}) as found in relation to the bulk scaling limit of (\ref{AS4}), with the identification $g = 2/(1-|a|^2) - 1$. The second scaling
limit scales the coupling $a$ but leaves the coordinates in the disk unchanged \cite{FI19}. Then the finite $N$ structure
(\ref{12}) is conserved, and thus the statistical state is not a determinantal point process.

\begin{proposition}\label{P2}
Consider (\ref{12}) with the substitution (\ref{q}). Scale the parameter $a$ to depend on $N$ according to
$a = 1/(\mu \sqrt{N})$, and define
\begin{equation}\label{Q}
Q_j(z_1,\dots,z_k) = [s^j] \det \Big [ \Big (s + x {d \over dx} \Big ) {1 \over 1 - x} \Big |_{x = z_i \bar{z}_l}
\Big ]_{i,l=1}^k
\end{equation}
(cf.~(\ref{q})).
In the limit $N \to \infty$ the general $k$-point correlation is specified by
\begin{equation}\label{R2a}
\rho_{(k)}(z_1,\dots,z_k) = { e^{1 / |\mu|^2}  \over \pi^k} 
\sum_{l=0}^k Q_l(z_1,\dots,z_k) \Big ( {d \over dx} x \Big )^l
\Big ( {1 \over x} e^{- 1/ ( |\mu|^2 x )} \Big ) \Big |_{x = \prod_{l=1}^k |z_l|^2}.
\end{equation}
restricted to $|z_j|<1$ ($j=1,\dots,k$).
\end{proposition}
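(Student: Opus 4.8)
The plan is to substitute the scaling $a=1/(\mu\sqrt N)$, so that $|a|^2=1/(|\mu|^2N)$, directly into the finite-$N$ correlation formula (\ref{12}) of Proposition \ref{P5.1} and to pass to the limit $N\to\infty$ in each of its three $N$-dependent ingredients separately: the scalar prefactor $(1-|a|^2)^{1-N}$; the function $h_N(x):=\tfrac1x\big(1-|a|^2/x\big)^{N-1}$ on which the operator $(\tfrac{d}{dx}x)^l$ acts; and the coefficient functions $q_l(z_1,\dots,z_k)$ of (\ref{q}), each of which is built from the truncated geometric series $\tfrac{x^N-1}{x-1}=\sum_{m=0}^{N-1}x^m$. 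Since $|a|^2\to0$, for any fixed configuration $z_1,\dots,z_k$ with $0<|z_j|<1$ the support condition $\prod_l|z_l|^2\ge|a|^2$ of Proposition \ref{P5.1} holds for all large $N$, so in the limit (\ref{R2a}) will hold throughout the polydisk $\{|z_j|<1\}$.

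For the prefactor one has the elementary limit $(1-1/(|\mu|^2N))^{1-N}\to e^{1/|\mu|^2}$, which reproduces the overall constant in (\ref{R2a}). For the weight $h_N$, write $h_N(x)=\tfrac1x\exp\!\big((N-1)\log(1-\tfrac1{|\mu|^2Nx})\big)$; on any compact $K$ in the open right half-plane $\{{\rm Re}\,x>0\}$ one has, uniformly for $x\in K$ and $N$ large, $(N-1)\log(1-\tfrac1{|\mu|^2Nx})=-\tfrac1{|\mu|^2x}+O(1/N)$, whence $h_N\to h_\infty$ uniformly on $K$ with $h_\infty(x)=\tfrac1x e^{-1/(|\mu|^2x)}$. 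Since each $h_N$ and $h_\infty$ is holomorphic on a fixed complex neighbourhood of $(0,\infty)$, the Weierstrass convergence theorem (uniform-on-compacta limits of holomorphic functions converge together with all derivatives) upgrades this to locally uniform convergence of all derivatives, so $(\tfrac{d}{dx}x)^l h_N\to(\tfrac{d}{dx}x)^l h_\infty$ and in particular the values at $x=\prod_l|z_l|^2\in(0,1)$ converge, for every $l=0,\dots,k$.

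For the coefficient functions, recall that $\sum_{m=0}^{N-1}x^m\to(1-x)^{-1}$ uniformly on $|x|\le r$ for each $r<1$, together with all term-by-term derivatives. Hence, with $g_N(x):=\tfrac{x^N-1}{x-1}$, each matrix entry $\big(s+x\tfrac{d}{dx}\big)g_N\big|_{x=z_i\bar z_l}=s\,g_N(z_i\bar z_l)+z_i\bar z_l\,g_N'(z_i\bar z_l)$ (well defined since $|z_i\bar z_l|\le|z_i||z_l|<1$) converges as $N\to\infty$ to $\big(s+x\tfrac{d}{dx}\big)(1-x)^{-1}$ evaluated at $x=z_i\bar z_l$, which is exactly the entry appearing in (\ref{Q}). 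The $k\times k$ determinant is a polynomial in $s$ of degree at most $k$ whose coefficients are finite polynomial expressions in these entries, so $q_l(z_1,\dots,z_k)\to Q_l(z_1,\dots,z_k)$ for $l=0,\dots,k$. As (\ref{12}) is a finite sum over $l$ of products of the three ingredients, combining the three convergences yields (\ref{R2a}).

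The point I expect to require the most care is the interchange of $N\to\infty$ with the differential operators $(\tfrac{d}{dx}x)^l$ and $s+x\tfrac{d}{dx}$; this is exactly what the appeals to the Weierstrass convergence theorem handle, the essential observation being that all functions in play extend holomorphically to a fixed complex neighbourhood of the relevant compacta --- a disk of radius $r<1$ about the origin for $g_N$, a neighbourhood of $(0,\infty)$ for $h_N$ --- on which the convergence is uniform. Everything else reduces to repeated use of $(1+c/N)^N\to e^c$.
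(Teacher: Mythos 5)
Your proof is correct, and it is essentially the only natural route: the paper itself states Proposition~\ref{P2} without argument, deferring to \cite{FI19}, and the argument there is precisely this termwise limit in (\ref{12}). You correctly identify the three $N$-dependent ingredients, handle the evaporating support condition $\prod_l|z_l|^2\ge|a|^2$, and justify the commutation of $N\to\infty$ with the operators $(\tfrac{d}{dx}x)^l$ and $s+x\tfrac{d}{dx}$ via uniform-on-compacta holomorphic convergence (Weierstrass), which is exactly the technical point that needs saying.
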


\begin{remark}
Setting $k=1$ in (\ref{12}) and simplifying shows
\begin{equation}\label{R1}
\rho_{(1)}(z) = {1 \over \pi} {1 \over (1 - |z|^2)^2} \exp \Big ( {1 \over  |\mu|^2} \Big ( 1- {1 \over |z|^2} \Big ) \Big )
\Big (1 +{1 \over  |\mu|^2  |z|^4} (1 - |z|^2) \Big ).
\end{equation}
This exhibits an essential singularity as $|z| \to 0^+$.
  \end{remark}

\subsection{Relationship to the limiting Kac polynomial}
In the theory of random polynomials the Kac polynomial refers to the $N$-th degree
polynomial $\sum_{n=0}^N c_n z^n$, where each coefficient $c_n$ is an independent
standard real Gaussian \cite{Ka43}. Subsequently the complex version, where the
vector of coefficients is specified to have a vector complex Gaussian distribution with
general covariance matrix, was considered by Hammersley \cite{Ha56}. Some years later
special properties of this complex version, in the Kac
setting where each coefficient is independent and identically distributed with mean zero,
and extended to an analytic function by taking $N \to \infty$,
where identified  \cite{PV03,Kr06}. 

First, in \cite{PV03} it was shown in this setting
the statistical state is a determinantal point process, with the general $k$-point correlation
supported on $|z_l| < 1$ and correlation kernel
\begin{equation}\label{qAc}
\rho_{(k)}(z_1,\dots,z_k) = {1 \over \pi^k} 
\det \Big [   {1 \over 1 - z_i \bar{z}_j } \Big ]_{i,j=1}^k.
\end{equation}
Comparison of (\ref{qAc}) with (\ref{qA}) shows that the latter, in the limit $N \to \infty$, coincides
with the former. In fact in \cite{Kr06} it was proved directly that the characteristic polynomial for the
eigenvalues of a random matrix from the CUE, with one row and one column deleted is for $N \to
\infty$ given by $\sum_{n=0}^\infty c_n z^n$, where each coefficient $c_n$ is an independent
standard complex Gaussian. There is a simple extension of this result in relation to the
characteristic polynomial of $UA$ as considered in the previous subsection,
with the parameter $A$ scaled as in Proposition \ref{P2} \cite{FI19}.

 \begin{proposition}\label{P4.7}
 Consider the random matrix $UA$ as specified in the previous subsection, and set $a = 1/ (\mu \sqrt{N})$.
 In the limit $N \to \infty$ the eigenvalues of $UA$ are given by the zeros of the random Laurent series 
 \begin{equation}\label{L1a}
 {1\over \mu} - \sum_{j=1}^\infty {c_j \over z^j}
 \end{equation} 
 in the variable $\lambda = 1/z$, $|\lambda| < 1$, 
 with each $c_j$ an independent standard
 complex Gaussian.
 \end{proposition}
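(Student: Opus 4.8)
The plan is to start from the characteristic equation for the eigenvalues of $UA$, which is the multiplicative analogue of the secular equation (\ref{AS2}) already derived in the excerpt. Writing $A = \mathrm{diag}\,(a,1,\dots,1) = \mathbb{I}_N - (1-a)\,\hat{\mathbf{e}}_N^{(1)}\hat{\mathbf{e}}_N^{(1)\,T}$, the equation $\det(z\mathbb{I}_N - UA) = 0$ can be manipulated using the determinant identity (\ref{1.8b}) exactly as in the derivation of (\ref{1.8c}). Since $U$ is Haar-distributed on the CUE, I would first diagonalise or otherwise exploit its invariance; the upshot, following the working in \cite{Fy01,FI19}, is that the eigenvalues $\lambda = 1/z$ with $|\lambda|<1$ are the zeros of a rational function whose coefficients are built from the entries of a single column (or row) of $U$ interacting with $a$. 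The key preliminary fact is that a fixed column of a Haar-distributed $N\times N$ unitary matrix is uniformly distributed on the unit sphere of $\mathbb{C}^N$, and rescaled appropriately it converges to a sequence of i.i.d.\ standard complex Gaussians.

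The second step is to identify the precise finite-$N$ random function whose zeros give the eigenvalues. After the rank $1$ reduction one obtains an expression of the schematic form $\tfrac{1}{a} - (\text{rational function of }z\text{ with random coefficients})$, and the natural variable in which this is a power/Laurent series is $\lambda = 1/z$. Collecting the random coefficients, each is (up to normalisation) a linear statistic in the entries of the relevant column of $U$; by the moment calculus for Haar measure on $U(N)$ these have the right covariance structure and, in the regime $a = 1/(\mu\sqrt{N})$, the $1/a = \mu\sqrt{N}$ scaling is exactly calibrated so that the constant term $1/\mu$ survives while the series coefficients converge to i.i.d.\ standard complex Gaussians $c_j$. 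This is where the scaling $a = 1/(\mu\sqrt{N})$ of Proposition \ref{P2} enters essentially, and it should be cross-checked that the induced scaling on the correlation functions reproduces (\ref{R2a}), which is reassuring but not logically needed.

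The third step is the passage to the limit $N\to\infty$: one must show that the finite-$N$ random polynomial (in $\lambda$), suitably truncated, converges --- say uniformly on compact subsets of $|\lambda|<1$, together with convergence of its coefficient vector --- to the random Laurent series $\tfrac{1}{\mu} - \sum_{j\ge 1} c_j/z^j$ written in $\lambda$, and that convergence of the analytic functions implies convergence of their zero sets (Hurwitz's theorem, using that the limiting series is a.s.\ not identically zero and has no zeros on the boundary circle). This is essentially the argument of Krishnapur \cite{Kr06} for the truncated-CUE case, and the extension to nonzero $a$ with the stated scaling is a perturbation of that argument.

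I expect the main obstacle to be the coefficient bookkeeping in the second step: correctly tracking how the $a$-dependence and the Haar-measure normalisation combine so that exactly the constant $1/\mu$ is isolated and the higher coefficients become i.i.d.\ standard complex Gaussians, rather than Gaussians with drifting variances or residual correlations. The analytic-limit/Hurwitz step is routine once the algebraic form is pinned down, and the i.i.d.-Gaussian limit for fixed coordinates of a high-dimensional Haar-random unitary column is standard; so the delicate point is really getting the finite-$N$ secular function into precisely the form $\tfrac{1}{\mu\sqrt{N}} = \sum_{j=1}^{N}(\text{coeff}_j)\lambda^j$ with the coefficients manifestly converging, and then reading off (\ref{L1a}) in the limit.
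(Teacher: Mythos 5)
Your proposal gets the overall architecture right --- reduce the characteristic equation of $UA$ by the rank-$1$ determinant identity (\ref{1.8b}), expand in $\lambda = 1/z$ valid for $|\lambda|<1$, then pass to the limit $N\to\infty$ --- and it correctly identifies the role of the scaling $a = 1/(\mu\sqrt N)$ in isolating the constant $1/\mu$. But the central probabilistic input is misidentified, and this is not a bookkeeping issue that would sort itself out; it is a genuine gap.

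You state that the random coefficients are ``(up to normalisation) a linear statistic in the entries of the relevant column of $U$,'' and that the key fact is that a fixed column of a Haar unitary is approximately i.i.d.\ Gaussian. Neither is correct. After the manipulation (\ref{W}) the coefficient of $\lambda^k$ is $(\hat{\mathbf e}_N^{(1)})^T (U^\dagger \mathbb I_N')^k U^\dagger \hat{\mathbf e}_N^{(1)}$, which is a degree-$(k+1)$ polynomial in the entries of $U^\dagger$ (for $k=1$ it is $\sum_{j\ge 2}(U^\dagger)_{1j}(U^\dagger)_{j1}$, not a linear statistic). The paper first shows via (\ref{Va1}) that, for $a = 1/(\mu\sqrt N)$ and $k$ in the range that matters, this coefficient agrees with the $(1,1)$ entry of the pure power $(U^\dagger)^{k+1}$ up to a $1 + O(k/N)$ factor. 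Only then does it invoke the result you need, which is Krishnapur's theorem (\ref{Va}): the sequence $\sqrt N\big((V)_{11},(V^2)_{11},(V^3)_{11},\dots\big)$ converges in distribution to an i.i.d.\ sequence of standard complex Gaussians. This joint, simultaneous-in-$k$ statement about nonlinear functions of $U$ is a nontrivial theorem of \cite{Kr06} and is precisely where the i.i.d.\ structure in (\ref{L1a}) comes from. The classical Gaussian approximation for a single column would at best furnish the $k=0$ marginal; it says nothing about the higher coefficients, let alone their joint independence. You do cite \cite{Kr06}, but for the Hurwitz/compact-convergence step rather than for this distributional identity, so the essential ingredient is absent from where it is needed. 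Closing the gap requires (i) recognising the coefficients as approximately $(1,1)$ entries of powers of $U^\dagger$ with explicit error control as in (\ref{Va1}), and (ii) invoking the power-sequence limit theorem (\ref{Va}) as the probabilistic backbone, rather than a fact about a single column of $U$.
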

 
 \begin{proof} (Sketch)
 The first step is to manipulate the characteristic polynomial for $UA$ to conclude that the condition
 for an eigenvalue $\lambda$ of $UA$ can be written
 \begin{align}\label{W}
0   & =    \det \Big ( \mathbb I_N - {\lambda \over a} ( \mathbb I_N - \lambda U^\dagger \mathbb I_N')^{-1} U^\dagger \hat{\mathbf{e}}_N^{(1)} (\hat{\mathbf{e}}_N^{(1)})^T \Big )  \nonumber \\
& = 1 - {\lambda \over a}    (\hat{\mathbf{e}}_N^{(1)})^T  (  \mathbb I_N -   \lambda U^\dagger \mathbb I_N')^{-1}   U^\dagger \hat{\mathbf{e}}_N^{(1)} \nonumber  \\
& =  1 - {\lambda \over a}  \sum_{k=0}^\infty \lambda^k  (\hat{\mathbf{e}}_N^{(1)})^T  (U^\dagger I_N')^k U^\dagger \hat{\mathbf{e}}_N^{(1)}.
\end{align}
Here $\mathbb I_N'$ denotes the identity matrix with the first diagonal replaced by $0$,  the second line is obtained by using the determinant identity
(\ref{1.8b}), and the final line is obtained by using the geometric series to expand the matrix inverse, which is valid for $|\lambda| < 1$.

 It was established in  \cite{Kr06} that for $V \in U(N)$ chosen with Haar measure, in the limit $N \to \infty$
  \begin{equation}\label{Va} 
  \sqrt{N} ( (\mathbf{e}_N^{(1)})^T V \mathbf{e}_N^{(1)}, (\mathbf{e}_N^{(1)})^T V^2 \mathbf{e}_N^{(1)},  (\mathbf{e}_N^{(1)})^T V^3 \mathbf{e}_N^{(1)}, \dots )
  \mathop{=}^{\rm d} (\alpha_1,\alpha_2,\alpha_3,\dots ),
  \end{equation}
  where each $\alpha_i$ is an independent standard complex Gaussian. The stated result now follows
 by noting that for $a = 1/ (\mu \sqrt{N})$,
 \begin{equation}\label{Va1} 
(\mathbf{e}_N^{(1)})^T (U^\dagger I_N')^k U^\dagger \mathbf{e}_N^{(1)} =
( \mathbf{e}_N^{(1)})^T (U^\dagger)^{k+1}  \mathbf{e}_N^{(1)} \Big ( 1 + {\rm O}(k/N) \Big ).
  \end{equation}
\end{proof}

\begin{remark}
1.~It has been noted in the Introduction that for a real matrix $\tilde{X}$ satisfying the circular law, the real symmetric
perturbation (\ref{1.4}) results in a single outlier for $\alpha > 1$. This conclusion holds true for $\tilde{X}$ complex
and (\ref{1.4}) replaced by $\tilde{X} + \alpha \hat{\mathbf v} \hat{\mathbf v}^\dagger$, for $ \hat{\mathbf v}$ a  complex
unit vector \cite{Ta12}. On the other hand, considering instead the particular non-Hermitian rank $1$ perturbation
 \begin{equation}\label{Xv}
 \tilde{X} + \alpha \hat{\mathbf 1}_N \hat{\mathbf v}^\dagger,
  \end{equation}
  with $\hat{\mathbf v}$ chosen randomly, the situation is very different --- an important structural feature here is
  that the rank $1$ term averages to zero, in contrast to the case of this term equalling $\alpha \hat{\mathbf v} \hat{\mathbf v}^\dagger$.
   Set $\alpha = \mu/\sqrt{N}$ and take the limit $N \to \infty$.
  It was proved in \cite{Ta12}  that the eigenvalues of (\ref{Xv}) are given by the zeros of (\ref{L1a})
  with respect to the variable $z$, $|z| > 1$. Thus according to Proposition \ref{P4.7} and recalling
  that $\lambda = 1/z$, this characterisation is identical to that for the eigenvalues of the scaled random matrix $UA$.
  Note from steps analogous to the derivation of (\ref{W}) that this would follow from
  the characteristic equation for (\ref{Xv}) if it could be established that (\ref{Va}) holds true with each $(\mathbf{e}_N^{(1)})^T V^j \mathbf{e}_N^{(1)}$
  replaced by $(\mathbf{e}_N^{(1)})^T \tilde{X}^j \mathbf{v}$. This is precisely what is established in \cite{Ta12}. \\
  2.~Consider the random matrix $RA$ with $R \in O(N)$ chosen with Haar measure and $A = {\rm diag} \, (a,1,\dots, 1)$, $|a| < 1$.
  The proof of Proposition \ref{P4.7} can be modified to lead to the conclusion that with $a = 1/(\mu \sqrt{N})$, the eigenvalues are
  given by the zeros of (\ref{L1a}) in the variable $\lambda = 1/z$, $|\lambda| < 1$, with each $c_j$ an independent standard
  real Gaussian. On the other hand, consider the random matrix $\tilde{X} + \alpha \hat{\mathbf{1}}_N \hat{\mathbf v}^T$, where
  $\tilde{X}$ is a real matrix obeying the circular law and $\hat{\mathbf v}$ is a random real unit vector. For $\alpha = \mu/\sqrt{N}$ and
  $N \to \infty$ it is proved in \cite{Ta12} that this same limiting Laurent polynomial --- which is equivalent to the original Kac random
  polynomial as defined at the beginning of this subsection --- specifies the eigenvalue distribution in the region $|z| > 1$. \\
  3.~The eigenvalues of the case $a=0$ of the random matrix $RA$ --- this corresponding  to deleting one row and column of $R$ ---
  are known to form a Pfaffian point process \cite{KSZ09,Fo10a}. This point process can be considered to consist of two species, the
  real eigenvalues, and the complex eigenvalues. Statistics associated with the number of real eigenvalues
  \cite{FK18} have been shown recently to relate to the persistence exponent for two-dimensional diffusion
  with random initial conditions \cite{PS18}.
    \end{remark}
    
    \subsection{Left and right eigenvector statistics}
    From matrix theory we know that eigenvectors of a matrix $X$ form an orthonormal set iff $[X,X^\dagger]=0$.
    If $X$ is random, this will not be the case unless $X$ is Hermitian. Instead, for non-normal matrices orthonormality only
    shows itself when considering both the eigenvectors $\{ | R_j \rangle \}_{j=1}^N$ of $X$ and the eigenvectors
  $\{ | L_j \rangle \}_{j=1}^N$  of $X^\dagger$. These are referred to as the right (R) and left (L) eigenvectors respectively,
  with the distinction conveniently indicated symbolically in the bra-ket notation. Thus the left and right eigenvectors
  can be chosen so that $\langle L_i | R_j \rangle = \delta_{i,j}$, which in words says that they form a bi-orthogonal
  family. Note that with respect to this condition,
   $ |R_j \rangle $ can be  multiplied by the scalar $\alpha$ provided $|L_j \rangle $ is multiplied by
  $1/\bar{\alpha}$. Invariant under such scaling is the so-called overlaps $\mathcal O_{mn} = \langle L_m | L_n \rangle 
  \langle R_m | R_n \rangle$, with the diagonal overlaps $ \mathcal O_{nn} = || L_n||^2 || R_n ||^2$ of particular importance
  for their role as squared eigenvalue condition numbers; see \cite{Du21} and references therein.
  The overlaps $\mathcal O_{mn}$ were studied in the context of the additive rank $1$ anti-Hermitian GUE perturbation 
  (\ref{AS1}) by Fyodorov and Mehlig \cite{FM02}. 
  Very recent work of  Fyodorov and Osman \cite{FO21a,FO21b} has advanced knowledge of the diagonal overlaps
  $ \mathcal O_{nn} $ from the evaluation of the their limiting mean in the sense of the quantity $O(z)$ defined in
  (\ref{O12}) below, to the evaluation of their limiting distribution $\mathcal P(t;z)$ as specified by (\ref{4.33a}).
  Moreover, this quantity was also calculated for the GOE analogue of  (\ref{AS1}).
  In another recent development,  the overlaps  $ \mathcal O_{nm},  $, together with certain generalisations referred to
  as $q$-overlaps, were studied for the multiplicative sub-unitary rank $1$ CUE perturbation of subsection \ref{S4.2} in the case
  $a=0$ by Dubach \cite{Du21}.
  
  Here we will focus attention on results from \cite{FM02}. First, following \cite{FS97}, consider the scalar
   \begin{equation}\label{Sc}
   s = s(E) := {1 + i K \over 1 - i K} =1 + 2 {i K \over 1 - i K}, \qquad K = \alpha \mathbf v^\dagger (E \mathbb I_N - A)^{-1} \mathbf v.
   \end{equation}
   Here  $E$ is an in general complex parameter, $\alpha$ is a positive real scalar, $A$ is an $N \times N$ Hermitian matrix and
   $\mathbf v$ is an $N \times 1$ complex column vector. Hence, with $\tilde{K}$ the $N \times N$ matrix 
   $\tilde{K} = \alpha (E \mathbb I_N - A)^{-1} \mathbf v  \mathbf v^\dagger$,
   we can expand
   \begin{multline}\label{Sc1}
   {iK \over 1 - i K} =  \sum_{p=0}^\infty ( i K)^{p+1} =  i \alpha \mathbf v^\dagger \Big ( \sum_{p=0}^\infty ( i \tilde{K})^p \Big )
   (E \mathbb I_N - A)^{-1} \mathbf v \\
   =  i \alpha \mathbf v^\dagger (\mathbb I_N - i \tilde{K})^{-1}  (E \mathbb I_N - A)^{-1} \mathbf v
  =
   i  \alpha \mathbf v^\dagger ( E \mathbb I_N - A - i \alpha \mathbf v  \mathbf v^\dagger )^{-1} \mathbf v .
   \end{multline} 
   It follows from (\ref{Sc1}) that the poles of (\ref{Sc}) as a function of $E$ occurs at the eigenvalues $\{z_j \}$ of
   $\mathcal H = A + i \alpha \mathbf v \mathbf v^\dagger$, which is the matrix
   (\ref{AS1}) with $\hat{ \mathbf v}$ replaced by $t{ \mathbf v}$. Moreover, $s(E)$ has unit modulus for $E$ real and so
   has the rational function form
    \begin{equation}\label{Sc2} 
    s(E) = \prod_{j=1}^N { E - \bar{z}_j \over E -  {z}_j }.
    \end{equation}
    Making use of this allows a formula for the overlaps  $\mathcal O_{mn}$ of $\mathcal H$ to be computed \cite{FM02,FS03}.
    
    \begin{proposition}
    Consider the non-Hermitian matrix $\mathcal H$ defined in the above paragraph. Choose the left and right
    eigenvectors to form an orthonormal set. The overlaps of these eigenvectors are given in terms of the
    eigenvalues by
   \begin{equation}\label{Sc3}   
  \mathcal O_{mn} =  { (z_n - \bar{z}_n) (z_m - \bar{z}_m) \over (z_n - \bar{z}_m)^2} \prod_{k=1 \atop \ne n}^N {z_n - \bar{z}_k \over
  z_n - {z}_k} \prod_{k=1 \atop \ne m}^N {\bar{z}_m -  {z}_k \over
  \bar{z}_n - \bar{z}_k }.
 \end{equation}
 \end{proposition}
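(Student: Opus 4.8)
The plan is to exploit the rank-$1$ structure of $\mathcal H = A + i\alpha\mathbf v\mathbf v^\dagger$ to write each left and right eigenvector as a resolvent of $A$ applied to $\mathbf v$, and then to reduce every inner product entering $\mathcal O_{mn}$ to values of the scalar $K(E) = \alpha\mathbf v^\dagger(E\mathbb I_N - A)^{-1}\mathbf v$ at the eigenvalues $\{z_j\}$ and their conjugates, using the rational-function identities already recorded in (\ref{Sc})--(\ref{Sc2}).

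First I would observe that $\mathcal H|R_n\rangle = z_n|R_n\rangle$ rearranges to $(z_n\mathbb I_N - A)|R_n\rangle = i\alpha(\mathbf v^\dagger|R_n\rangle)\mathbf v$, so $|R_n\rangle = c_n(z_n\mathbb I_N - A)^{-1}\mathbf v$ for a scalar $c_n$; likewise $\mathcal H^\dagger|L_n\rangle = \bar z_n|L_n\rangle$ gives $|L_n\rangle = d_n(\bar z_n\mathbb I_N - A)^{-1}\mathbf v$, and, $A$ being Hermitian, $\langle R_m| = \bar c_m\,\mathbf v^\dagger(\bar z_m\mathbb I_N - A)^{-1}$ and $\langle L_m| = \bar d_m\,\mathbf v^\dagger(z_m\mathbb I_N - A)^{-1}$. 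Substituting these into the forms $\langle R_m|R_n\rangle$, $\langle L_m|L_n\rangle$ and $\langle L_m|R_n\rangle$ and using the partial-fraction identity
\[
(w_1\mathbb I_N - A)^{-1}(w_2\mathbb I_N - A)^{-1} = \frac{1}{w_2 - w_1}\Big((w_1\mathbb I_N - A)^{-1} - (w_2\mathbb I_N - A)^{-1}\Big)
\]
reduces each one to a difference of two values of $K$ over a difference of two of the $z_j,\bar z_j$. The second ingredient is the evaluation of $K$ at these points: by (\ref{Sc}) and (\ref{Sc1}) the poles of $s(E)$ are the $z_j$, and by (\ref{Sc2}) its zeros are the $\bar z_j$; since $z_j$ and $\bar z_j$ lie off the real axis they are not eigenvalues of $A$, so $K$ is finite there and the equations $1 - iK(z_j) = 0$, $1 + iK(\bar z_j) = 0$ give $K(z_j) = -i$ and $K(\bar z_j) = i$. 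This collapses, for instance, $\langle R_m|R_n\rangle$ to $\tfrac{2i}{\alpha}\,\bar c_m c_n/(z_n - \bar z_m)$, with analogous closed forms for the other two.

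Next I would pin down the constants $c_n,d_n$ forced by the bi-orthonormality $\langle L_m|R_n\rangle = \delta_{mn}$. The cleanest route is the (generically valid) bi-orthogonal spectral resolution $(E\mathbb I_N - \mathcal H)^{-1} = \sum_n|R_n\rangle\langle L_n|/(E - z_n)$, which with (\ref{Sc1}) gives $s(E) = 1 + 2i\alpha\sum_n(\mathbf v^\dagger|R_n\rangle)(\langle L_n|\mathbf v\rangle)/(E - z_n)$; matching the residue at $E = z_n$ against that of the product form (\ref{Sc2}) yields
\[
2i\alpha\,(\mathbf v^\dagger|R_n\rangle)(\langle L_n|\mathbf v\rangle) = (z_n - \bar z_n)\prod_{k\ne n}\frac{z_n - \bar z_k}{z_n - z_k}.
\]
Since $\mathbf v^\dagger|R_n\rangle = c_nK(z_n)/\alpha = -ic_n/\alpha$ and $\langle L_n|\mathbf v\rangle = \bar d_nK(z_n)/\alpha = -i\bar d_n/\alpha$, this fixes the product $c_n\bar d_n$ up to the overall rescaling freedom; equivalently one can use $\langle L_n|R_n\rangle = 1$ together with a residue computation of $s$ at $z_n$ that evaluates $K'(z_n)$.

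Finally I would assemble $\mathcal O_{mn}$ from the closed forms above (with bras and kets ordered, as in the definition of $\mathcal O_{mn}$, so that the result is invariant under rescaling): the factors of $\alpha$ cancel, the normalisation $c_n\bar d_n$ and its conjugate at index $m$ combine, and stripping the $k=n$ and $k=m$ factors from the two resulting products produces the prefactor $(z_n - \bar z_n)(z_m - \bar z_m)/(z_n - \bar z_m)^2$, reproducing (\ref{Sc3}). I expect the main obstacle to be purely the bookkeeping of the complex conjugates of the normalisation constants — which must be kept straight so that the answer is genuinely invariant under $|R_j\rangle\mapsto\lambda_j|R_j\rangle$, $|L_j\rangle\mapsto\bar\lambda_j^{-1}|L_j\rangle$, the requirement that in fact dictates the ordering of bras and kets — together with the (automatic) observation that $z_n - \bar z_m$ never vanishes, all $z_j$ lying in the open upper half-plane (as recalled in \S\ref{S4.1}), which legitimises every partial-fraction step.
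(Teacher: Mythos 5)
Your proof is correct and arrives at the right formula, but it routes through the eigenvectors more explicitly than the paper does. The paper never introduces the resolvent parametrisations $|R_n\rangle=c_n(z_n\mathbb I_N-A)^{-1}\mathbf v$, $|L_n\rangle=d_n(\bar z_n\mathbb I_N-A)^{-1}\mathbf v$ nor the evaluations $K(z_j)=-i$, $K(\bar z_j)=i$; instead it writes $\mathcal O_{mn}=(P^\dagger P)_{mn}(P^{-1}(P^{-1})^\dagger)_{nm}$ for $P$ the eigenvector matrix and exploits the rank-$1$ structure via the single matrix identity $\mathcal H-\mathcal H^\dagger=2i\alpha\mathbf v\mathbf v^\dagger$, which gives at once $(z_n-\bar z_m)(P^\dagger P)_{mn}=2i\alpha(P^\dagger\mathbf v\mathbf v^\dagger P)_{mn}$ and the analogous relation for $P^{-1}$; the scalars $(\mathbf v^\dagger P)_n(P^{-1}\mathbf v)_n$ are then read off, exactly as in your argument, from the residues of $s(E)=1+2i\alpha\,\mathbf v^\dagger P(E\mathbb I_N-Z)^{-1}P^{-1}\mathbf v$ and of its complex conjugate, matched against the product form (\ref{Sc2}). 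So the two arguments pivot on the same residue computation; what differs is the mechanism by which $\langle R_m|R_n\rangle$ and $\langle L_n|L_m\rangle$ are reduced to residue data --- your partial-fraction step on $(E\mathbb I_N-A)^{-1}$ together with $K(z_j)=-i$, $K(\bar z_j)=i$ versus the paper's one-line use of $\mathcal H-\mathcal H^\dagger$. Your route has the virtue of making the bra-ket ordering transparent: the invariant quantity, and the one both derivations actually compute, is $\langle L_n|L_m\rangle\langle R_m|R_n\rangle$, not the $\langle L_m|L_n\rangle\langle R_m|R_n\rangle$ the paper's prose states. The paper's route is a shade more economical. Two further small points: (i) as printed, the final denominator in (\ref{Sc3}) reads $\bar z_n-\bar z_k$, whereas your derivation yields $\bar z_m-\bar z_k$; the latter must be correct, since for $m\neq n$ the printed version contains a vanishing factor at $k=n$ --- a typo in the statement; (ii) you are right that legitimacy of every partial-fraction step requires $z_j\neq\bar z_k$ and $z_j\notin\mathbb R$, which is supplied by the remark in \S\ref{S4.1} that all $z_j$ lie in the open upper half-plane.
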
  
    
 \begin{proof}
 Let the matrix of eigenvectors of $\mathcal H$ be denoted $P$, so that $\mathcal H = P Z P^{-1}$, where $Z$
 is the diagonal matrix of eigenvalues. The overlaps can be expressed in terms of $V$ according to
  $$
  \mathcal O_{mn} = (P^\dagger P)_{mn} (P^{-1}(P^{-1})^\dagger)_{nm},
 $$
 where on the RHS the subscripts indicate the positions in the corresponding  matrix.
 On the other hand, we observe
  \begin{align*}
  ( P^\dagger (\mathcal H - \mathcal H^\dagger) P )_{mn} & = (z_n - \bar{z}_m) (P^\dagger P)_{mn}  \\
  ( P^{-1} ( \mathcal H^\dagger  - \mathcal H ) (P^{-1})^\dagger)_{nm} & = (\bar{z}_m - z_n) ( P^{-1} (P^{-1})^\dagger )_{nm}.
  \end{align*}
  Hence 
  \begin{equation}\label{Sc6}  
  \mathcal O_{mn} = {4 \alpha^2 \over (z_n - \bar{z}_m) ( \bar{z}_m - z_n)}   (P^\dagger \mathbf v \mathbf v^\dagger P)_{mn} (P^{-1}  \mathbf v \mathbf v^\dagger (P^{-1})^\dagger)_{mn}.
  \end{equation} 
  
  It follows from (\ref{Sc}) and (\ref{Sc1}) that
  $$
  s(E) = 1 + 2 i \alpha \mathbf v^\dagger P (E \mathbb I_N - Z)^{-1} P^{-1} \mathbf v.
  $$
  Consider now the leading term for $E \to z_n$, which effectively replaces $(E \mathbb I_N - z)^{-1}$ by $(\mathbf e_n \mathbf e_n^\dagger) (E - z_n)^{-1}$, where $\mathbf e_n$ is the
  $n$-th standard basis vector in $\mathbb R^N$. We then have
  $$
  s(E) \sim {2 i \alpha \over E - z_n} (\mathbf v^\dagger P)_n (P^{-1} \mathbf v)_n.
  $$
  Taking the complex conjugate transpose of both sides shows that for $z \to z_m$
  $$
  \overline{s(E)} \sim  - {2 i \alpha \over \bar{E} - \bar{z}_m}    (P^{\dagger} \mathbf v)_m  
  (P^{\dagger} \mathbf v)_m.
  $$
  Multiplying together these latter two equations shows that for $E_1 \to z_n$ and $E_2 \to z_m$ we have
  $$
  s(E_1)  \overline{s(E_2)} \sim {4 \alpha^2 \over (E_1 - z_n) (\bar{E}_2 - \bar{z}_m)} (P^\dagger \mathbf v \mathbf v^\dagger P)_{nm}
  (P^{-1} \mathbf v  \mathbf v^\dagger (P^{-1})^\dagger)_{nm}.
  $$
  Making use of (\ref{Sc2}) allows this asymptotic equality to be turned into an identity, with the LHS expressed as a product over the eigenvalues.
  Substituting the resulting formula in (\ref{Sc6}) gives (\ref{Sc3}).
   \end{proof} 
   
   Associated with the overlaps $ \mathcal O_{mn} $ are ensemble averages
   \begin{equation}\label{O12}
   O(z) = \bigg \langle {1 \over N} \sum_n \mathcal O_{nn} \delta (z - z_n)   \bigg \rangle, \quad
   O(z,z') = \bigg \langle {1 \over N} \sum_{m \ne n}  \mathcal O_{mn} \delta (z - z_m)  \delta (z - z_n)   \bigg  \rangle.
 \end{equation} 
 For the class of non-Hermitian rank 1  perturbations (\ref{AS4}), with $N$ fixed these are given by substituting (\ref{Sc3}) for $ \mathcal O_{nn},  \mathcal O_{mn}$ and
 integrating against the functional form (\ref{4.1+}) for the eigenvalue PDF. With the scalings as specified in the paragraph above
 Proposition \ref{P4.2}, the large $N$ forms of both the averages in (\ref{O12}) were computed by Fyodorov and Mehlig \cite{FM02}.
 With $Y = {\rm Im} \, z$ we record the expression for the limiting form of $ O(z) $,
   \begin{equation}\label{O13} 
   O(Y) = e^{-4 g Y} {d \over d Y} \bigg ( e^{2 g Y} {\sinh 2 Y \over 2 Y} \bigg );
 \end{equation}    
cf.~(\ref{4.1h}) and note in particular the normalisation $O(Y)|_{Y=0} = \rho_{(1)}(Y)|_{Y=0}$.
Although we don't present an example, analogous to Figure \ref{F4.3} this functional form for particular
parameters can be compared against numerical simulations; see \cite[Fig.~1]{FM02}.  Thus from 
(\ref{Sc3}) we have
$$
 \mathcal O_{nn} = \prod_{k=1 \atop k \ne n}^N \bigg | {z_n - \bar{z}_k \over z_n - z_k} \bigg |^2,
 $$
which when calculated at the eigenvalues used to generate the histrogram in   Figure \ref{F4.3}
and averaged within each bin gives rise to a scale factor which is to multiply the existing heights.

Generalising $O(z)$ in (\ref{O12}) is the distribution function
   \begin{equation}\label{4.33a} 
   \mathcal P(t;z) :=  \bigg \langle {1 \over N} \sum_n  \delta ( \mathcal O_{nn} - 1 - t )\delta (z - z_n)   \bigg \rangle.
  \end{equation}    
  As commented in the introductory paragraph to this subsection, the scaled limiting form of this 
  quantity, $\mathcal P_Y^{(\beta)}(t)$ say, has recently been calculated for both the ensemble   (\ref{AS1}) (the
  case $\beta = 2$) and its GOE
  analogue (the case $\beta = 1$) \cite{FO21a}. In particular, it was shown
   \begin{equation}\label{4.33b}  
  \mathcal P_Y^{(2)}(t)  = {16 \over t^3} e^{-2gY} \mathbb L_2  e^{-2gY(1 + 2/t)} I_0 \Big ( {4 Y \over t}
  \sqrt{(g^2 - 1) (1 + t)} \Big ),
  \end{equation} 
  where $I_0(z)$ is a modified Bessel function and
  $  \mathbb L_2$ is the differential operator acting on smooth functions $f(y)$ according to
   \begin{equation}\label{4.33c}  
  \mathbb L_2 f(Y) = \bigg ( 1 + \Big (    {\sinh 2 Y \over 2 Y} \Big )^2 + {1 \over 2 Y}
  \Big ( 1 -  {\sinh 4 Y \over 4 Y} \Big ) {d \over d Y} +
  {1 \over 4} \Big (  \Big (    {\sinh 2 Y \over 2 Y} \Big )^2 - 1 \Big ) {d^2 \over d Y^2} \bigg ) Y^2 f(Y).
    \end{equation} 
    As emphasised in  \cite{FO21a}, a noteworthy feature of (\ref{4.33b}) is heavy tail decay, specifically like
    $1/t^3$ as $t \to \infty$. This  implies that all the moments $\langle  \mathcal O_{nn}^k \rangle$ diverge for $k \ge 2$,
   as is known for the analogous quantity in the case of Ginibre type ensembles
   \cite{BD20,Fy18,Du21a}. Moreover computing the moment for $k=1$ reclaims (\ref{O13}).

 \subsection*{Acknowledgements}
	This research is part of the program of study supported
	by the Australian Research Council Centre of Excellence ACEMS
	and the Discovery Project grant DP210102887.
	Helpful feedback on the first draft of this work by
	Y.~Fyodorov and J.~Ipsen is most appreciated.

\nopagebreak

\providecommand{\bysame}{\leavevmode\hbox to3em{\hrulefill}\thinspace}
\providecommand{\MR}{\relax\ifhmode\unskip\space\fi MR }
\providecommand{\MRhref}[2]{%
  \href{http://www.ams.org/mathscinet-getitem?mr=#1}{#2}
}
\providecommand{\href}[2]{#2}

\end{document}